\journal{Theoretical Computer Science}
\begin{document}

\begin{frontmatter}

\title{Expected Reachability-Time Games}

\author[add1]{Vojt\v{e}ch Forejt} 
\author[add1]{Marta Kwiatkowska} 
\author[add3]{Gethin Norman} 
\author[add4]{Ashutosh Trivedi} 

\address[add1]{Department of Computer Science, University of Oxford, UK.}
\address[add3]{School of Computing Science, University of Glasgow, UK.} 
\address[add4]{Department of Computer Science and Engineering, IIT Bombay, India.} 

\begin{abstract}
Probabilistic timed automata are a suitable formalism to model systems with real-time, nondeterministic and probabilistic behaviour. We study two-player zero-sum games on such automata where the objective of the game is specified as the expected time to reach a target. The two players---called player Min and player Max---compete by proposing timed moves simultaneously and the move with a shorter delay is performed. The first player attempts to minimise the given objective while the second tries to maximise the objective. We observe that these games are not determined, and study decision problems related to computing the upper and lower values, showing that the problems are decidable and lie in the complexity class NEXPTIME $\cap$ co-NEXPTIME.
\end{abstract}

\begin{keyword}
  Probabilistic Timed Automata
  \sep Two-Player Games
  \sep Competitive Optimisation
  \sep Controller Synthesis
\end{keyword}

\end{frontmatter}

\section{Introduction}

\noindent
Two-player zero-sum games on finite automata, as a mechanism for supervisory
controller synthesis of discrete event systems, were introduced by Ramadge and
Wonham~\cite{RW89}. 
In this setting the two players---called $\mMin$ and $\mMax$---represent the
\emph{controller} and the \emph{environment}, and controller synthesis
corresponds to finding a winning (or optimal) strategy of the controller for
some given performance objective. 
Timed automata~\cite{AD94} extend finite automata by providing a
mechanism to model real-time behaviour,
while priced timed automata are timed automata with (time-dependent) prices attached to the locations of the automata.
If the game structure or objectives are dependent on time or price, e.g.\ when the objective corresponds
to completing a given set of tasks within some deadline or within some cost, then games on timed
automata are a well-established approach for controller synthesis, see e.g.\
\cite{AM99,ABM04,BCFL04,BC+07,CJLRR09}.

In this paper we extend the above approach to a setting that is quantitative
in terms of both timed and probabilistic behaviour.  
Probabilistic behaviour is important in modelling, e.g., faulty or unreliable
components, the random coin flips of distributed communication and security
protocols, and performance characteristics.  
We consider an extension of probabilistic time automata (PTA)
\cite{KNSS02,Jen96,Bea03}, a model for real-time systems exhibiting
nondeterministic and probabilistic behaviour.

In our model, called probabilistic timed game arena (PTGA), a token is placed on a configuration of a PTA 
and a play of the game corresponds to both players proposing a timed move of the
PTA, i.e.\ a time delay and action under their control (we assume each action of
the PTA is under the control of exactly one of the players).  
Once the players have made their choices, the timed move with the shorter
delay is performed and the token is moved according to the probabilistic transition
function of the PTA.    
Intuitively, players $\mMin$ and $\mMax$ represent two different forms of non-determinism,
called \emph{angelic} and \emph{demonic}.
To prevent the introduction of a third form, we assume the move of $\mMax$ (the
environment) is taken if the delays are equal. 
The converse can be used without changing the presented results.

Players $\mMin$ and $\mMax$ choose their moves in order to minimise and maximise,
respectively, the objective function.
The \emph{upper value} of a game is the
minimum expected time that $\mMin$ can ensure, while
the \emph{lower value} of a game is the maximum expected value
that $\mMax$ can ensure.  A game is \emph{determined} if the lower and
upper values are 
equal, and in this case the \emph{optimal value} of the game exists and equals
the upper and lower values.

The objectives frequently studied include reachability, which
asks for certain locations to be eventually reached, safety, which asks for a given target set
to be avoided, or more complex properties, expressed using a formula of a linear temporal logic.
The objective function is then an indicator function saying whether the property is satisfied on
a play, and the expected value then corresponds to the \emph{probability} of the property being
true.
In our paper we are interested in a more complex setting and study \emph{reachability-time} time objectives, which
express the {\em expected time} to reach a given target set. These objectives have many practical applications, e.g., in job-shop scheduling,
where machines can be faulty or have variable execution time, and both routing
and task graph scheduling problems. 
For real-life examples relevant to our setting, see
e.g.\ \cite{AH+09,CJLRR09}. The reachability-time objectives are a special case of {\em weight} or
{\em price} objectives in which different numbers are assigned to locations,
and the value of the objective function
depends on the respective numbers and the time spent in the locations; in our setting, the numbers
are fixed to be $1$ and the objective function simply sums the times spent in for each location.
Computing properties related to price functions often leads to undecidability, even in non-probabilistic setting~\cite{BBR05,BBM06}. Studying simpler properties is thus
motivated by the desire to obtain decidable properties while still being able to study sufficiently
complex class of properties.

\subsection{Contribution}
\noindent
We demonstrate the decidability of the problem of whether the upper (lower, or the optimal
 when it exists) value of a game with reachability-time objectives is at most a given bound. Our proofs
immediately yield a NEXPTIME $\cap$ co-NEXPTIME complexity bound. To our best knowledge,
this is the first decidability result for stochastic games on timed automata in which the
objective concerns a random variable that takes non-binary values.

Our approach is based on extending the boundary region graph construction for timed
automata~\cite{JT07} to PTGAs and demonstrating that the reachability-time problem can be
reduced to the same problem on the boundary region graph. 
In particular, our proof aims to show that the limit of the step-bounded
value functions in the timed automata and boundary
region graph also coincide.

Generic results exist that allow one to prove that step-bounded values converge
to the step-unbounded value, but to the best of our knowledge none are readily
applicable in our setting where the state space is uncountable and little is known
a priori about the value functions. For example, Banach fixpoint theorem requires the value iteration function (that takes a $n$-step value function and returns
the $n+1$-step value function) to be a contraction on an underlying metric space,
and it appears difficult to devise the metric space so that the contraction
property is easily obtained.
Another possible proof direction is Kleene fixpoint theorem, which requires
Scott-continuity on the value functions, which again is a property that
is difficult to establish in our setting.
We are able to partly rely on the Knaster-Tarski fixpoint theorem which characterises the set
of fixpoints, but it is not strong enough to prove the convergence itself, for reasons
similar to the ones above.
Several other theorems such as Brouwer fixpoint theorem or Kakutani fixpoint theorem
are generally not suitable for proving properties that we require in
turn-based stochastic games. 

Hence, to prove that the limit of the step-bounded value functions
is the desired value, we need to take a tailor-made approach.
We first inductively show that, when the number
of steps is bounded, then the value functions in timed automata and boundary
region graph coincide and are non-expansive within a region. Here we make use of
\emph{quasi-simple functions} which generalise simple functions, previously
used by Asarin and Maler in the study of games over non-probabilistic timed automata \cite{AM99}.
Then, using the non-expansiveness property, we show that the limit of the step-bounded
value functions in the timed automata and boundary
region graph also coincide. In this part we use Knaster-Tarski fixpoint theorem.

The definition of quasi-simple functions is a central component of our proof, as it is
strong enough to enable us to utilise an approach used in proofs of fixpoint theorems, but on the other hand
general enough to capture the values of reachability-time objectives. We believe that
it can serve as a step from simple functions towards functions describing even more complex
but still decidable objectives.

\subsection{Related Work}
\noindent
Hoffman and Wong-Toi~\cite{HW92} were the first to define and solve the optimal
controller synthesis problem for timed automata. 
For a detailed introduction to the topic of qualitative games on timed
automata, see e.g.\ \cite{AMP95}.  
Asarin and Maler~\cite{AM99} initiated the study of quantitative games on timed
automata by providing a symbolic algorithm to solve reachability-time objectives. 
The works of \cite{BHPR07} and \cite{JT07} show that the decision problem for such games over timed automata with at least two clocks is EXPTIME-complete.  
The tool UPPAAL Tiga~\cite{BC+07} is capable of solving reachability and safety
objectives for games on timed automata.
Jurdzi{\'n}ski and Trivedi~\cite{JT08b} show the EXPTIME-completeness for
average-time games on automata with two or more clocks.

A natural extension of games with reachability-time objectives are games on priced timed automata
where the objective concerns the cumulated price of reaching a target.
Both \cite{ABM04} and \cite{BCFL04} present semi-algorithms
for computing the value of such games for linear prices.  
In~\cite{BBR05} the problem of checking the existence of
optimal strategies is shown to be undecidable, with \cite{BBM06} showing
undecidability holds even for three clocks and stopwatch prices.   

As for two-player quantitative games on PTAs,    
for a significantly different model of stochastic timed games, deciding whether
a target is reachable within a given bound is undecidable \cite{BF09}.
In~\cite{DBLP:conf/concur/BrazdilKKKR10},
continuous-time games are verified against time-automata objectives, giving rise to systems
whose semantics is related to the ones of \cite{BF09}.
The work of~\cite{timed-robust} studies probability of satisfying B\"uchi objectives in
a timed game where perturbations of probabilities can take place, and~\cite{DBLP:conf/fsttcs/BrazdilHKKR12}
studies games on interactive Markov chains which are modelled as a game extension of timed automata.

Regarding one-player games on PTAs, in \cite{BCJ09} the problem of deciding whether a target can be reached
within a given price and probability bound is shown to be undecidable for
priced PTAs with three clocks and stopwatch prices.  
The work of \cite{JKNT09} shows that the problem becomes decidable when the price functions
are of a restricted form. In \cite{JKN15}, simple functions are extended to devise a symbolic algorithm
for computing minimum expected time to reach a target in one-player games on PTAs; the extension
differs fundamentally from our quasi-simple functions.
We also mention the approaches for analysing unpriced probabilistic
timed automata against temporal logic specifications 
based on the region graph \cite{KNSS02,Jen96} and either forwards
\cite{KNSS02} or backwards \cite{KNSW07} reachability. 
The complexity of performing such verification is studied in \cite{LS07} for almost-sure reachability, and in \cite{JLS08}
for PCTL properties and a restricted number of clocks. Finally,~\cite{DBLP:conf/qest/BrazdilKKNR15} deals with a model similar to PTAs in which
time evolves continuously and controllable ``fixed delay'' events are introduced.

\vskip8pt 
\noindent
A preliminary version of the work was published in
conference proceedings \cite{FKNT10a}.
The result presented in \cite{FKNT10a} required
an assumption on the structure of the PTAs that enforced a terminal state to be reached almost surely
under any pair of strategies. In this paper we \emph{lift this restriction}
and consider arbitrary PTAs. Further, the proofs in \cite{FKNT10a} contain
a significant flaw which required major changes to be made to the proof, also for the restricted case.
Thus, although the high-level idea behind the proof (bounding the difference of values
for two configurations whose clock values are close to each other) stays the same, the
actual steps of the proof changed significantly.
Note that, although \cite{FKNT10a} also introduces quasi-simple functions, the definition
used here is different (and not equivalent). Most notably, our proofs here use a much more
``constructive'' approach when defining value functions.

\subsection{Outline}

\noindent
The structure of the paper is the following. 
In Section~\ref{sec:erg} we introduce Stochastic Games Arenas, which serve as
semantics for games on PTAs. 
Games on PTAs are then introduced in Section~\ref{dec:ertg}, and Section~\ref{subsec:boundary-region-graph}
defines boundary region abstraction, which plays a fundamental role in our proofs.
Section~\ref{sec:reach} provides the proofs for the main result. 

\section{Stochastic Game Arena}\label{sec:erg}

\noindent
We now introduce a general notion of stochastic game arenas (SGAs), which will later serve as semantics for the model we study.
The reader may notice that our definition of a stochastic game arena
differs from the standard concurrent stochastic game arena~\cite{FV97, dAM04}. 
However, as we shall demonstrate later, it
captures precisely the semantics of probabilistic timed game arena.
In addition, presenting the basic concepts relating to values in the general
setting of SGAs allows us to use these concepts in the context of both probabilistic timed game arenas and their abstractions.

\subsection{Stochastic Game Arena: Syntax and Semantics}

\noindent
We write $\Nat$ for the set of non-negative integers, $\Rat$ for the rational numbers,
$\Rplus$ for the non-negative reals, and $\Rplus^\infty$ for the reals with the maximum element $\infty$.
A function $f : (\Rplus^\infty)^n {\to} \Rplus^\infty$ is \emph{non-expansive}
if for any $\mathbf{x},\mathbf{y} \in (\Rplus^\infty)^n$ we have $\norm{f(\mathbf{x}) {-} f(\mathbf{y})} \le \norm{\mathbf{x} {-} \mathbf{y}}$ where
$\norm{\cdot}$ is the max norm, i.e. $|(x_1,\ldots,x_n)| = \max_{1\le i \le n} | x_i |$.
A \emph{discrete probability distribution}, or just distribution, over a (possibly uncountable) set $Q$ is 
a function $\dis : Q {\to} [0, 1]$ such that 
$\sum_{q \in Q} \dis(q) = 1$ and $\supp(\dis) \rmdef \set{q \in Q \mid  \dis(q)
  {>} 0}$ is at most countable. 
Let $\DIST(Q)$ denote the set of all discrete distributions over $Q$.
We say a distribution ${\dis \in \DIST(Q)}$ is a \emph{point distribution}
if $\dis(q) {=} 1$ for some $q \in Q$.
Given a set $Q$ and two functions $f,f': Q {\to} \Rplus^\infty$, we state $f {\le} f'$ when $f(q) {\le} f'(q)$ for all $q\in Q$.
A function $f: Q {\to} \Rplus$ is a {\em convex combination} of functions
$f_1,\ldots, f_n : Q {\to} \Rplus$ if there are non-negative coefficients $p_1 ,\ldots, p_n$
such that $\sum_{i=1}^n p_i = 1$ and we have $f(q) = \sum_{i=1}^n p_i {\cdot} f_i(q)$ for all $q\in Q$.
\begin{definition}[Stochastic Game Arena (SGA)]\label{sga-def}
  A stochastic game arena is given by a tuple  $\gam{=}(S, A_\mMIN, A_\mMAX,
  p_\mMIN, p_\mMAX, \winner, \tau_\mMIN, \tau_\mMAX)$ where: 
  \begin{itemize}
  \item
    $S$ is a possibly uncountable set of \emph{states};
  \item 
    $A_\mMIN$ and $A_\mMAX$ are possibly uncountable sets of \emph{actions controlled} by players $\mMin$ and $\mMax$ respectively, 
    and $\bot$ is a
    distinguished action such that $A_\mMIN\cap A_\mMAX = \{\bot\}$;
  \item
    $p_\mMIN : (S {\times} A_\mMIN) {\pto} \DIST(S)$ and $p_\mMAX : (S {\times}
    A_\mMAX) {\pto} \DIST(S)$ are the probabilistic transition (partial) functions
    for players $\mMin$ and $\mMax$ respectively, such that $p_\mMIN(s, \bot)$ and 
    $p_\mMAX(s, \bot)$ are undefined for all $s \in S$, and for any $s\in S$ 
    either there exists $\alpha\in A_\mMIN$ such that $p_\mMIN(s,\alpha)$ is defined 
    or there exists $\alpha\in A_\mMAX$ such that
    $p_\mMAX(s,\alpha)$ is defined;
  \item $\winner : (A_\mMIN {\times} A_\mMAX) {\to} (A_\mMIN \cup A_\mMAX)$ is a function
   specifying which of the actions chosen by the players takes place, requiring that for any $(\alpha,\beta) \in A_\mMIN {\times} A_\mMAX$ we have $\winner(\alpha,\beta) \in \{ \alpha,\beta\}$, and moreover $\winner(\alpha,\beta)$
   is never equal to $\bot$ unless $\alpha{=}\beta{=}\bot$;
  \item 
    $\tau_\mMIN : (S {\times} A_\mMIN) {\pto} \Rplus$ and $\tau_\mMAX : (S {\times}
    A_\mMAX) {\pto} \Rplus$ are the time delay (partial) functions for players $\mMin$ and $\mMax$ respectively, specifying the \emph{delay} associated with
    performing an action in a state.
\end{itemize}
\end{definition}
Note that SGAs introduced above are more general than classical stochastic games, in particular SGAs
contain information about the time delays of actions.

We say that an SGA is \emph{finite} if $S$, $A_\mMIN$ and $A_\mMAX$ are finite. 
For any state $s \in S$, we let $A_\mMIN(s)$ denote the set of actions available to
player $\mMin$ in $s$, i.e.,  the actions $\alpha \in A_\mMIN$ for which $p_\mMIN(s,\alpha)$
is defined, letting $A_\mMIN(s){=}\{\bot\}$ if no such action exists.
Similarly, $A_\mMAX(s)$ denotes the actions
available to player $\mMax$ in $s$ and we let $A(s) {=} A_\mMIN(s) {\times} A_\mMAX(s)$. From the conditions required of the probabilistic transition functions of the players, we have $(\bot,\bot) \not\in A(s)$ for all $s \in S$. 

A game on SGA $\gam$ starts with a token in an {\em initial state} $s\in S$ and players $\mMin$
and $\mMax$ construct an infinite play by repeatedly choosing enabled actions, and then
moving the token to a successor state determined by the probabilistic
transition function of the player proposing the action that is favoured by the $\winner$ function. Formally, we introduce the following auxiliary definition for an SGA.
\begin{definition}[Probabilistic transition function of an SGA]\label{trans-def} 
  For any stochastic game arena $\gam{=} (S, A_\mMIN, A_\mMAX,
  p_\mMIN, p_\mMAX, \tau_\mMIN, \tau_\mMAX)$ the \emph{probabilistic transition function} of $\gam$ is given by the partial function
  $p : (S {\times} A_\mMIN {\times} A_\mMAX) {\pto} \DIST(S)$ where for any  $s \in S$, $\alpha \in A_\mMIN$ and $\beta \in A_\mMAX$: 
  \begin{align*}
    p(s,\alpha,\beta) = & \; \left\{ \begin{array}{cl}
        \mbox{undefined} & \mbox{if} \; (\alpha,\beta) \not\in A(s)  \\
        p_\mMIN(s,\alpha) & \mbox{if} \;  \winner(\alpha,\beta) {=} \alpha\\
        p_\mMAX(s,\beta) & \mbox{otherwise.} \end{array} \right. 
  \end{align*}
\end{definition} 
Using this definition, if we are in state $s$ and the action pair $(\alpha, \beta) \in A(s)$ is
chosen by the players, then the probability of making a transition to $s'$ equals $p(s, \alpha, \beta)(s')$.
We similarly define the time delay function $\tau$ of the SGA $\gam$ by
  \begin{align*}
    \tau(s,\alpha,\beta) \rmdef & \; \left\{ \begin{array}{cl}
        \mbox{undefined} & \mbox{if} \; (\alpha,\beta) \not\in A(s)  \\
        \tau_\mMIN(s,\alpha) & \mbox{if} \;  \winner(\alpha,\beta) {=} \alpha\\
        \tau_\mMAX(s,\beta) & \mbox{otherwise.} \end{array} \right. 
  \end{align*}
A transition of $\gam$ is a tuple $(s, (\alpha, \beta), s')$ such that
$p(s, \alpha, \beta)(s') {>} 0$ and a play of $\gam$ is a finite or infinite sequence   
\[
\seq{s_0, (\alpha_1, \beta_1) , s_1 ,  (\alpha_2,\beta_2) , \ldots , s_{i} , (\alpha_{i+1},\beta_{i+1}) , s_{i+1} , \ldots }
\]
such that $(s_i, (\alpha_{i+1}, \beta_{i+1}), s_{i+1})$ is a transition for all
$i {\geq} 0$. The length of a play $\rho$, denoted $\len(\rho)$, is defined as the number of transitions appearing in the play.
For a finite play $\rho {=}  \seq{s_0, (\alpha_1, \beta_1), s_1, \ldots, (\alpha_k, \beta_k), s_k}$, let
$\LAST(\rho)$ denote the last state $s_k$ of the play. 
We write $\RUNS$ ($\FRUNS$) for the sets of infinite (finite) plays in $\gam$
and $\RUNS(s)$ ($\FRUNS(s)$) for the sets of infinite (finite) plays
starting from $s \in S$.   
\begin{definition}[SGA Strategy]
Let $\gam{=} (S, A_\mMIN, A_\mMAX,p_\mMIN, p_\mMAX, \tau_\mMIN, \tau_\mMAX)$ be a SGA. A \emph{strategy} of $\mMin$ is a function $\mu : \FRUNS {\to} A_{\mMIN}$ such that $\mu(\rho) \in A_\mMIN(\LAST(\rho))$ for all finite plays $\rho \in \FRUNS$. A strategy $\chi$ of $\mMax$ is defined analogously and we let $\Sigma_\mMIN$ and $\Sigma_\mMAX$ denote the sets of strategies of $\mMin$ and $\mMax$, respectively.
\end{definition}
For any finite play, a strategy of $\mMin$ ($\mMax$) returns an action
available to $\mMin$ ($\mMax$) in the last state of the play.

For a SGA $\gam$, state $s$ of $\gam$ and strategy pair $(\mu,\chi) \in  \Sigma_\mMIN {\times}
\Sigma_\mMAX$, let $\RUNS^{\mu, \chi}(s)$ ($\FRUNS^{\mu, \chi}(s)$) denote the set of infinite (finite) plays in which 
$\mMin$ and $\mMax$ play according to $\mu$ and $\chi$, respectively.  
Given a finite play $\rho \in \FRUNS^{\mu, \chi}(s)$, a basic cylinder set $\cyl(\rho)$
is the set of infinite plays in $\RUNS^{\mu, \chi}(s)$ for which $\rho$ is a prefix.
Using standard results from probability theory~\cite{NS04} we
can construct a probability space $(\RUNS^{\mu, \chi}(s), \mathcal{F}^{\mu, \chi}(s), \PROB^{\mu, \chi}_s)$
where $\mathcal{F}^{\mu, \chi}(s)$ is the smallest $\sigma$-algebra generated by the basic cylinder sets and $\PROB^{\mu, \chi}_s : \mathcal{F} {\to} [0,1]$ is the unique probability measure such that for any
finite play $\rho {=}  \seq{s_0, (\alpha_1, \beta_1), s_1, \ldots, s_{k-1} , (\alpha_k, \beta_k), s_k} \in \FRUNS^{\mu, \chi}(s)$:
\[ \begin{array}{c}
 \PROB^{\mu, \chi}_s(\mathit{Cyl}(\rho)) = \prod_{i=1}^k p(s_{i-1},\alpha_i,\beta_i)(s_i)
\end{array} \]
where $\mathit{pre}(\rho,i){=}\seq{s_0, (\alpha_1, \beta_1), s_1, \ldots, s_{i-1} , (\alpha_i, \beta_i) , s_{i}}$ for all $i {<} k$.

Given a 
\emph{real-valued random variable} $f : \RUNS {\to} \Rplus^\infty$, the expression 
$\eE^{\mu, \chi}_{s}(f)$ denotes the expected value of $f$ with respect to the
probability measure $\PROB^{\mu, \chi}_s$.      

We extend $\PROB^{\mu, \chi}_{s}(f)$ also to the cases where the game is assumed
to start from a finite play $\rho {=} \seq{s_0, (\alpha_1, \beta_1), s_1, \ldots, (\alpha_k, \beta_k), s_k}$ as opposed to a state $s$, and we let
$\PROB^{\mu, \chi}_{\rho} = \PROB^{\mu_\rho, \chi_\rho}_{\LAST(\rho)}$,
where the strategy $\mu_\rho$ is defined from $\mu$ by
\[
 \mu_\rho(\rho') =
\mu(\seq{s_0, (\alpha_1, \beta_1), s_1, \ldots, (\alpha_k, \beta_k), s_k, (\alpha'_{k+1},\beta'_{k+1}), s'_{k+1},\ldots (\alpha_\ell,\beta_\ell),s_\ell})
\]
for all $\rho'{=}\seq{s'_k,\ldots (\alpha'_k,\beta'_k), s'_{k+1},\ldots (\alpha_\ell,\beta_\ell),s_\ell}$
such that $s_k{=}s'_k$, and $\mu_\rho(\rho')$ is defined arbitrarily otherwise;
the strategy $\chi_\rho$ is defined analogously. We then also use $\eE^{\mu, \chi}_{\rho}$ defined with respect to $\PROB^{\mu, \chi}_{\rho}$.

\subsection{Reachability-time objective in Stochastic Game Arena}

\noindent 
We now define the reachability-time objective for plays of SGAs.
\begin{definition}\label{obj-def}
For an SGA $\gam$ and target set of states $F$ of $\gam$, the \emph{(finite-horizon) $n$-step reachability-time objective}
associated with an infinite play $\rho {=} \seq{s_0, (\alpha_1, \beta_1) , s_1 , \ldots}$ is
given by:
\[ \begin{array}{c}
 \REACHPRICE^n_{F}(\rho) = \sum_{i=1}^{k} \tau(s_{i-1}, \alpha_i, \beta_i)
\end{array} \]
where $k {=} \min\{i \in \Nat \mid s_i \in F\}$ if $s_j \in F$ for some $j {<} n \in \Nat$
and $k{=} n $ otherwise.
Furthermore, the \emph{(infinite-horizon) reachability-time objective} (with target set $F \subseteq S$)
associated with an infinite play $\rho$ is
given by:
\[ \begin{array}{c}
\REACHPRICE_{F}(\rho) = \lim_{n\rightarrow \infty} \REACHPRICE^n_{F}(\rho) \, .
\end{array} \]
\end{definition}
In the definition of the infinite horizon objective the limit always exists, but it can be infinite. To simplify notation, we often omit the target set $F$ when it is clear from the context.

In our games on an SGAs players $\mMin$ and $\mMax$ move a token
along the edges in order to minimise and maximise, respectively, the ($n$-step) reachability-time objective function. 
Formally, for an SGA $\gam$ and an objective
$\REACHPRICE^n$
we define lower and upper value with respect to $\REACHPRICE^n$ for $\gam$ in state
$s \in S$ by 
\begin{align*}
 \LVAL_\gam^n(s) \rmdef & \; \sup\nolimits_{\chi \in \Sigma_\mMAX} \inf\nolimits_{\mu \in \Sigma_\mMIN}
 \eE^{\mu, \chi}_s(\REACHPRICE^n)  \\ 
 \UVAL_\gam^n(s) \rmdef & \; \inf\nolimits_{\mu \in \Sigma_\mMIN} \sup\nolimits_{\chi \in \Sigma_\mMAX}
 \eE^{\mu, \chi}_s(\REACHPRICE^n)
\end{align*}
respectively. Similarly, for an objective $\REACHPRICE$ we define the lower and upper values:
\begin{align*}
 \LVAL_\gam(s) \rmdef & \; \sup\nolimits_{\chi \in \Sigma_\mMAX} \inf\nolimits_{\mu \in \Sigma_\mMIN}
 \eE^{\mu, \chi}_s(\REACHPRICE) \\ 
 \UVAL_\gam(s) \rmdef & \; \inf\nolimits_{\mu \in \Sigma_\mMIN} \sup\nolimits_{\chi \in \Sigma_\mMAX}
 \eE^{\mu, \chi}_s(\REACHPRICE).
\end{align*}
In the cases when the lower and upper values coincide, we denote this value simply as $\VAL_\gam^n(s)$ or $\VAL_\gam(s)$
and say that the corresponding game is \emph{determined}. We omit $\gam$ if it is clear from the context, e.g. we write simply $\UVAL$ instead of $\UVAL_\gam$.

For $\mu \in \Sigma_\mMIN$, $\chi \in \Sigma_\mMAX$ and $s \in S$,  let
\[
\VAL_\gam(s, \mu) \rmdef \sup\nolimits_{\chi' \in
  \Sigma_\mMAX} \eE^{\mu,\chi'}_s(\REACHPRICE) 
  \; \; \mbox{and} \;\;
\VAL_\gam(s, \chi) \rmdef \inf\nolimits_{\mu' \in  
\Sigma_\mMIN} \eE^{\mu',\chi}_s(\REACHPRICE)  \, .
  \]
We say $\mu$ is \emph{optimal} (or \emph{$\varepsilon$-optimal}), if
$\VAL_\gam(s, \mu)  {=}  \UVAL_\gam(s)$  
(or $\VAL_\gam(s, \mu)  \leq \UVAL_\gam(s) {-} \varepsilon$)
for all $s \in S$.
Furthermore, $\chi$ is \emph{optimal} (or \emph{$\varepsilon$-optimal}), if
$\VAL_\gam(s, \chi)  {=}  \UVAL_\gam(s)$  
(or $\VAL_\gam(s, \chi)  \geq \LVAL_\gam(s) {-} \varepsilon$)
for all $s \in S$.    
If $\gam$ is determined, then each player has an $\varepsilon$-optimal strategy
for all $\varepsilon{>}0$.

Since we will consider two-player games on SGAs that are not
determined, we are interested in the following problem with respect
to the upper value of a game. 
\begin{definition}
  Given an SGA $\gam$, initial state $s \in S$, reachability-time objective
  and value $B \in \Rat$, the corresponding game {\em reachability-time problem} is to decide
  whether $\UVAL(s) \leq B$. 
\end{definition} 
All results presented in the paper are still valid when replacing the upper value with the lower value. The following is a well-known result.
\begin{theorem}[\cite{Con92,ZP96}]
  \label{theorem:finite-games}
  The reachability-time problem for infinite-horizon objectives over finite SGAs is in NP $\cap$ co-NP.
\end{theorem}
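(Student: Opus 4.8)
The plan is to reduce the reachability-time problem over finite SGAs to the well-studied problem for turn-based stochastic games with a reachability-time (shortest-path) objective, for which NP $\cap$ co-NP membership is established in \cite{Con92,ZP96}. The first step is to observe that, although an SGA is presented in a concurrent format where both players simultaneously propose actions, the $\winner$ function deterministically selects which proposed action is executed and, crucially, the resulting distribution $p(s,\alpha,\beta)$ and delay $\tau(s,\alpha,\beta)$ depend only on the winning action --- that is, on $p_\mMIN(s,\alpha)$ or $p_\mMAX(s,\beta)$ alone. This means the apparent concurrency does not give rise to genuinely mixed equilibria: the effect of a move is entirely determined once it is known whose action wins. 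I would use this to argue that the game is effectively turn-based in the sense relevant to value computation.

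The second step is to make this reduction precise by constructing a finite turn-based stochastic game $\gam'$ whose vertices are partitioned into Min-vertices and Max-vertices, together with random vertices for the probabilistic transitions, and to show that its shortest-path value coincides with $\UVAL_\gam$. Concretely, from a state $s$ one first lets the players' simultaneous choice be resolved through $\winner$; since $\winner(\alpha,\beta)\in\{\alpha,\beta\}$ and the outcome uses only the winner's transition function, one can split each state into a gadget in which control is handed to exactly one player according to whose action prevails, inserting a stochastic vertex to realise the chosen distribution. The delays $\tau_\mMIN$ and $\tau_\mMAX$ become the edge weights, and reaching $F$ becomes reaching the corresponding target set in $\gam'$. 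I would then verify that strategies transfer between $\gam$ and $\gam'$ preserving expected accumulated delay until $F$ is hit, so that $\UVAL_\gam(s)$ equals the value of $\gam'$ at the vertex corresponding to $s$.

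The third step is to invoke the known complexity of the resulting single-objective turn-based game. The shortest-path / reachability-time objective on finite turn-based stochastic games is exactly the setting handled by the results of \cite{Con92,ZP96}, which place the threshold decision problem (is the value at most $B$?) in NP $\cap$ co-NP via positional optimal strategies: once a positional strategy of one player is fixed, the resulting one-player game (a Markov decision process) can be solved in polynomial time by linear programming, and guessing the optimal positional strategy yields the NP and co-NP certificates. Since $\gam'$ has size polynomial in that of $\gam$, the bound transfers.

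The main obstacle I anticipate is justifying the turn-based reduction carefully in the presence of ties and of the $\bot$ action, and more importantly handling the fact that the reachability-time value may be infinite when the target is not reached almost surely. The definition here takes $\REACHPRICE = \lim_n \REACHPRICE^n$, which can diverge, so the reduction and the cited results must be compatible with the possibility that the value is $\infty$; I would need to check that the shortest-path game framework of \cite{Con92,ZP96} (or a standard adaptation thereof) accommodates infinite values and non-terminating plays, for instance by treating configurations from which $F$ is avoided with positive value as contributing $\infty$ and confirming that optimal positional strategies still exist. Once these technicalities are settled, the membership in NP $\cap$ co-NP follows directly from the cited results.
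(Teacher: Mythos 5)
The paper does not prove this statement at all: Theorem~\ref{theorem:finite-games} is imported verbatim from \cite{Con92,ZP96} as a known result, so there is no in-paper argument to compare yours against. Your proposed derivation is the standard route and is sound in outline: unfold each round of the concurrent choice into a turn-based gadget, use the fact that $p(s,\alpha,\beta)$ and $\tau(s,\alpha,\beta)$ depend only on the winning action, and then invoke positional optimality and polynomial-time one-player solvability for the resulting finite turn-based stochastic shortest-path game. One point of wording deserves care, though: the justification for the turn-based reduction is not that the concurrency is harmless per se --- these games are genuinely concurrent (the winner depends on both proposals) and, as the paper itself shows for PTGAs, need not be determined. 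What licenses the reduction is the fixed quantifier order in the definition of $\UVAL$ as $\inf_\mu \sup_\chi$: for the \emph{upper} value you may let $\mMax$ observe $\mMin$'s current proposal before responding, which is exactly your gadget, and symmetrically with the roles swapped for $\LVAL$; the two gadgets generally give different games, matching the paper's remark that all results hold for either value separately. Your second caveat --- that $\REACHPRICE$ may be infinite and that the cited shortest-path framework must accommodate this --- is a real technicality (one typically preprocesses by computing the states from which $\mMin$ cannot keep the value finite), but it is the kind of adaptation the citation is implicitly relying on, so flagging it rather than resolving it is acceptable for a result the paper itself treats as black-box.
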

Efficient algorithms exist to solve the problem over finite SGAs, e.g. using value iteration~\cite{CH08,Con93}.

\subsection{Optimality Equations for SGAs}\label{reach-opt-sect}
\noindent
We now introduce optimality equations for reachability objectives over SGAs. For the remainder of this section we fix an SGA $\gam{=}(S, A_\mMIN, A_\mMAX, p_\mMIN, p_\mMAX, \tau_\mMIN, \tau_\mMAX)$
and a target set $F \subseteq S$.
\begin{definition}\label{bell-def}
The Bellman-style equations for $n$-step reachability time objective are given as follows: $\UVAL^n(s){=}0$ whenever $n{=}0$ or $s \in F$, and
for $n{\geq}0$ and $s\not\in F$:
\[ \begin{array}{c}
 \UVAL^{n+1}(s) =
    \inf\limits_{\alpha \in A_\mMIN(s)} \sup\limits_{\beta\in A_\mMAX(s)} \left\{  
    \tau(s,\alpha,\beta) + \sum_{s' \in S} p(s, \alpha, \beta)(s')  {\cdot}
    \UVAL^{n}(s')\right\}
\end{array} \]
\end{definition}
The correctness of these equations can be easily obtained from the fact that for any $n{\geq}0$, $s\not\in F$, path $\rho$ with $\LAST(\rho) {=} s$ and strategies $\mu$ and $\chi$, where $\alpha{=}\winner(\mu(\rho),\chi(\rho))$, by definition of $\eE^{\mu,\chi}_\rho$ we have:
\begin{align*}
\lefteqn{\eE^{\mu,\chi}_\rho (\REACHPRICE_{F}^{n+1})
   = \eE^{\mu_\rho,\chi_\rho}_{s} (\REACHPRICE_{F}^{n+1})} \\
   =&\; \int_{\rho'\in\RUNS^{\mu_\rho, \chi_\rho}(s)} \REACHPRICE_{F}^{n+1}(\rho')\, d\PROB^{\mu_\rho, \chi_\rho}_{s}
    \tag{by definition of expectation}\\
   =&\;  \sum_{s'\in S} \Bigg( \int_{\bar\rho{\in} \RUNS^{\mu_{\rho \alpha \ssp{s}}, \chi_{\rho \alpha \ssp{s}}}(s')} p(s, \mu_\rho(s),\chi_\rho(s))(s')\\
   &\qquad \cdot  \Big(\tau(s,\mu_\rho(s),\chi_\rho(s)) + \REACHPRICE_{F}^{n}(\bar\rho)\Big)\,d\PROB^{\mu_{\rho \alpha \ssp{s}}, \chi_{\rho \alpha \ssp{s}}}_{s'}(\bar\rho) \Bigg)
    \tag{by definition of $\RUNS^{\mu_\rho, \chi_\rho}(s)$, $\PROB^{\mu_\rho, \chi_\rho}_{s}$ and $\REACHPRICE_{F}^{n+1}$}\\
   =&\;  \tau(s,\mu_\rho(s),\chi_\rho(s)) \\
   &\;  + \sum_{s'\in S}\!p(s, \mu_\rho(s),\chi_\rho(s))(s') \cdot \left( \int_{\bar\rho{\in}\RUNS^{\mu_{\rho \alpha \ssp{s}}, \chi_{\rho \alpha \ssp{s}}}(s')} \REACHPRICE_{F}^{n}(\bar\rho)\,d \PROB^{\mu_{\rho \alpha \ssp{s}}, \chi_{\rho \alpha \ssp{s}}}_{s'}(\bar\rho) \right)
    \tag{rearranging}\\
   =&\;  \tau(s,\mu(\rho),\chi(\rho)) + \sum_{s'\in S} p(s, \mu(\rho),\chi(\rho))(s') \cdot \eE^{\mu,\chi}_{\rho\, \alpha\, s'} (\REACHPRICE_{F}^{n})
    \tag{by properties of $\mu_\rho$, $\chi_\rho$ and definition of expectation}
\end{align*}
Let us now turn to the equations for infinite-horizon objectives.
\begin{definition}\label{opt-def}
A function $P : S {\to} \Rplus^\infty$ is a
solution of the optimality equations $\UOpt_\gam$, written $P \models
\UOpt_\gam$, if for any $s \in S$: 
\[
P(s) {=} \left\{
  \begin{array}{cl}
    \!\!\!\! 0 & \text{if $s \in F$}\\
    \!\!\!\!\inf\limits_{\alpha \in A_\mMIN(s)}  
    \sup\limits_{\beta\in A_\mMAX(s)} \left\{  
    \tau(s,\alpha,\beta) + \sum\limits_{s' \in S} p(s, \alpha, \beta)(s')  {\cdot}
    P(s')\right\} \!\!  & \text{if $s \not\in F$}
  \end{array} \right.
\]
and is a solution of the optimality equations $\LOpt_\gam$, written 
$P \models \LOpt_\gam$, if for any $s \in S$:  
\[
P(s) {=} \left\{
  \begin{array}{cl}
    \!\!\!\!0 & \text{if $s \in F$}\\
    \!\!\!\!\sup\limits_{\beta\in A_\mMAX(s)}  
    \inf\limits_{\alpha \in A_\mMIN(s)} \left\{ 
    \tau(s,\alpha,\beta) + \sum\limits_{s' \in S} p(s,\alpha, \beta)(s')  {\cdot}
    P(s')\right\}  & \text{if $s \not\in F$.}
  \end{array} \right.
\]
\end{definition}
To simplify the presentation, from now we will only concentrate on upper value $\UVAL$. Analogous results for the lower value follow in a straightforward manner.

Our aim is to utilise the optimality equations for $\UOpt_\gam$ and prove that $\UVAL$ and $\lim_{n\rightarrow \infty}\UVAL^n$
are equal, as an initial step towards computing or approximating $\UVAL$. Although
this equivalence can seem obvious, it is not at all trivial and, due to the uncountable nature of SGAs, it is not
possible to use results such as Kleene fixpoint theorem out of the box. In fact, in this paper we will only prove the equivalence
for a special case of SGAs (sufficient for our purpose). Nevertheless, the following two lemmas can be established for
SGAs in general.
\begin{lemma}\label{lemma:value-under-fixpoint}
  For any solution $V \models  \UOpt_\gam$ we have $\UVAL \le V$.
\end{lemma}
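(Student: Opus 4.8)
The plan is to prove the inequality pointwise by exhibiting, for every $\varepsilon > 0$ and every state $s \in S$, a strategy $\mu \in \Sigma_\mMIN$ of player $\mMin$ that guarantees $\sup_{\chi \in \Sigma_\mMAX} \eE^{\mu,\chi}_s(\REACHPRICE_F) \le V(s) + \varepsilon$. Since $\UVAL(s) = \inf_{\mu}\sup_{\chi}\eE^{\mu,\chi}_s(\REACHPRICE_F)$, producing such a strategy for each $\varepsilon$ immediately yields $\UVAL(s) \le V(s)$ after letting $\varepsilon \to 0$. States with $V(s) = \infty$ are trivial, so I fix attention on $s$ with $V(s) < \infty$; for such $s$ the defining infimum over $A_\mMIN(s)$ in $V \models \UOpt_\gam$ is finite and can be approached arbitrarily closely.

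The strategy $\mu$ will be \emph{history-dependent} in a very mild way: it is memoryless in the state but uses the length of the current play to calibrate how nearly it realises the infimum. Concretely, for a finite play $\rho$ of length $n$ with $\LAST(\rho) = t \notin F$, I let $\mu(\rho)$ be any action $\alpha \in A_\mMIN(t)$ satisfying $\sup_{\beta \in A_\mMAX(t)} \{\tau(t,\alpha,\beta) + \sum_{t' \in S} p(t,\alpha,\beta)(t') \cdot V(t')\} \le V(t) + \varepsilon 2^{-(n+1)}$, which exists by the optimality equation for $V$; for $\LAST(\rho) \in F$ the choice is irrelevant. The step-indexed slack $\varepsilon 2^{-(n+1)}$ is the device that copes with the \emph{uncountable} state space: a fixed per-state budget summing to $\varepsilon$ is unavailable when $S$ is uncountable, whereas budgeting per step gives a telescoping total of $\sum_{n\ge 1}\varepsilon 2^{-n} = \varepsilon$ regardless of which states are visited.

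Fixing an arbitrary $\chi \in \Sigma_\mMAX$, I would then run a potential-function argument on the random variables $Y_n \rmdef \REACHPRICE^n_F + V(\hat{s}_n)$, where $\hat{s}_n$ is the state reached after $n$ steps, frozen at the first visit to $F$ (so that $V(\hat{s}_n) = 0$ and $\REACHPRICE^n_F$ is constant once $F$ is hit, matching $V$ being $0$ on $F$). Conditioning on a finite play $\rho$ of length $n$ ending in $t \notin F$, the definitions of $\REACHPRICE^{n+1}_F$ and of $\eE^{\mu,\chi}_\rho$ (the one-step unfolding already carried out after Definition~\ref{bell-def}) give that the conditional expectation of $Y_{n+1}$ equals $\REACHPRICE^n_F(\rho) + \tau(t,\mu(\rho),\chi(\rho)) + \sum_{t' \in S} p(t,\mu(\rho),\chi(\rho))(t') \cdot V(t')$, which by the choice of $\mu(\rho)$ is at most $Y_n(\rho) + \varepsilon 2^{-(n+1)}$; on histories already in $F$ the conditional expectation equals $Y_n(\rho)$. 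Taking expectations yields $\eE^{\mu,\chi}_s(Y_{n+1}) \le \eE^{\mu,\chi}_s(Y_n) + \varepsilon 2^{-(n+1)}$, and summing from $Y_0 = V(s)$ gives $\eE^{\mu,\chi}_s(Y_n) \le V(s) + \varepsilon$ for all $n$. Since $V \ge 0$ implies $\REACHPRICE^n_F \le Y_n$, we obtain $\eE^{\mu,\chi}_s(\REACHPRICE^n_F) \le V(s) + \varepsilon$, and because $\REACHPRICE^n_F \nearrow \REACHPRICE_F$ pointwise (the delays $\tau$ are non-negative), the monotone convergence theorem lets me pass to the limit, giving $\eE^{\mu,\chi}_s(\REACHPRICE_F) \le V(s) + \varepsilon$. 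As $\chi$ was arbitrary, this bounds $\VAL_\gam(s,\mu)$ and hence $\UVAL(s)$.

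The main obstacle I anticipate is purely in the bookkeeping rather than in any deep idea: I must ensure the telescoping is legitimate even though $V$ may take the value $\infty$ on some states. This is handled by the observation that whenever $V(t) < \infty$ the action chosen by $\mu$ forces $\sum_{t' \in S} p(t,\mu(\rho),\chi(\rho))(t') \cdot V(t') < \infty$, so that with $\mu$-probability one the play never enters an infinite-value state from a finite-value start, keeping every $\eE^{\mu,\chi}_s(Y_n)$ finite and the subtraction in the telescoping well-defined. The remaining care is simply to verify that the freezing convention for $\hat{s}_n$ is consistent with the definition of $\REACHPRICE^n_F$ (the transition that first enters $F$ contributes its delay, after which both summands stabilise), which is routine.
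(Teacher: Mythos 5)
Your proposal is correct and follows essentially the same route as the paper's proof: the same strategy choosing an $\varepsilon\cdot 2^{-(\len(\rho)+1)}$-optimal action at each step, the same telescoping of the slack to a total of $\varepsilon$, and the same passage to the limit via monotonicity of $\REACHPRICE^n_F$. Your forward ``potential function'' formulation with $Y_n$ is just a repackaging of the paper's induction on the remaining horizon $n$, so the two arguments coincide step for step.
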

\begin{proof}
Consider any $\varepsilon{>}0$ and let $\mu$ be a strategy for player $\mMIN$ that, for any finite play $\rho$, selects an $\varepsilon {\cdot} 2^{-(\len(\rho)+1)}$ optimal action. For an {\em initial state} $s\in S$ and a finite play $\rho$ such that $\LAST(\rho){=}s$, it follows that: 
\begin{equation}\label{epsilon-eqn}
V(s) 
  + \varepsilon {\cdot} 2^{-(\len(\rho)+1)} \geq    \sup_{\beta\in A_\mMAX(s)} \left\{ \tau(s,\mu(\rho),\beta) + \mbox{$\sum\limits_{s'\in S}$} p(\mu(\rho),\beta)(s') {\cdot} V(s') \right\} \, .
\end{equation}
We will now show that for any path $\rho$, counter-strategy $\chi$ for $\mMAX$ and $n \in \Nat$ we have:
\begin{align}
  \eE^{\mu,\chi}_\rho (\REACHPRICE_{F}^n) & \le \; V(\LAST(\rho)) + \mbox{$\sum\nolimits_{m =\len(\rho)+1}^{n+\len(\rho)}$} \varepsilon {\cdot} 2^{-m} \label{opt-eqn} \, .
\end{align}
We prove (\ref{opt-eqn}) by induction on $n \in \Nat$. The case for $n{=}0$ follows from Definition~\ref{obj-def} and Definition~\ref{opt-def}.
  
Now suppose (\ref{opt-eqn}) holds for some $n \in \Nat$.
Consider any finite path $\rho$ where $\LAST(\rho)=s$ and counter-strategy $\chi$ for $\mMAX$. Now, if $s \in F$, then by Definition~\ref{obj-def} we have:
\[ \begin{array}{c}
\eE^{\mu,\chi}_\rho (\REACHPRICE_{F}^{n+1}) = 1 =  V(\LAST(\rho)) \leq V(\LAST(\rho)) + \sum_{m=\len(\rho)+1}^{(n+1)+\len(\rho)} \varepsilon {\cdot} 2^{-m} \, .
\end{array} \]
On the other hand, if $s \not\in F$ and letting $a{=}\winner(\mu(\rho),\chi(\rho))$, then by Definition~\ref{obj-def} and Definition~\ref{opt-def}:
  \begin{align*}
  \lefteqn{ \eE^{\mu,\chi}_\rho  (\REACHPRICE_{F}^{n+1}) = \tau(s,\mu(\rho),\chi(\rho)) + \mbox{$\sum\limits_{s'\in S}$} p(s, \mu(\rho),\chi(\rho))(s') {\cdot} \eE^{\mu,\chi}_{\rho a s'} (\REACHPRICE_{F}^{n})} \\
    &\le \; \tau(s,\mu(\rho),\chi(\rho)) + \mbox{$\sum\limits_{s'\in S}$} p(s, \mu(\rho),\chi(\rho))(s') {\cdot}\left( V(s') + \mbox{$\sum\limits_{m=\len(\rho a s')+1}^{n+\len(\rho a s')}$} \varepsilon{\cdot}2^{-m}\right)  \tag{by induction} \\
    &\le \; \tau(s,\mu(\rho),\chi(\rho)) + \mbox{$\sum\limits_{s'\in S}$} p(s, \mu(\rho),\chi(\rho))(s') {\cdot}\left( V(s') + \mbox{$\sum\limits_{m=\len(\rho)+2}^{n+\len(\rho)+1}$} \varepsilon{\cdot}2^{-m}\right)  \tag{by definition of $\len(\cdot)$} \\
    &= \; \left(\tau(s,\mu(\rho),\chi(\rho)) + \mbox{$\sum\limits_{s'\in S}$} p(s, \mu(\rho),\chi(\rho))(s') {\cdot} V(s')\right)+ \mbox{$\sum\limits_{m=\len(\rho)+2}^{(n+1)+\len(\rho)}$} \varepsilon{\cdot}2^{-m}  \tag{rearranging} \\
    &\le \; V(s) + \varepsilon{\cdot}2^{-(\len(\rho)+1)} + \mbox{$\sum\limits_{m=\len(\rho)+2}^{(n+1)+\len(\rho)}$}\varepsilon{\cdot}2^{-m} \tag{by (\ref{epsilon-eqn})} \\
    &= \; V(\LAST(\rho)) + \mbox{$\sum\limits_{m=\len(\rho)+1}^{(n+1)+\len(\rho)}$} \varepsilon{\cdot}2^{-m} \tag{rearranging.}
  \end{align*}
Since these are all the cases to consider, it follows that (\ref{opt-eqn}) holds by induction on $n$.

Letting $\rho=s$ and taking the limit of $n$ in (\ref{opt-eqn}), we have $\eE^{\mu,\chi}_s (\REACHPRICE_{F}) \le V(s) + \varepsilon$ and, since $\varepsilon$ and $\chi$ were arbitrary, it follows that $\UVAL(s) \le V(s)$ as required. \qed
\end{proof}
\begin{lemma}\label{lemma:ih-at-least-fh}
 $\UVAL \ge \lim_{n\rightarrow\infty} \UVAL^n$.
\end{lemma}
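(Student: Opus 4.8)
The plan is to exploit the monotonicity of the finite-horizon objective in the horizon $n$, which makes this the ``easy'' of the two inequalities. First I would observe that, since every delay $\tau(s_{i-1},\alpha_i,\beta_i)$ is non-negative, the sequence $(\REACHPRICE^n)_{n}$ is pointwise non-decreasing: for each infinite play $\rho$ and each $n$ we have $\REACHPRICE^n(\rho) \le \REACHPRICE^{n+1}(\rho)$. This is immediate from the case split in Definition~\ref{obj-def}: if $F$ is reached before step $n$ the two quantities agree, and otherwise increasing the horizon only adds further non-negative summands. Consequently $\REACHPRICE^n(\rho) \le \lim_{m\to\infty}\REACHPRICE^m(\rho) = \REACHPRICE(\rho)$ for every $n$.

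Next, fix a state $s$ and a horizon $n$. For any strategy pair $(\mu,\chi) \in \Sigma_\mMIN {\times} \Sigma_\mMAX$, monotonicity of the expectation with respect to $\PROB^{\mu,\chi}_s$, applied to the pointwise inequality $\REACHPRICE^n \le \REACHPRICE$, yields $\eE^{\mu,\chi}_s(\REACHPRICE^n) \le \eE^{\mu,\chi}_s(\REACHPRICE)$. I would then transfer this inequality through the two optimisation operators in the order in which they appear. For each fixed $\mu$ and each $\chi$ we have $\eE^{\mu,\chi}_s(\REACHPRICE^n) \le \eE^{\mu,\chi}_s(\REACHPRICE) \le \sup_{\chi'}\eE^{\mu,\chi'}_s(\REACHPRICE)$, so taking the supremum over $\chi$ on the left gives $\sup_{\chi}\eE^{\mu,\chi}_s(\REACHPRICE^n) \le \sup_{\chi}\eE^{\mu,\chi}_s(\REACHPRICE)$; taking the infimum over $\mu$ then gives $\UVAL^n(s) \le \UVAL(s)$. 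Here it is essential that the quantifier order $\inf_\mu \sup_\chi$ is identical in the definitions of $\UVAL^n$ and $\UVAL$, which is exactly what makes this monotone transfer valid.

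Finally, since $\UVAL^n(s) \le \UVAL(s)$ holds for every $n$, passing to the limit $n\to\infty$ gives $\lim_{n\to\infty}\UVAL^n(s) \le \UVAL(s)$, the claimed inequality. The limit on the left exists because the very same monotone-transfer argument, now applied to $\REACHPRICE^n \le \REACHPRICE^{n+1}$, shows that $(\UVAL^n(s))_n$ is itself non-decreasing, so $\lim_{n\to\infty}\UVAL^n(s) = \sup_n \UVAL^n(s)$.

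I do not anticipate a genuine obstacle in this lemma: unlike the converse direction it requires no fixpoint machinery, only monotonicity. The only points needing care are (i) reading the pointwise monotonicity $\REACHPRICE^n \le \REACHPRICE$ off the case analysis of Definition~\ref{obj-def}, and (ii) checking that the inequality survives both the $\sup$ over $\Sigma_\mMAX$ and the $\inf$ over $\Sigma_\mMIN$, for which the matching quantifier order is the key. The difficult half of the equivalence, namely $\UVAL \le \lim_{n\to\infty}\UVAL^n$, is the one that will require the tailor-made arguments advertised in the introduction.
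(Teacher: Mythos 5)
Your proposal is correct and is essentially the paper's own argument: the paper's proof consists of the single observation that $\REACHPRICE^n_{F}(\rho) \le \REACHPRICE_{F}(\rho)$ for every play $\rho$ and every $n$, leaving the transfer through the expectation, the matching $\inf_\mu\sup_\chi$ quantifiers, and the limit in $n$ as ``straightforward''. You have simply spelled out those routine steps explicitly.
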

\begin{proof}
The proof follows straightforwardly from the fact that for any $n \in \Nat$ and finite play $\rho$ we have that $\REACHPRICE_{F}(\rho) \ge \REACHPRICE^n_{F}(\rho)$. \qed
\end{proof}
\section{Probabilistic Timed Game Arenas}\label{dec:ertg}

\noindent
In this section we introduce Probabilistic Timed Game Arenas (PTGAs) which extend classical timed
automata~\cite{AD94} with discrete distributions and a partition of the actions
between two players $\mMin$ and $\mMax$. 
However, before we present syntax and semantics of PTGAs, we need to introduce
the concept of clock variables and related notions. 
\subsection{Clocks, Constraints, Regions, and Zones}
\paragraph*{Clocks}
Let $\clocks$ be a finite set of \emph{clocks}.
A \emph{clock valuation} on $\clocks$ is a function
$\nu : \clocks {\to} \Rplus$ and we write $V(\clocks)$ (or just $V$ when $\clocks$ is clear from the context) for the set of clock
valuations.
Abusing notation, we also treat a valuation $\nu$ as a point in
$(\Rplus)^{|\clocks|}$. Let $\mathbf{0}$ denote the clock valuation that assigns 0 to all clocks.
If~$\nu \in V$ and $t \in \Rplus$ then we write $\nu {+} t$ for the
clock valuation defined by $(\nu {+} t)(c) \rmdef \nu(c) {+} t$ for all 
$c \in \clocks$.
For $C \subseteq \clocks$, we write $\nu_C$ for the valuation
where $\nu_C(c)$ equals $0$ if 
$c \in C$ and $\nu(c)$ otherwise.
For $X \subseteq V(\clocks)$, we write $\CLOS{X}$ for the smallest closed set
in~$V$ containing $X$.  
Although clocks are usually allowed to take arbitrary non-negative values,
for notational convenience we assume that there is an upper bound $K \in \Nat$ such that
for every clock $c \in \clocks$ we have that $\nu(c) \leq K$.

\paragraph{Clock constraints}
A \emph{clock constraint} over $\clocks$ with upper bound $K \in \Nat$ is a
conjunction of \emph{simple constraints} of the form $c \bowtie i$ or $c {-} c'
\bowtie i$, where  $c, c' \in \clocks$, $i \in \Nat$, $i {\leq} K$, and 
${\bowtie} \in \{{<}, {>}, {=}, {\leq} , {\geq} \}$. 
For $\nu \in V(\clocks)$ and $K \in \Nat$, let $\CC(\nu, K)$ be the set of clock constraints with
upper bound $K$ which hold in~$\nu$, i.e.\ those constraints that resolve to
$\mathtt{true}$ after substituting each occurrence of a clock $x$ with $\nu(x)$.

\paragraph*{Clock regions}
Every clock region is an equivalence class of the
indistinguishability-by-clock-constraints relation, and vice versa.   
For a given set of clocks $\clocks$ and upper bound $K \in \Nat$ on
clock constraints, a \emph{clock
  region} is a maximal set $\region {\subseteq} V(\clocks)$ such that 
$\CC(\nu, K) {=} \CC(\nu', K)$ for all $\nu, \nu' \in \region$. 
For the set of clocks $\clocks$ and upper bound $K$ we write
$\Rr(\clocks, K)$ for the corresponding finite set of clock regions. 
We write $[\nu]$ for the clock region of $\nu$. 
If $\region {=} [\nu]$, write $\region_C$ for $[\nu_C]$; this
definition is well-defined, since for any clock
valuations $\nu$ and $\nu'$ if $[\nu]{=}[\nu']$ then $[\nu_C]{=}[\nu'_C]$.

\paragraph*{Clock zones}
A \emph{clock zone} is a convex set of clock valuations, which is a union of a
set of clock regions.  
We write $\zones(\clocks,K)$ for the set of clock zones over the set of clocks
$\clocks$ and upper bound $K$. 
Observe that a set of clock valuations is a clock zone if and only if it is
definable by a clock constraint. Although more than one clock constraint can represent the same zone, for any clock zone $\zeta$, there exists an $O(|\clocks|^3)$ algorithm to compute the (unique) canonical clock constraint of $\zeta$ \cite{Dil89}. We therefore interchange the semantic and syntactic interpretation of clock zones.
\\ \\ \noindent
When the set of clocks and upper bound is clear from the context we write  $\Rr$ and $\zones$ for the set of regions and zones respectively.

\subsection{Probabilistic Timed Game Arena: Syntax}

\noindent
For the remainder of the paper we fix a positive integer $K$, and work with $K$-bounded clocks and clock
constraints.
\begin{definition}[Probabilistic Timed Game Arena (PTGA)]
  A probabilistic timed game arena is a tuple 
  $\pta {=} (L, \clocks, \inv, \act_\mMIN, \act_\mMAX, E, \delta)$
  where 
  \begin{itemize}
  \item
    $L$ is a finite set of \emph{locations};
  \item
    $\clocks$ is a finite set of \emph{clocks}; 
  \item
    $\inv : L {\to} \zones$ is an \emph{invariant condition}; 
  \item
    $\act_\mMIN$ and $\act_\mMAX$ are disjoint finite sets of \emph{actions}, and we use $\act$ for the set $\act_\mMIN\cup \act_\mMAX$
  \item 
    $E : L {\times}\act {\to} \zones$ is an \emph{action enabling condition}; 
  \item
    $\delta : L {\times} \act {\to} \DIST(2^{\clocks} {\times} L)$ is a
    \emph{probabilistic transition function}.  
  \end{itemize}
\end{definition}
When we consider a PTGA as an input of an algorithm, its size is understood as
the sum of the sizes of encodings of $L$, $\clocks$, $\inv$, $\act$, $E$, and
$\delta$. 
As usual~\cite{JLS08}, we assume that probabilities are expressed as ratios of
two natural numbers, each written in binary, and zones in the
definition of $\inv$ and $E$ are expressed as clock constraints.

A standard \emph{probabilistic timed automaton} (PTA) is a PTGA where one of $\act_\mMIN$ and $\act_\mMAX$ is empty. On the other hand, the standard (non-probabilistic) \emph{timed game arena} (\emph{timed automaton}) is a PTGA (PTA) such that $\delta(\ell,a)$ is a point distribution for all $\ell \in L$ and $a \in \act$.  

\subsection{Probabilistic Timed Game Arena: Semantics}

\noindent
Let $\pta{=}(L, \clocks, \inv, \act_\mMIN, \act_\mMAX, E, \delta)$
be a probabilistic timed game arena.
A \emph{configuration} of a PTGA is a
pair $(\ell, \nu)$, where $\ell$ is a location and $\nu$ a clock valuation such that 
$\nu \sat \inv(\ell)$. 
For any $t \in \Rplus$, we let $(\ell,\nu){+}t$ equal the configuration
$(\ell,\nu{+}t)$.   
In a configuration $(\ell,\nu)$, a timed action (time-action pair)
$(t,a)$ is available if and only if the invariant condition $\inv(\ell)$ 
is continuously satisfied while $t$ time units elapse, and $a$ is enabled
(i.e.\ the enabling condition $E(\ell,a)$ is satisfied) after $t$ time units
have elapsed.  
Furthermore, if the timed action $(t,a)$ is performed, then the next
configuration is determined by the probabilistic transition relation $\delta$,
i.e.\ with probability $\delta[\ell,a](C,\ell')$ the clocks in $C$ are reset and we move to the location $\ell'$. 

A game on a PTGA starts in an {\em initial configuration} $(\ell,\nu)\in L {\times} V$ and $\mMin$ and
$\mMax$ construct an infinite play by repeatedly choosing available  timed actions 
$(t_a, a)\in \Rplus {\times} \act_\mMIN$ and $(t_b, b)\in \Rplus {\times}
\act_\mMAX$ proposing $\bot$ if no timed action is available.  
The player responsible for the move is $\mMin$ if the time delay of $\mMin$'s choice is
less than that of $\mMax$'s choice or $\mMax$ chooses $\bot$, and otherwise $\mMax$ is
responsible. 
We assume the players cannot  simultaneously choose $\bot$, i.e.\ that in any configuration 
there is at least one timed action available.
\begin{definition}[PTGA Semantics]\label{ptgsem-def}
  Let $\pta=(L, \clocks, \inv, \act_\mMIN, \act_\mMAX, E, \delta)$
  be a PTGA.   
  The semantics of $\pta$ is given by the SGA
\[
\sem{\pta} {=} (S, A_\mMIN, A_\mMAX,  p_\mMIN, p_\mMAX, \winner, \tau_\mMIN, \tau_\mMAX)
\] 
where 
  \begin{itemize}
  \item 
    $S \subseteq L {\times} V$ is the (possibly uncountable) set of states such
    that $(\ell,\nu) \in S$ if and only if $\nu \sat \inv(\ell)$;  
  \item 
    $A_\mMIN = (\Rplus {\times} \act_\mMIN) \cup \set{\bot}$ and
    $A_\mMAX = (\Rplus {\times} \act_\mMAX) \cup \set{\bot}$ are the sets of
    timed actions of players $\mMin$ and $\mMax$; 
  \item
    for $\star\in\{\mMIN, \mMAX\}$, $(\ell,\nu) \in S$ and $(t,a) \in A_\star$
    the probabilistic transition function $p_\star$ is defined when
    $\nu {+} t' \sat \inv(\ell)$ for all $0 {\leq} t' {\leq} t$,
    ${\nu {+} t \sat E(\ell,a)}$ and for any $(\ell',\nu')$: 
    \[
    \begin{array}{c}
      p_\star( (\ell,\nu), (t, a))((\ell',\nu')) = \sum\nolimits_{C \subseteq \clocks \wedge
        (\nu{+}t)_C=\nu'} \delta[\ell,a](C,\ell');
    \end{array}  
    \]
  \item
    for $(t_a,a) \in\Rplus {\times} \act_\mMIN$ and $(t_b,b) \in\Rplus {\times} \act_\mMAX$, we define
\[
\winner((t_a,a),(t_b,b)) = \left\{ \begin{array}{cl} 
(t_a,a) & \mbox{if $t_a {<} t_b$} \\
(t_b,b) & \mbox{otherwise.}
\end{array} \right.
\]
If one of the arguments to $\winner$ is $\bot$, we define the returning value to be the other argument.
\item
    the time delay function is given by $\tau_\star(s,(t,a)) =
    t$ for all $\star\in\{\mMIN, \mMAX\}$, $s \in S$ and $(t,a) \in A_\star$
    such that $p_\star(s, (t, a))$ is defined. 
\end{itemize}
\end{definition}
The sum in the definitions of $p_\mMIN$ and $p_\mMAX$ is
used to capture the fact that resetting different subsets of $\clocks$
may result in the same clock valuation (e.g.\ if all clocks are initially zero,
then we end up with the same valuation, no matter which clocks we reset).
Also, notice that the time delay function of the SGA corresponds to the elapsed
time of each move.  

\paragraph*{Time Divergence} When modelling real-time systems it is important to restrict attention to time divergent (or non-Zeno) behaviour. More precisely, one should not consider strategies which lead to behaviour in which time does not advance beyond a certain point, as this cannot occur in a real system. We achieve this by restricting attention to \emph{structurally non-Zeno} PGTAs, these are PGTA where all strategies will yield time-divergent behaviour by construction. We use the syntactic conditions given in~\cite{NPS13} for PTAs and are derived from those for timed automata~\cite{Tri99,TYB05}.
\begin{figure}[t]  \centering
  {\small
\begin{tikzpicture}[node distance=2.3cm]
\tikzstyle{nloc}=[draw,circle,minimum size=1.4em,inner sep=0em]
\tikzstyle{probloc}=[draw,circle,fill=black,minimum size=.4em,inner sep=0em]
\tikzstyle{trans}=[-latex, rounded corners]
 \node[nloc, label={left:$x{,}y{\le}2$}] at (0,0) (l0)  {$\ell_0$};
 \node[probloc] at (2.5,0) (p0) {};
 \node[nloc, label={below:$(0{<}y{\le}2){\wedge}(x{\le}2)$}] at (5,0) (l1) {$\ell_1$};
 \node[probloc] at (7.5,0) (p1) {};
 \node[nloc, label={right:$x{,}y{\ge}2$}] at (9.5,0) (l2) {$\ell_2$};
 \node[probloc] at (5,1.6) (p3) {};
 \node[probloc] at (5,-1.9) (p4) {};

 \draw (p3)+(0.3,0) arc (0:-90:0.3);
 \draw (p0)+(0,-0.3) arc (-90:-180:0.3);

 \draw[rounded corners] (l0) |- (p3) node [pos=0.7, above] {$b, x{>}1$};
 \draw[trans] (p3) -| (l2) node[pos=0.25,above] {$0.5, x{:=}0, y{:=}0$};
 \draw[trans] (p3) -- (l1) node [pos=0.4, left] {$0.5,\,x{:=}0$};
 \draw (l1) -- (p1) node [midway, above] {$a, \; y{>}1$};
 \draw[trans] (p1) -- (l2) node[pos=0.5,above] {$x{:=}0, y{:=}0$};
 \draw[dashed] (l1) -- (p0) node [midway, above] {$c, \; y{>}1$};
 \draw[trans, dashed] (p0) -- (l0) node[pos=0.5,above] {$0.2,y{:=}0$};
 \draw[trans, dashed] (p0) |- +(1,-1.3) -| (l2.-110) node[pos=0.35,above] {$0.8,x{:=}0, y{:=}0$};
 \draw[rounded corners] (l0) |- (p4) node [pos=0.7, below] {$a, \; x{=}2$};
 \draw[trans] (p4) -| (l2.-70) node [pos=0.35, below] {$x{:=}0, y{:=}0$};
 \draw[trans] (l2) -- +(0.7,0.3) -- +(0.3,0.7) -- (l2) node[pos=-0.1,right] {$x{:=}0$};
\end{tikzpicture}
}
  \vspace*{-0.4cm}
  \caption{Example of a probabilistic timed game arena.}\label{fig:example}
  \vspace*{-0.2cm}
\end{figure}
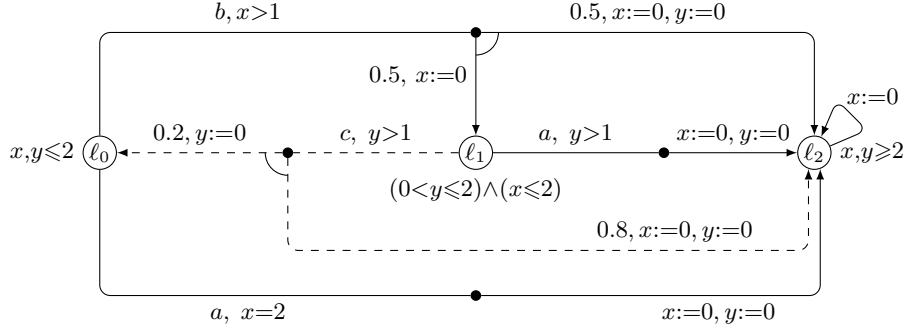
\begin{example}\label{ptga-example}
Consider the PTGA in Figure~\ref{fig:example}; we use solid
and dashed lines to indicate actions controlled by  $\mMin$
and $\mMax$ respectively.
Considering location $\ell_1$, the invariant condition is 
$(0{<}y{\le}2) {\wedge} (x{\le}2)$,  actions $a$ and $c$ are enabled when $y{>}1$
and, if $a$ is taken, we move to $\ell_2$, while if $c$ is taken, with
probability $0.2$ we move to $\ell_0$ and reset $y$, and with probability
$0.8$ move to $\ell_2$. 

Let us denote clock valuations by tuples where the first (second) coordinate
correspond to the clock $x$ ($y$).
Starting in the configuration  $(\ell_0, (0,0))$
and supposing $\mMin$'s strategy is to choose $(1.1,b)$ (i.e., wait $1.1$ time units
before performing action $b$) in location $\ell_0$ and then choose $(0.5,a)$ in location
$\ell_1$, while $\mMax$'s strategy in location $\ell_1$ is to choose $(0.2,c)$,
one possible play under this strategy pair is
\begin{multline*}
  \langle(\ell_0{,}(0{,}0)),\ ((1.1{,}b),\bot),\ (\ell_1{,}(0{,}1.1)),\\ 
  \linebreak[3] ((0.5{,} a),\ (0.2{,}c)),\ 
  (\ell_0{,} (0.2{,}0)),\ ((1.1{,} b),\bot),\ (\ell_2, (0{,}0))\rangle
\end{multline*}
which has probability $0.5 {\cdot} 0.2 {\cdot} 0.5 = 0.05$ and time
$1.1{+}0.2{+}1.1 = 2.4$ of reaching the location $\ell_2$. 
\end{example}

\subsection{Reachability-time problem over PTGA}

\noindent
We are interested in the reachability-time
problem for games over the semantics of a PTGA $\pta$. We
assume that the target set is given as a set $L_F$ of locations (the
corresponding target of the SGA $\sem{\pta}$, with state space $S$, is given by 
$F{=}\{ (\ell,\nu) \in S \mid \ell \in L_F \}$). However, the results
presented can be easily generalised to target sets of location-zone pairs.

\subsection{Non-determinacy of PTGA with reachability-time objectives}\label{subsection:det-ertg}

\noindent
Before proceeding with the definitions that we need to prove the
main decidability result of the paper, we show, through the following counter-example, that PTGAs are not determined, even when the game contains only non-strict inequalities.
\begin{figure}[t]
  \centering
{\small
\begin{tikzpicture}[node distance=2.3cm]
\tikzstyle{nloc}=[draw,circle,minimum size=1.4em,inner sep=0em]
\tikzstyle{probloc}=[draw,circle,fill=black,minimum size=.4em,inner sep=0em]
\tikzstyle{trans}=[-latex, rounded corners]
 \node[nloc, label={left:$x{\le}1$}] at (0,0) (l0)  {$\ell_0$};
 \node[probloc] at (2,1.5) (p0) {};
 \node[probloc] at (2,-1) (p1) {};
 \node[nloc, label={above:$x{\le}1$}] at (4,1.5) (l1) {$\ell_1$};
 \node[nloc, label={below:$x{\le}1$}] at (4,-1) (l1a) {$\ell_2$};
 \node[nloc, label={below:$x{\le}1$}] at (5,-0) (l1b) {$\ell_3$};
 \node[probloc] at (6,1.5) (p2) {};
 \node[nloc, fill=gray!20!white] at (8,0) (l2) {$\ell_4$};
 
 \node[] at (8.9,0.2) {$x{\geq}1$};
 \node[] at (8.95,-0.2) {$x{:=}0$};
 
 \draw[trans] (l0) |- (p1)  node [pos=0.75, below] {$x {\le} 1$} -- (l1a);
 \draw[trans, dashed] (l0) |- (p0)  node [pos=0.75, above] {$x {\le} 1$} -- (l1);
 \draw[trans, dashed] (l1) -- (p2) node [pos=0.5, above] {$x {=} 1$} -| (l2);
 \draw[trans] (l2) -- +(0.5,0.5) -- +(0.5,-0.5) -- (l2);
 \draw[trans,dashed] (l1a) |- node[pos=0.25,left] {$x{=}0$} node[pos=0.4,probloc] {}(l1b);
 \draw[trans] (l1a) -| node[pos=0.15,below] {$x {\le} 1$} node[pos=0.25,probloc] {}(l2);
 \draw[trans,dashed] (l1b) -- node[pos=0.25,below] {$x {=} 1$} node[pos=0.6,probloc] {}(l2);
 \draw[trans] (l1) -- node[pos=0.35,left] {$x {=} 0\phantom{xx}$} node[pos=0.5,probloc] {}(l2);
\end{tikzpicture}
}
  \caption{Example demonstrating that PTGAs are not determined. Action names are omitted for brevity.
    \label{fig:notdet}}  
\end{figure}
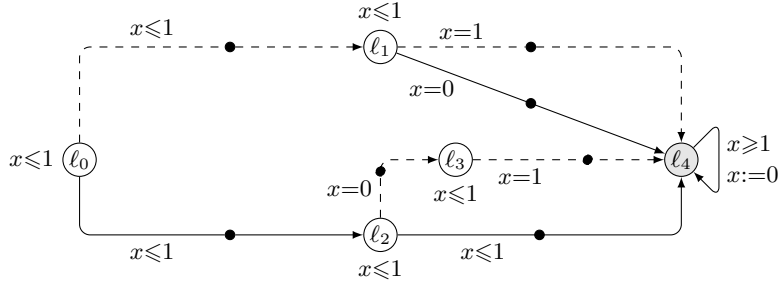
\begin{example}
Considering the PTGA given in Figure~\ref{fig:notdet} with target set $L_F {=} \{ \ell_4
\}$, recall that we use solid and dashed lines to indicate actions controlled
by $\mMIN$ and $\mMAX$ respectively. Constructing the optimality equations $\UOpt_\gam$ for the SGA
semantics of this PTGA, we have, after some simplifications:
\begin{align*}
P(\ell_4, x) = & \; 0 \\
P(\ell_3, x) = & \; 1{-}x \\
P(\ell_2, x) = & \; \left\{ \begin{array}{cl}
1 & \mbox{if $x{=}0$} \\
0 & \mbox{otherwise}
\end{array} \right. \\
P(\ell_1,x) = & \; \left\{ \begin{array}{cl}
0 & \mbox{if $x{=}0$} \\
1{-}x & \mbox{otherwise}
\end{array} \right.
\end{align*}
and $P(\ell_0, 0)$ is equal to the minimum of:
 \begin{equation}\label{nondet1-eqn}
\max\{0{+}P(\ell_2,0),0{+}P(\ell_1,0)\}
\end{equation}
and
\begin{equation}\label{nondet2-eqn}
\inf_{0 < t \le 1} \left\{ \max \left\{\sup_{t<t'\le 1} \big( t {+} P(\ell_2,t) \big), \sup_{0\le t' \leq t} \big( t' {+} P(\ell_1,t') \big) \right\}\right\} \, .
\end{equation}
The expression (\ref{nondet1-eqn}) is equal to $1$ and corresponds to player $\mMin$ leaving $\ell_0$ immediately (when the clock $x$ equals 0). The expression (\ref{nondet2-eqn}) corresponds to the infimum over leaving $\ell_0$ after a non-zero delay (when the clock $x$ is greater than 0) and is also equal to $1$. Combining these results we have that $P(\ell_0, 0){=}1$.

On the other hand, considering the optimality equations $\LOpt_\gam$, the values for the locations $\ell_1,\dots,\ell_4$ are as above, while the value for $P(\ell_0, 0)$ equals the maximum of:
\begin{equation}\label{nondet3-eqn}
\!\!0{+}P(\ell_1,0)
\;\;\mbox{and}
\;\;
\sup_{0 < t \le 1} \left\{ \min \left\{\inf_{t \leq t' {\leq} 1} \big( t {+} P(\ell_1, t)\big), \inf_{0 \le t' < t} \big( t' {+} P(\ell_2, t') \big) \right\}  \right\}\, .
\end{equation}
The first expression in (\ref{nondet3-eqn}) equals $0$ and corresponds to player $\mMax$ leaving $\ell_0$ immediately. The second expression in (\ref{nondet3-eqn}) corresponds to the supremum over leaving $\ell_0$ after a non-zero delay,
and is also equal to $0$, and therefore it follows that $P(\ell_0, \mathbf{0}){=}0$. Hence the game is not determined as the upper and lower values of the game differ in the state $(\ell_0,0)$.
\end{example}

\section{Boundary region abstraction}\label{subsec:boundary-region-graph}

\noindent
The region graph~\cite{AD94} is useful for solving time-abstract optimisation
problems on timed automata.   
The region graph, however, is not suitable for solving competitive optimisation
problems and games on timed automata as it abstracts away the timing
information.   
The corner-point abstraction~\cite{BBL04}, which captures digital clock semantics~\cite{HMP92}
of timed automata, is an abstraction of timed automata where the configurations
of the system are restricted to  $L {\times} \Nat^{|\clocks|}$, i.e.\ transitions
are allowed only when all clocks have non-negative integer values.
Although this abstraction retains some timing information, it is not convenient
for proof techniques based on dynamic programming, used in this paper.   
The boundary region abstraction (BRA)~\cite{JT07}, a generalisation of
the corner-point abstraction, is better suited for such proof techniques. 
More precisely, we need to prove certain properties of values in a PTGA, which we can do only when reasoning about all the states of the PTGA.
In the corner-point
abstraction we cannot do this since it represents only states corresponding to
corner points of regions.  
Here, we generalise the BRA of~\cite{JT07} to handle PTGAs. First, we require a
number of preliminary concepts. 
\paragraph{Timed Successor Regions}
Recall that $\Rr$ is the set of clock regions.
For $\region, \region' \in \Rr$, we say that $\region'$ is in the future of
$\region$, denoted $\region \rightarrow^{*} \region'$, 
if there exist $\nu \in \region$, $\nu' \in \region'$ and $t \in \Rplus$ such
that $\nu' = \nu{+}t$ and say $\region'$ is the \emph{time successor} of
$\region$ if $\region \neq \region'$ and $\nu{+}t' \in \region \cup \region'$ for all $t' {\leq} t$  and write 
$\region \rightarrow \region'$ to denote this fact.
We also use $\region \rightarrow^+ \region'$ if there is $\region''$
such that $\region \rightarrow \region'' \rightarrow^* \region'$.
For regions $\region, \region' \in \Rr$ such that $\region \rightarrow^*
\region'$ we write $[\region, \region']$ for the zone $\cup \{ \region''
  \mid  \region \rightarrow^* \region'' \wedge \region''
  \rightarrow^* \region' \}$.
\paragraph{Intuition for the Boundary Region Abstraction}
In our definition of the boundary region abstraction (BRA) we
capture the intuition that, when studying the ``optimal'' behaviour of the players,
it is sufficient to consider moves that take place near the start or end of
the regions. This allows us to abstract from moves that specify the precise time,
but instead allow the players to say which regions they wish to enter, and then
either say that they want to take the move at the start of the region ($\inf$), or
at its end ($\sup$). 

Based on this intuition we define the boundary region abstraction of
a probabilistic game arena as follows.
\begin{definition}[Boundary region abstraction (BRA)]\label{brg-def}
  For a probabilistic timed game arena
  $\pta{=}(L, \clocks, \inv, \act_\mMIN, \act_\mMAX, E, \delta)$,
  the boundary region abstraction of $\pta$ is given by
  the SGA
  $
  \RegB{\pta} {=} (\hS, \hA_\mMIN, \hA_\mMAX, \hp_\mMIN, \hp_\mMAX, \hwinner, \htau_\mMIN, \htau_\mMAX)
  $
  where  
  \begin{itemize}
  \item
    $\hS \subseteq L {\times} V {\times} \Rr$ is the (possibly uncountable) set of
    states such that $(\ell, \nu, \region) \in \hS$ if and only if 
    $\region \subseteq \inv(\ell)$ and $\nu \in \CLOS{\region}$ (recall that $\CLOS{\region}$ denotes the closure of $\region$);
  \item
    $\hA_\mMIN = (\act_\mMIN {\times} \Rr {\times} \{\inf,\sup\})
    \cup \set{\bot}$ is the set of actions of player $\mMIN$;
  \item
    $\hA_\mMAX = (\act_\mMAX {\times} \Rr {\times} \{\inf,\sup\})
    \cup \set{\bot}$ is the set of actions of player $\mMAX$;
  \item 
  	for $\star \in \{ \mMin , \mMax \}$,  
    $\hs = (\ell, \nu, \region) \in \hS$ and $\alpha = (a,
      \region'', \oper) \in \hA_\star$ such that $\region \rightarrow^* \zeta''$, the
      probabilistic transition function $\hp_\star$ is
      defined if  $[\region, \region''] \subseteq \inv(\ell)$ and
      ${\region'' \subseteq E(\ell, a)}$ and for any $(\ell',\nu',\region') \in \hS$:
    \[ \begin{array}{c}
    \hp_\star( \hs , \alpha ) ((\ell',\nu',\region')) =
    \sum\nolimits_{C \subseteq \clocks \wedge \nu''_C=\nu' \wedge  
      \region''_C=\region'} \delta[\ell,a](C,\ell')
    \end{array} \]
    where $\nu'' = \oper_{\nu + t\in \region'', t\ge 0} \nu {+} t$;
  \item
    $\hwinner((a,\region_a,\oper_a),(b,\region_b,\oper_b))$ is equal to $(a,\region_a,\oper_a)$ if (i)
    $\region_a \rightarrow^+ \region_b$ or (ii) $\region_a=\region_b$, $\oper_a = \inf$ and $\oper_b=\sup$; it is 
    equal to $(b,\region_b,\oper_b)$ otherwise;
  \item
    for $\star \in \{ \mMin , \mMax \}$,  $(\ell, \nu, \region) \in \hS$ and $(a_\alpha, \region_\alpha,\oper) \in \hA_\star$
    such that $\hp_\star$ is defined the time delay function is given by 
    $\htau_\star((\ell, \nu, \region),(a_\alpha, \region_\alpha,\oper)) = \oper_{\nu + t\in \region_\alpha} t$.
  \end{itemize} 
Given a target set of locations $L_F$ of $\pta$, the corresponding target set of the BRA is given by $\hF  {=} \{ (\ell,\nu,\region) \in \hS \mid  \ell \in L_F \}$.
\end{definition}
To simplify notation, for two elements $a \in \hA_\mMIN$ and $b\in \hA_\mMAX$ we write $a {\le} b$ to denote that $\hwinner(a,b) {=} a$. We use analogous notation also for other SGAs.
For an element $s {=} (\ell,\nu) \in L {\times} V$, we use $\RegB{\smash{s}}$ to denote the element $(\ell,\nu,[\nu])\in \hS$.

Although the boundary region abstraction is not a finite SGA,
for a fixed initial state we can restrict attention to a finite SGA, adapting an approach from~\cite{Tri09} as follows.
\begin{proposition}
  \label{prop:reachable-subgraph-is-finite-game}
  Let $\pta$ be a PTGA and $\RegB{\pta}$ the corresponding BRA. 
  For any state of $\RegB{\pta}$, its reachable sub-graph is
  finite and constructible in time exponential in the size of $\pta$.
\end{proposition}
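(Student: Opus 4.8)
The plan is to construct the reachable sub-graph by a forward (breadth-first) exploration from the given state $\hs_0 = (\ell_0,\nu_0,\region_0)$: maintain a worklist of discovered states, and expand each one by enumerating all its enabled actions and, for every action, all of its probabilistic successors according to Definition~\ref{brg-def}. The only genuinely non-trivial point is that this exploration terminates, i.e.\ that the reachable state set is finite; the time bound then follows by bounding the number of reachable states, the number of actions enabled in each, and the out-degree of each action. Since $L$ and $\Rr$ are finite, the whole difficulty reduces to showing that only finitely many clock valuations occur in reachable states.

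The heart of the argument is the following invariant. Writing $\mathrm{frac}(x) = x - \lfloor x \rfloor$ and setting $D = \{0\} \cup \{\mathrm{frac}(\nu_0(c)) \mid c \in \clocks\}$, I claim that every valuation $\nu$ occurring in a state reachable from $\hs_0$ satisfies $\mathrm{frac}(\nu(c)) \in \{\mathrm{frac}(d-d') \mid d,d' \in D\}$ for every $c \in \clocks$. As $|D| \le |\clocks|+1$, this confines each fractional part to a set of size at most $(|\clocks|+1)^2$; since clocks are $K$-bounded, the integer part of each clock lies in $\{0,\dots,K\}$, so at most $((K{+}1)(|\clocks|{+}1)^2)^{|\clocks|}$ valuations satisfy the invariant, a finite number.

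I would establish the invariant by induction on the length of the play reaching $\nu$, the base case being immediate. For the step, consider a transition from $(\ell,\nu,\region)$ via $\alpha = (a,\region'',\oper)$, whose pre-reset successor valuation is $\nu'' = \oper_{\nu+t \in \region'',\, t \ge 0}(\nu+t)$, followed by resetting a clock set $C$. The key geometric fact is that along the diagonal ray $t \mapsto \nu+t$ the cyclic order of the clocks' fractional parts is constant, so $[\nu+t]$ changes only at times when some clock reaches an integer; hence $\nu''$ is either $\nu$ itself (when the ray already starts in $\region''$ and $\oper=\inf$) or a point at which some clock $c^\ast$ has $\nu''(c^\ast) \in \Nat$. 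In the latter case every clock is shifted by the common delay, giving $\mathrm{frac}(\nu''(c)) = \mathrm{frac}(\nu(c) - \nu(c^\ast))$, and resetting sets some fractional parts to $0$. The cleanest way to see that iterating this never enlarges the admissible set is to track $T \bmod 1$, where $T$ is the total elapsed time: an easy induction shows that after each boundary move $T \bmod 1 \in \{\mathrm{frac}(-d) \mid d \in D\}$ (the reference clock $c^\ast$ is either never reset, forcing $T \equiv -\mathrm{frac}(\nu_0(c^\ast))$, or was reset at an earlier boundary time, inheriting that set by induction). Since a clock that is never reset has $\mathrm{frac}(\nu(c)) = \mathrm{frac}(\mathrm{frac}(\nu_0(c)) + T)$ and a reset clock has $\mathrm{frac}(\nu(c)) = \mathrm{frac}(T - T_0)$ with $T_0 \bmod 1$ in the same set, the stated form of every fractional part follows.

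Given finiteness of the reachable valuations, the reachable state set is finite and of size exponential in $|\pta|$. In each state the number of enabled actions is $O(|\act| \cdot |\Rr|)$ and each action has at most $2^{|\clocks|}\,|L|$ probabilistic successors (the sum in Definition~\ref{brg-def} ranges over $C \subseteq \clocks$); both quantities are exponential in $|\pta|$, and each successor $(\ell',\nu''_C,\region''_C)$ is computable in polynomial time, since $\nu''$ only requires the entry/exit time of the ray into $\region''$, while resets and region membership are polynomial. A breadth-first search therefore visits exponentially many states with exponential work each, giving total running time exponential in $|\pta|$. I expect the main obstacle to be the valuation invariant: making the reference-cancellation precise across resets, and correctly handling the $\inf$/$\sup$ boundary cases together with the clock-saturation boundary at $K$, is where the real care lies, while the counting and the termination of the search are routine.
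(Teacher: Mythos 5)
Your proposal is correct and rests on the same key insight as the paper's own proof: the only non-trivial point is that the valuations occurring in reachable states form a finite (exponential-size) set, which both arguments establish by confining all fractional parts to $\{\mathrm{frac}(d-d') \mid d,d'\in D\}$, where $D$ collects the fractional parts of the initial valuation together with $0$; the breadth-first construction and the counting of actions and successors are then routine, as you say. The two arguments differ only in the bookkeeping used to maintain this invariant. The paper works with the sorted \emph{fractional signature} of a valuation and shows that each transition replaces it by a subsequence of a $k$-shift, and that shifts compose (so at most $m$ shifted signatures ever arise); you instead track the total elapsed time $T$ modulo $1$, showing it stays in $\{\mathrm{frac}(-d)\mid d\in D\}$, and express each clock's fractional part as $\mathrm{frac}(\nu_0(c)+T)$ or $\mathrm{frac}(T-T_0)$ according to whether the clock was ever reset. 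Your per-clock accounting is more elementary and yields the slightly looser (but still exponential) bound of $(|\clocks|+1)^2$ candidate fractional parts per clock, whereas the paper's shift formalism keeps the entire signature as one of at most $|\clocks|$ cyclic shifts of the initial one. Both hinge on the same geometric fact, which you state correctly: along the time-elapse ray the region changes exactly when some clock reaches an integer (diagonal constraints being invariant under delay), so the $\inf$/$\sup$ target valuation either equals $\nu$ or has an integer-valued reference clock $c^\ast$ through which the common delay can be cancelled. The boundary cases you defer ($t=0$ moves, resets, saturation at $K$) do work out exactly as you anticipate, and the paper's own proof treats them at a comparable level of detail.
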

\begin{proof}
  The most demanding part of the proof is to show that there is a set $V$ of valuations that has exponential size and contains $\nu$ for
  any state $(\ell, \nu, \region)$ reachable in the sub-graph of $\RegB{\pta}$.

  For $r \in \Rplus$ we write $\FRAC{r}$  for the
  fractional part of $r$, i.e. $r {-}\FLOOR{r}$. 
  For a clock valuation $\nu$ we define its fractional signature~$\FSIG{\nu}$
  to be the sequence $(f_0, f_1, \dots, f_m)$ such that $f_0 {=} 0$, $f_i {<} f_j$
  if $i {<}  j$, for all $i, j \leq m$, and $f_1, f_2, \dots, f_m$ are all the
  non-zero fractional parts of clock values in the clock valuation $\nu$.
  In other words, for every $i {\geq} 1$ there is a clock $c$ such
  that  $\FRAC{\nu(c)} {=} f_i$, and for every clock $c \in \clocks$
  there is $i {\leq} m$ such that $\FRAC{\nu(c)} {=} f_i$.

  Let $\oplus$ denote addition modulo $m$.
  For $0 {\leq} k {\leq} m$ we define the $k$-shift of a
  fractional signature $(f_0, f_1, \ldots, f_m)$ as the fractional
  signature $(f'_0, f'_1, \ldots, f'_m)$
  such that for all $0\le i \leq m$ we have
  $f'_i = \FRAC{f_{i {\oplus} k} + 1 {-} f_k}$.  
  Note that a $k''$-shift
  $(f''_{0}, \ldots, f''_{m})$
  of a $k'$-shift
  $(f'_{0}, \ldots, f'_{m})$
  of
  $(f_0, f_1, \ldots, f_m)$
  is an $(k'\oplus k'')$-shift of 
  $(f_0, f_1, \ldots, f_m)$
  because for any $i$ we have:
  \begin{align*}
  f''_{i}
  =& \; \FRAC{f'_{i\oplus k''} {+} 1 {-} f'_{k''}} \\
  =& \; \FRAC{\FRAC{f_{i\oplus k'' \oplus k'} {+} 1{-}f_{k'}} {+} 1 {-} \FRAC{f_{k''\oplus k'} {+} 1{-}f_{k'}}} \\
  =& \;  \FRAC{f_{i\oplus (k''\oplus k')} {+} 1 {-} f_{k'' \oplus k'}} \,.
  \end{align*}
  This means that, by successive application of shifts, only $m$ different fractional signatures can be obtained. We further say that a fractional signature $(f_0', f_1', \ldots, f_n')$ is a
  subsequence of another fractional signature
  $(f_0, f_1, \ldots, f_m)$ if $n {\leq} m$ and for all $i {\le} n$
  there exists $j {\leq} m$ such that $f_i' {=} f_j$.

  For any state $(\ell, \nu, \region)$ of the BRA $\RegB{\pta}$, we claim that it is only possible to transition to states
  $(\ell',\nu',\region')$ such that $\FRAC{\nu'}$ is a subsequence of a $k$-shift of $\FRAC{\nu}$, for some $k$.
  To see that, notice that the $\nu_\alpha$ in the definition of $\hp_\star$ (Definition~\ref{brg-def})
  satisfies that $\FRAC{\nu_\alpha}$ is a $k$-shift of $\FRAC{\nu}=(f_0,\ldots f_m)$ for $k$ chosen so that $f_m$ is the fractional
  part of clocks that have integer value in $\nu_\alpha$. Subsequently resetting clocks gives rise to a subsequence of a fractional signature,
  and so $\FRAC{\nu'}$ (for $\nu'$ from the defining sum of $\hp_\star$) is a subsequence of $\FRAC{\nu_\alpha}$. \qed
\end{proof}
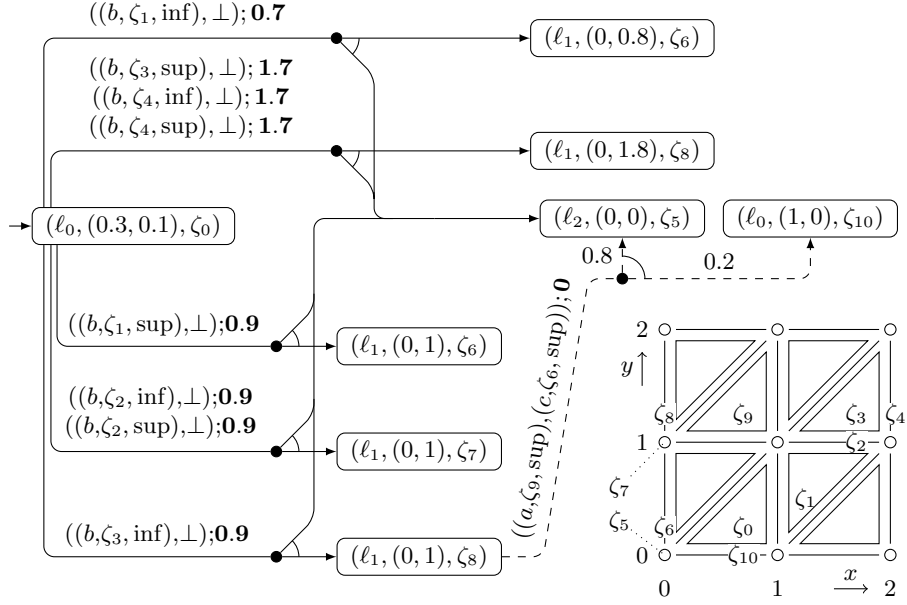
\begin{figure}[t]
  \centering
    {
\begin{tikzpicture}
\tikzstyle{every node}=[font=\small]
\tikzstyle{loc}=[rounded corners=3pt,draw,minimum size=1.4em,inner sep=0em]
\tikzstyle{probloc}=[draw,circle,fill=black,minimum size=.4em,inner sep=0em]
\tikzstyle{trans}=[-latex, rounded corners]
 \node[loc] at (0,0) (l0) {$\begin{array}{c}(\ell_0, (0.3,0.1),\region_0)\end{array}$};
 \node[loc] at (6.5,2.5) (l1) {$\begin{array}{c}(\ell_1, (0,0.8), \region_6)\end{array}$};
 \node[loc] at (6.5,1) (l2) {$\begin{array}{c}(\ell_1, (0,1.8), \region_8)\end{array}$};
 \node[loc] at (3.8,-1.6) (l3) {$\begin{array}{c}(\ell_1, (0,1), \region_6)\end{array}$};
 \node[loc] at (3.8,-3) (l4) {$\begin{array}{l}(\ell_1, (0,1), \region_7)\end{array}$};
 \node[loc] at (3.8,-4.4) (l5) {$\begin{array}{l}(\ell_1, (0,1),\region_8 )\end{array}$};

 \node[loc] at (9,0.1) (l6) {$\begin{array}{l}(\ell_0, (1,0),\region_{10})\end{array}$};

 \node[loc] at (6.5,0.1) (lf) {$\begin{array}{l}(\ell_2, (0,0),\region_5)\end{array}$};

 \node[probloc] at (2.7,2.5) (p1) {};
 \node[probloc] at (2.7,1) (p2) {};
 \node[probloc] at (1.9,-1.6) (p3) {};
 \node[probloc] at (1.9,-3) (p4) {};
 \node[probloc] at (1.9,-4.4) (p5) {};

 \node[probloc] at (6.5,-0.7) (p6) {};

 \draw[trans] ($(l0.-180) - (0.3,0)$) -- (l0);
 \draw[rounded corners] (l0.-192)|-(p1) node[pos=0.75,above]
  {$\begin{array}{c}((b,\region_1,\inf),\bot);\mathbf{0.7}\end{array}$};   
 \draw[rounded corners] (l0.-193)|-(p2) node[pos=0.75,above] {$\begin{array}{c}((b,\region_3,\sup),\bot);\mathbf{1.7}\\((b,\region_4,\inf),\bot);\mathbf{1.7}\\((b,\region_4,\sup),\bot);\mathbf{1.7}\end{array}$};
 \draw[rounded corners] (l0.194)|-(p3) node[pos=.75,above] {$((b{,}\region_1,\sup){,}{\bot}){;}\mathbf{0.9}$};
 \draw[rounded corners] (l0.193)|-(p4) node[pos=.75,above] {$\begin{array}{c}((b{,}\region_2,\inf){,}{\bot}){;}\mathbf{0.9}\\((b{,}\region_2,\sup){,}{\bot}){;}\mathbf{0.9}\end{array}$};
 \draw[rounded corners] (l0.192)|-(p5) node[pos=.75,above] {$((b{,}\region_3,\inf){,}{\bot}){;}\mathbf{0.9}$};

 \draw[rounded corners] (p1) -- +(0.5,-0.5) |- (4,0.1);
 \draw[rounded corners] (p2) -- +(0.5,-0.5) -- +(0.5,-0.55);
 \draw[rounded corners] (p3) -- +(0.5,0.5) -- +(0.5,0.7);
 \draw[rounded corners] (p4) -- +(0.5,0.5) -- +(0.5,0.7);
 \draw[rounded corners] (p5) -- +(0.5,0.5) |- (3.45,0.1);
 \draw[trans] (4,0.1) -- (lf);

 \draw[trans] (p1) -- (l1);
 \draw[trans] (p2) -- (l2);
 \draw[trans] (p3) -- (l3);
 \draw[trans] (p4) -- (l4);
 \draw[trans] (p5) -- (l5);
 
 \draw[trans, dashed] (p6) -- (lf) node[midway,left] {$0.8$};
 \draw[trans, dashed] (p6) -| (l6) node[pos=0.25,above] {$0.2$};

 \draw (p1)+(0.3,0) arc (0:-45:0.3);
 \draw (p2)+(0.3,0) arc (0:-45:0.3);
 \draw (p3)+(0.3,0) arc (0:45:0.3);
 \draw (p4)+(0.3,0) arc (0:45:0.3);
 \draw (p5)+(0.3,0) arc (0:45:0.3);
 \draw (p6)+(0,0.3) arc (90:0:0.3);
 

 \draw[dashed,rounded corners] (l5.0) -- ++(0.55,0) -- +(0.6,3.7) node [pos=0.52,sloped,above] {$\begin{array}{l}((a{,}\region_9,\sup){,}
(c{,}\region_6{,}\sup));\mathbf{0}\end{array}$} -- (p6);

\node (X) at (8.3,-3.1) {
\begin{tikzpicture}[x=1.5cm,y=1.5cm]
\foreach \i in {0,1}
{
 \foreach \j in {0,1}
 {
   \draw (\i+0.1,\j+0) -- (\i+0.9,\j+0);
   \draw (\i+0,\j+0.1) -- (\i+0,\j+0.9);
   \draw (\i,\j) circle (0.05);
   \draw (\i+0.2,\j+0.1) -- (\i+0.9,\j+0.1) -- (\i + 0.9, \j + 0.8) -- cycle;
   \draw (\i+0.1,\j+0.2) -- (\i+0.8,\j+0.9) -- (\i + 0.1, \j + 0.9) -- cycle;
   \draw (\i+0.1,\j+0.1) -- (\i+0.9,\j+0.9);
 }  
}
\foreach \i in {0,1}
{
 \foreach \j in {2}
 {
   \draw (\i+0.1,\j+0) -- (\i+0.9,\j+0);
   \draw (\i,\j) circle (0.05);
 }  
}
\foreach \i in {2}
{
 \foreach \j in {0,1}
 {
   \draw (\i+0,\j+0.1) -- (\i+0,\j+0.9);
   \draw (\i,\j) circle (0.05);
 }  
}
\draw (2,2) circle (0.05);

 \node[font=\small] at (0.7,0.25) {$\region_0$};
 \node[font=\small] at (1.25,0.5) {$\region_1$};
 \node[font=\small,fill=white,inner sep=0] at (1.7,1) {$\region_2$};
 \node[font=\small] at (1.7,1.25) {$\region_3$};
 \node[font=\small,fill=white,inner sep=0] at (2.05,1.25) {$\region_4$};
 \node[font=\small,fill=white,inner sep=0] (r5) at (-0.4,0.3) {$\region_5$};
 \draw[dotted] (r5) -- (0,0);
 \node[font=\small,fill=white,inner sep=0] at (0,0.25) {$\region_6$};
 \node[font=\small,fill=white,inner sep=0] (r7) at (-0.4,0.6) {$\region_7$};
 \draw[dotted] (r7) -- (0,1);
 \node[font=\small,fill=white,inner sep=0] at (0,1.25) {$\region_8$};
 \node[font=\small] at (0.7,1.25) {$\region_9$};
 \node[font=\small,fill=white,inner sep=0] at (0.7,0) {$\region_{10}$};

 \node at (-0.2,0) {$0$};
 \node at (-0,-0.3) {$0$};
 \node at (-0.2,1) {$1$};
 \node at (-0.2,2) {$2$};
 \node at (1,-0.3) {$1$};
 \node at (2,-0.3) {$2$};
\draw[->] (1.5,-0.3) -- (1.8,-0.3) node[midway,above] {$x$};
\draw[->] (-0.2,1.5) -- (-0.2,1.8) node[midway,left] {$y$};
\end{tikzpicture}};
\end{tikzpicture}
}
    \vspace*{-0.4cm}
    \caption{Sub-graph of the boundary region abstraction for the PTGA of
      Figure~\ref{fig:example}, with the region names as depicted in the bottom right corner.}\label{fig:finboundary}  
\end{figure}
\begin{example}\label{bra-example}
Returning to Example~\ref{ptga-example} (see Figure~\ref{fig:example}), a sub-graph of BRA reachable from $(\ell_0, (0.3,0.1), 0{<}y {<} x{<}1)$ for the PTGA of Figure~\ref{fig:example} is shown in Figure~\ref{fig:finboundary}. The names of the regions correspond to the regions depicted in the bottom right corner. Edges are labelled $(a, \region, \oper)$ and the intuitive meaning is to wait until we reach the lower or upper (depending on $\oper$) boundary of the region. For some regions, for example $\region_4$, the boundaries coincide and we keep this redundancy to simplify the notation. Considering the region $\region_1$, we see that it is determined by the constraints $(1{<}x{<}2) {\wedge} (0{<}y{<}1)  {\wedge} (y{<}x{-}1)$. The bold numbers on edges correspond to the time delay before the action labelling the edge is taken. Figure~\ref{fig:finboundary} includes the actions available in the initial state and one of the action pairs that are available in the state $(\ell_1, (0,1), (x{=}0){\wedge}(1{<}y{<}2))$. To simplify the figure, the probabilities that are equal to $0.5$ are omitted.
\end{example}

\section{Decidability of the Reachability-Time Problem}\label{sec:reach} 

\noindent
In this section we show decidability of the reachability-time
problem, which is the main result of the paper. The result
is formalised in the following theorem.
\begin{theorem}\label{theorem:main-result}
Let $\pta$ be a PTGA. The reachability-time problem for infinite-horizon objectives in $\pta$ is in NEXPTIME$\cap$co-NEXPTIME.
\end{theorem}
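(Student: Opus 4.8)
The plan is to reduce the infinite-horizon reachability-time problem over the SGA semantics $\sem{\pta}$ to the same problem over the boundary region abstraction $\RegB{\pta}$, and then to exploit the finiteness of the reachable sub-graph of the BRA together with Theorem~\ref{theorem:finite-games}. Concretely, for the initial state $s=(\ell,\nu)$ and its abstraction $\RegB{s}=(\ell,\nu,[\nu])$ I would establish the chain of equalities
\[
\UVAL_{\sem{\pta}}(s) = \lim_{n\to\infty}\UVAL^n_{\sem{\pta}}(s) = \lim_{n\to\infty}\UVAL^n_{\RegB{\pta}}(\RegB{s}) = \UVAL_{\RegB{\pta}}(\RegB{s}) \, .
\]
Given this, the decision $\UVAL_{\sem{\pta}}(s) \le B$ reduces to the corresponding decision on $\RegB{\pta}$.

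I would first dispatch the algorithmic consequence, assuming the semantic equality above. By Proposition~\ref{prop:reachable-subgraph-is-finite-game}, the sub-graph of $\RegB{\pta}$ reachable from $\RegB{s}$ is a finite SGA constructible in time exponential in the size of $\pta$. By Theorem~\ref{theorem:finite-games}, the reachability-time problem over this finite SGA lies in NP $\cap$ co-NP. Since the finite SGA is itself of exponential size, composing the exponential blow-up with the NP $\cap$ co-NP procedure places the problem for $\pta$ in NEXPTIME $\cap$ co-NEXPTIME: a nondeterministic machine guesses a certificate of polynomial size in the (exponential-size) BRA and verifies it in exponential time, and the symmetric argument applies to the complementary problem.

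The substantive part, and the main obstacle, is the middle equality: that the step-bounded values on $\sem{\pta}$ and on $\RegB{\pta}$ agree. I would prove by induction on $n$ that $\UVAL^n_{\sem{\pta}}(\ell,\nu) = \UVAL^n_{\RegB{\pta}}(\ell,\nu,[\nu])$, simultaneously maintaining the invariant that each $\UVAL^n$ is \emph{non-expansive within a region}. The delicate point is that in $\sem{\pta}$ the players propose arbitrary real delays, whereas in $\RegB{\pta}$ they only select a target region together with an $\inf$/$\sup$ operator; I would argue that (near-)optimal behaviour always takes moves arbitrarily close to a region boundary, so that the continuum of delays in $\sem{\pta}$ is faithfully captured by the boundary moves of $\RegB{\pta}$. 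To carry the induction through cleanly I would phrase the value functions as \emph{quasi-simple functions}, a class rich enough to represent the values of the Bellman recursion (Definition~\ref{bell-def}) yet closed under the value-iteration operator and forcing the non-expansiveness needed at each inductive step.

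Finally, I would bridge from step-bounded to infinite-horizon values on both arenas. Lemma~\ref{lemma:ih-at-least-fh} already gives $\UVAL \ge \lim_n\UVAL^n$, so only the converse remains. Here I would use the non-expansiveness established above to show that $\lim_n\UVAL^n$ is a solution of the optimality equations $\UOpt$, and then invoke Lemma~\ref{lemma:value-under-fixpoint} together with the Knaster--Tarski characterisation of the fixpoints to conclude $\UVAL \le \lim_n\UVAL^n$. The hardest step throughout is controlling the interaction between the uncountable state space and the limit: without the uniform control afforded by non-expansiveness, the exchange of the limit with the $\inf$/$\sup$ optimisation in the Bellman recursion need not be valid, which is exactly why the generic fixpoint theorems discussed in the introduction cannot be applied off the shelf.
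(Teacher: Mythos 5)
Your proposal is correct and follows essentially the same route as the paper: reduction to the boundary region abstraction via the chain $\UVAL_{\sem{\pta}}(s) = \lim_{n}\UVAL^n_{\sem{\pta}}(s) = \lim_{n}\UVAL^n_{\RegB{\pta}}(\RegB{s}) = \UVAL_{\RegB{\pta}}(\RegB{s})$, with the middle equality proved by induction using quasi-simple functions and regional non-expansiveness, and the outer equalities handled via Lemma~\ref{lemma:ih-at-least-fh}, Lemma~\ref{lemma:value-under-fixpoint} and the Knaster--Tarski characterisation, before composing Proposition~\ref{prop:reachable-subgraph-is-finite-game} with Theorem~\ref{theorem:finite-games} for the NEXPTIME $\cap$ co-NEXPTIME bound. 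This is precisely the decomposition the paper uses (Theorem~\ref{theorem:main-thm-red} plus Lemmas~\ref{lemma:pn-brg-qs}--\ref{lemma:lim-fixpoint} and Theorem~\ref{thm:lim-is-val}).
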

The crucial, and most demanding, step of the proof of Theorem~\ref{theorem:main-result} is proving
that the problems on PTGAs can be reduced to
problems on BRAs. This fact is formalised in Theorem~\ref{theorem:main-thm-red}.
Theorem~\ref{theorem:main-result} then follows straightforwardly from Theorem~\ref{theorem:main-thm-red},
Proposition~\ref{prop:reachable-subgraph-is-finite-game} and
Theorem~\ref{theorem:finite-games}.
\begin{theorem}\label{theorem:main-thm-red}
Let $\pta$ be a PTGA and $\RegB{\pta}$ the corresponding BRA. The answers to the reachability-time problems for $\pta$ and $\RegB{\pta}$ are the same.
\end{theorem}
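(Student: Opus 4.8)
The reachability-time problem asks, for a fixed bound $B$, whether $\UVAL(s) \le B$; since $B$ is common to both games and the initial state $s{=}(\ell,\nu)$ of $\sem{\pta}$ corresponds to the state $\RegB{s}{=}(\ell,\nu,[\nu])$ of $\RegB{\pta}$, it suffices to prove the value identity
\[
\UVAL_{\sem{\pta}}(\ell,\nu) = \UVAL_{\RegB{\pta}}(\ell,\nu,[\nu])
\]
for every configuration $(\ell,\nu)$. The plan is to establish this in three stages: first match the two games at the level of every finite-horizon value $\UVAL^n$ while simultaneously proving a regularity property of these functions; then pass to the limit $n{\to}\infty$ using that regularity; and finally connect the common limit to the two infinite-horizon values.

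For the finite-horizon stage I would prove, by induction on $n$, that
\[
\UVAL^n_{\sem{\pta}}(\ell,\nu) = \UVAL^n_{\RegB{\pta}}(\ell,\nu,[\nu])
\]
and that each $\UVAL^n_{\sem{\pta}}$ is a \emph{quasi-simple} function (a class I would define for this purpose) -- in particular, non-expansive on every clock region. The base case $n{=}0$ is immediate from Definition~\ref{obj-def}. The inductive step unfolds the Bellman equation of Definition~\ref{bell-def} on both sides, so the heart of the argument is to show that the continuous optimisation $\inf_{(t,a)}\sup_{(t',b)}$ over timed actions in $\sem{\pta}$ agrees with the discrete optimisation over boundary actions $(a,\region'',\oper)$ in $\RegB{\pta}$. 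This rests on two facts supplied by the induction hypothesis: (i) because $\UVAL^n$ is quasi-simple, the quantity $t + \sum_{s'\in S} p(s,\alpha,\beta)(s'){\cdot}\UVAL^n(s')$, as the target region is fixed and $t$ ranges over it, attains its extrema at the lower or upper boundary of that region, which is exactly what the operators $\inf$ and $\sup$ of the BRA encode; and (ii) the comparison of delays performed by $\winner$ on timed actions coincides with the comparison $\hwinner$ on the corresponding boundary actions. Quasi-simplicity is then shown to be preserved by the value-iteration operator: convex combinations (the probabilistic sum) preserve non-expansiveness, adding the delay $\tau$ preserves it because the delay grows with slope one along the time direction, and the $\inf$/$\sup$ over actions preserves it as well.

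Because delays are non-negative, $\UVAL^n$ is non-decreasing in $n$, so the limits $\lim_n \UVAL^n_{\sem{\pta}}$ and $\lim_n \UVAL^n_{\RegB{\pta}}$ exist in $\Rplus^\infty$, and by the finite-horizon stage they agree, state by state, under the correspondence $(\ell,\nu)\leftrightarrow(\ell,\nu,[\nu])$. Non-expansiveness is inherited by the pointwise limit, and this uniform regularity is precisely what lets me show that $W \rmdef \lim_n \UVAL^n_{\sem{\pta}}$ is a solution of the optimality equations, $W \models \UOpt_{\sem{\pta}}$: non-expansiveness within regions permits interchanging the limit with both the countable probabilistic sum and the $\inf_{\alpha}\sup_{\beta}$ over the uncountable action sets, which a generic monotone-operator argument cannot guarantee. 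The Knaster--Tarski theorem is used here to locate $W$ among the fixpoints of the value-iteration operator. It then remains to tie $W$ to the infinite-horizon values. For $\RegB{\pta}$, Proposition~\ref{prop:reachable-subgraph-is-finite-game} restricts attention to a finite sub-SGA, on which value iteration converges to the value (as discussed around Theorem~\ref{theorem:finite-games}), yielding $\UVAL_{\RegB{\pta}} = \lim_n \UVAL^n_{\RegB{\pta}}$. For $\sem{\pta}$, Lemma~\ref{lemma:value-under-fixpoint} applied to $W \models \UOpt_{\sem{\pta}}$ gives $\UVAL_{\sem{\pta}} \le W$, while Lemma~\ref{lemma:ih-at-least-fh} gives $\UVAL_{\sem{\pta}} \ge \lim_n \UVAL^n_{\sem{\pta}} = W$; hence $\UVAL_{\sem{\pta}} = W$. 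Chaining the equalities delivers the required value identity, and therefore the equality of the two answers.

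The main obstacle is the interplay between the finite-horizon stage and the fixpoint step. Proving the inductive matching of the two Bellman operators requires reconciling the continuous time optimisation with the boundary $\inf$/$\sup$ choices while carrying the quasi-simplicity invariant through the operator, and the subsequent claim that $W \models \UOpt_{\sem{\pta}}$ -- i.e.\ that the limit may be pushed inside $\inf_\alpha \sup_\beta$ over an uncountable action set -- is exactly the gap that the Banach, Kleene and Knaster--Tarski theorems do not close on their own. It is for this step that the non-expansiveness furnished by quasi-simple functions is indispensable, and I expect the bulk of the technical work to concentrate there.
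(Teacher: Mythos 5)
Your proposal is correct and follows essentially the same route as the paper: an induction on $n$ establishing both $\UVAL^n_{\sem{\pta}}(\ell,\nu)=\UVAL^n_{\RegB{\pta}}(\ell,\nu,[\nu])$ and regional quasi-simplicity (Lemmas~\ref{lemma:pn-brg-qs} and~\ref{lemma:pn-qs}, with the boundary-extrema fact being Lemma~\ref{lem:supinf-opt}), followed by the non-expansiveness of the limit (Lemma~\ref{lemma:lim-nonexp}), the fixpoint property of the limit via an $\varepsilon$-net of delays (Lemma~\ref{lemma:lim-fixpoint}), and the sandwich argument combining Lemmas~\ref{lemma:value-under-fixpoint} and~\ref{lemma:ih-at-least-fh} with the finite reachable sub-graph of the BRA. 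You have correctly identified that the non-expansiveness supplied by quasi-simplicity is the ingredient that the generic fixpoint theorems cannot provide, which is exactly where the paper concentrates its technical effort.
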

The remainder of this paper is devoted to the proof of Theorem~\ref{theorem:main-thm-red}.
First, in Section~\ref{sect:qsfunct} we introduce quasi-simple functions and prove some of their properties. Then, in Section~\ref{sect:proof-core} we show that values in the games we study can be characterised using quasi-simple functions, and that this allows us to establish the correspondence between PTGA and its boundary region abstraction. 

For the remainder of this section, unless otherwise specified, we fix a PTGA $\pta= (L, \clocks, \inv, \act_\mMIN, \act_\mMAX, E, \delta)$, set of target locations $F_L$, suppose the semantics
of $\pta$ is given by:
\[
\sem{\pta} = (S,
A_\mMIN, A_\mMAX,  p_\mMIN, p_\mMAX, \winner, \tau_\mMIN, \tau_\mMAX)
\]
with corresponding target set $F {=} \{ (\ell,\nu) \in S \mid \ell \in F_L \}$
and the boundary region abstraction of $\pta$ is given by 
\[
\hT = (\hS, \hA_\mMIN,
\hA_\mMAX, \hp_\mMIN, \hp_\mMAX, \hwinner, \htau_\mMIN, \htau_\mMAX ) 
\]
with corresponding target set $\hF{=}\{ (\ell,\nu,\zeta) \in \hS \mid \ell \in F_L \}$.

\subsection{Quasi-simple Functions}\label{sect:qsfunct}

\noindent 
To prove properties of controllers for (non-probabilistic) timed systems, Asarin and Maler~\cite{AM99} introduced simple functions, a finitely
representable class of functions with the property that every decreasing
sequence is finite. We define these functions here and show that they are not sufficient for our purpose.
\begin{definition}[Simple Functions]
  Given a set of valuations $X {\subseteq} V$, a function $f : X {\to} \Rplus$ is \emph{simple} if
  there exists $e \in \Nat$  
  and either $f(\nu) {=} e$ for all $\nu \in X$, or there
  exists a clock $c \in C$ such that $f(\nu) {=} e {-} \nu(c)$ for all
  $\nu \in X$. Furthermore, a function $f : \hS {\to} \Rplus$ is regionally simple if $f(\ell, \cdot ,
\region)$ is simple for all $\ell  \in L$ and  $\region \in \Rr$.
\end{definition}
For timed games, Asarin and Maler showed that upper values for
$n$-step reachability-time objectives are regionally simple, and because the fixpoint is reached for some $n$
the upper value for reachability-time objective is regionally simple. 
Also, using the properties of simple functions,~\cite{JT07} shows that, for a
non-probabilistic game reachability-time objectives, the optimal strategies are
\emph{regionally positional}, i.e., in every state of a region the strategy
chooses the same action. Unfortunately, in the case of PTGAs, applying the value improvement function does not necessarily preserve regional-simplicity.  Moreover, as the example below demonstrates, neither is the value of the game necessarily regionally simple nor optimal strategies regionally positional.  
\begin{figure}[t]
  \centering
  {\small
\begin{tikzpicture}[node distance=2cm]
\tikzstyle{minloc}=[draw,circle,minimum size=1.4em,inner sep=0em]
\tikzstyle{maxloc}=[draw,minimum size=1.4em,inner sep=0em]
\tikzstyle{probloc}=[draw,circle,fill=black,minimum size=.4em,inner sep=0em]
\tikzstyle{trans}=[-latex, rounded corners]
 \node[minloc, label={left:$0 {\le} x {\le} 1$}] at (0,0) (l0) {$\ell_0$};
 \node[minloc] at (3,0) (l1) {$\ell_1$};
 \node[probloc] at (5,0) (p1) {};
 \node[minloc, fill=gray!20!white] at (6,0) (l2) {$\ell_2$};

 \node[] at (6.9,0.2) {$x{\geq}1$};
 \node[] at (6.95,-0.2) {$x{:=}0$};

 \node[probloc] at (1.5,0.8) (pa) {};
 \node[probloc] at (3,-0.8) (pb) {};

 \draw (pa)+(0.3,0) arc (0:-90:0.3);

 \draw[trans] (l0) |- (pa) node[pos=0.75,above] {$a$};
 \draw[trans] (l0) |- (pb) node[pos=0.65,below] {$b,\,x{=}1$};
 \draw[trans] (pa) -| (l2) node[pos=0.4,below] {$0.5$};
 \draw[trans] (pa) |- (l1) node[pos=0.75,above] {$x{:=}0$} node[pos=0.25,left] {$0.5$};
 \draw[trans] (pb) -| (l2);

 \draw[trans] (l1) -- (p1) node[midway,above] {$x{=}1$} ;
 \draw[trans] (p1) -- (l2);
 \draw[trans] (l2) -- +(0.5,0.5) -- +(0.5,-0.5) -- (l2);

\end{tikzpicture}
}
  \caption{Example demonstrating optimal strategies are not
    regionally positional.
    \label{fig:nonreg}}  
\end{figure}
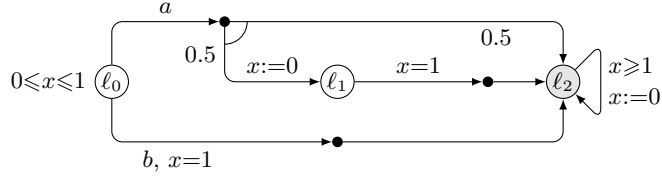
\begin{example} 
  Consider the one-player PTGA shown in
  Figure~\ref{fig:nonreg}.
  Observe that, for every state $(\ell_0,\nu)$ in the region 
  $(\ell_0, 0{<}x{<}1)$,
  the optimal expected time to reach $\ell_2$ equals  
  \[\min \left\{ \inf\nolimits_{t \geq 0} 
  \{ t + 0.5 {\cdot} 1 + 0.5 {\cdot} 0 \}, 1 {-} \nu(x) \right\} 
  = \min \{0.5, 1 {-} \nu(x) \}.
  \]
  Hence, the values in PTGA with reachability-time objectives may not be regionally simple. 
  Moreover, the optimal strategy is not regionally positional, since if
  $\nu(x) {\leq} 0.5$ then the optimal strategy is to take action $a$
  immediately, while otherwise the optimal strategy is to wait until
  $\nu(x) {=} 1$ and then take action $b$. 
\end{example}
Due to these results it is not possible to work with simple functions. Our proof instead relies on regional non-expansiveness of value functions. Given $X\subseteq V$, a function
$f:X {\to} \Rplus^\infty$ is {\em non-expansive} if for all $x,y\in X$ we have $\norm{f(x) {-} f(y)} \le \norm{x {-} y}$.
A function $f:\hS {\to} \Rplus^\infty$ is {\em regionally non-expansive} if $f(\ell, \cdot,\region)$ is non-expansive for
any $\ell$ and $\region$, and similarly any $f:S {\to} \Rplus^\infty$ is regionally non-expansive if
$f(\ell,\cdot)$ is non-expansive when its domain is restricted to a single region.

The proof direction that we take requires us to establish that
$\lim_{n\rightarrow\infty} \UVAL^n$ is non-expansive. To do this, we will show that for each $n \in \Nat$ the function $\UVAL^n$ is non-expansive. However, a direct proof by
induction would fail and instead we are required to prove a stronger claim about the functions $\UVAL^n$. To do this, we first 
introduce \emph{quasi-simple functions}.
\begin{definition}[Quasi-Simple Functions]\label{quasi-def}
  Let $X \subseteq V$ be a set of clock valuations. 
  The class of quasi-simple functions is built by first defining
  every simple function to be quasi-simple, and then inductively by stipulating
  that convex combination, maximum and minimum of finitely many quasi-simple functions
  are quasi-simple.
\end{definition}
A function $f : \hS {\to} \Rplus^\infty$ is regionally quasi-simple if $f(\ell, \cdot ,
\region)$ is quasi-simple for all $\ell  \in L$ and  $\region \in \Rr$,
and any $f:S {\to} \Rplus^\infty$ is regionally quasi-simple if
$f(\ell,\cdot)$ is quasi-simple when its domain is restricted to a single region.

We will later show that functions $\UVAL^n_{\sem{\pta}}:S {\to} \Rplus$ and $\UVAL^n_{\hT}:\hS {\to} \Rplus$ for $n \in \Nat$ are regionally quasi-simple. From this using the lemma below we can then demonstrate that these functions are non-expansive.
\begin{lemma}\label{lem:non-expansive}
Every quasi-simple function is non-expansive.
\end{lemma}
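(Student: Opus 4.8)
The plan is to proceed by structural induction following the inductive definition of quasi-simple functions (Definition~\ref{quasi-def}): the base case establishes non-expansiveness for simple functions, and the inductive step shows that the three closure operations---convex combination, maximum, and minimum---each preserve non-expansiveness. Since every quasi-simple function is $\Rplus$-valued (simple functions map into $\Rplus$, and all three operations preserve finiteness of the values), the max-norm definition of non-expansiveness applies throughout without complications from the value $\infty$.

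For the base case, a constant simple function $f(\nu) {=} e$ trivially satisfies $\norm{f(x) {-} f(y)} {=} 0 \le \norm{x {-} y}$. For a simple function of the form $f(\nu) {=} e {-} \nu(c)$, I would compute directly that $\norm{f(x) {-} f(y)} = |x(c) {-} y(c)| \le \max_{c' \in \clocks} |x(c') {-} y(c')| = \norm{x {-} y}$, using that the max norm dominates every coordinate difference.

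For the inductive step, suppose $f_1, \ldots, f_n$ are quasi-simple and hence, by the induction hypothesis, non-expansive. For a convex combination $f = \sum_{i=1}^n p_i {\cdot} f_i$ with $\sum_i p_i {=} 1$ and each $p_i {\geq} 0$, the triangle inequality gives $\norm{f(x) {-} f(y)} \le \sum_i p_i {\cdot} \norm{f_i(x) {-} f_i(y)} \le \sum_i p_i {\cdot} \norm{x {-} y} = \norm{x {-} y}$. For $f = \max_i f_i$, I would fix $x, y$ and choose an index $j$ attaining $f(x) {=} f_j(x)$; then $f(x) {-} f(y) \le f_j(x) {-} f_j(y) \le \norm{x {-} y}$, and the symmetric argument (swapping the roles of $x$ and $y$) bounds $f(y) {-} f(x)$, so that $\norm{f(x) {-} f(y)} \le \norm{x {-} y}$. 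The case $f = \min_i f_i$ is entirely analogous.

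The argument is elementary, so I do not anticipate a genuine obstacle. The only point requiring slight care is the maximum/minimum case: one must select the index achieving the optimum at the single point $x$ (rather than a uniform maximiser valid at both points) and then combine the non-expansiveness of that one component with the suboptimality of the same index at $y$. With this observation the proof is a routine closure-under-operations induction.
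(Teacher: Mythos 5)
Your proposal is correct and follows essentially the same route as the paper: a structural induction over Definition~\ref{quasi-def}, with the same base case for simple functions and the same treatment of convex combinations. The only difference is cosmetic: for the max/min case you use the standard ``pick the index attaining the optimum at $x$ and exploit its suboptimality at $y$'' argument, whereas the paper reaches the same bound via a slightly more laborious explicit case analysis on which component attains the maximum at each of the two points.
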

\begin{proof}
Consider any quasi-simple function $f : X {\to} \Rplus^\infty$.
We will prove by induction on the structure of $f$ (see Definition~\ref{quasi-def}) that for any $\nu_1,\nu_2 \in X$ we have $|f(\nu_1) - f(\nu_2)| \le |\nu_1-\nu_2|$.
\begin{itemize}
\item
If $f$ is a simple function, then either $f$ is a constant, and hence:
\[
\norm{f(\nu_1) - f(\nu_2)} = 0 \le |\nu_1-\nu_2| 
\]
or $f = e - \nu(c)$ for some clock $c$, in which case:
\[
\norm{f(\nu_1) - f(\nu_2)} = \norm{\nu_2(c) - \nu_1(c)} \le \norm{\nu_2 - \nu_1} \,
\]
as required.
\item
If $f$ is a convex combination $p_1,\ldots ,p_n$ of quasi-simple functions $f_1,\ldots, f_n$, then:
\begin{align*}
\lefteqn{\hspace{-1cm} \norm{ \mbox{$\sum_{i=1}^n$} p_i {\cdot} f_i(\nu_1) - \mbox{$\sum_{i=1}^n$} p_i {\cdot} f_i(\nu_2)}  \le 
\norm{\mbox{$\sum_{i=1}^n$} p_i {\cdot} (f_i(\nu_1) - f_i(\nu_2))}} \\
\le & \; \mbox{$\sum_{i=1}^n$} p_i {\cdot} \norm{\nu_1 -\nu_2} \tag{by induction} \\
= & \; \norm{\nu_1 -\nu_2} \tag{since we are considering a convex combination}
\end{align*}
as required.
\item
If $f$ is the maximum of two quasi-simple functions $f_1$ and $f_2$, then without loss of generality we suppose $f_1(\nu_1) \ge f_2(\nu_1)$. In the case when
$f_1(\nu_2) \ge f_2(\nu_2)$ we have
\[
\norm{\max\{f_1(\nu_1),f_2(\nu_1)\} - \max\{f_1(\nu_2),f_2(\nu_2)\}} \le
\norm{f_1(\nu_1) - f_1(\nu_2)} \le \norm{\nu_1 - \nu_2}
\]
since $f_1$ is non-expansive. On the other hand, in the case when $f_1(\nu_2) < f_2(\nu_2)$:
\[
\norm{\max\{f_1(\nu_1),f_2(\nu_1)\} - \max\{f_1(\nu_2),f_2(\nu_2)\}} \le
\norm{f_1(\nu_1) - f_2(\nu_2)} \, .
\]
Now either $f_1(\nu_1) \ge f_2(\nu_2)$, and therefore we have:
\[
\norm{f_1(\nu_1) - f_2(\nu_2)}\le \norm{f_1(\nu_1) - f_1(\nu_2)}
\le \norm{\nu_1 - \nu_2}
\]
since $f_1$ is non-expansive, or $f_1(\nu_1) < f_2(\nu_2)$ in which case:
\[
\norm{f_1(\nu_1) - f_2(\nu_2)} \le \norm{f_2(\nu_1) - f_2(\nu_2)}
\le \norm{\nu_1 - \nu_2}
\]
since $f_2$ is non-expansive. Since these are all the cases to consider we have $f$ is non-expansive as required.
\item
If $f$ is minimum of two quasi-simple functions the proof follows similarly to the case when $f$ is the maximum of two quasi-simple functions.
\end{itemize}
Since these are the only cases to consider the proof is complete. \qed
\end{proof}
In the proofs below we will make use of several technical properties of quasi-simple functions. First, however, we require an alternative representation of quasi-simple functions in terms of parse trees.

Let $\Upsilon$ be the set of all parse trees whose 
leaves are simple functions and whose nodes are the operations: min, max and convex combination. Clearly, every tree $\Delta \in \Upsilon$ corresponds to a unique quasi-simple function which we will call $\mathit{qs}(\Delta)$. Conversely, every quasi-simple function corresponds to infinitely many trees from $\Upsilon$. The definition below gives us a unique representative.
\begin{definition}\label{def:representative-tree}
Let the rank of a quasi-simple function $f : X {\to} \Rplus^\infty$, denoted $\rank{f}$, be
the smallest $k$ such that there is a tree $\Delta\in \Upsilon$ of height $k$ such that $\mathit{qs}(\Delta) = f$.
For any quasi-simple function $f : X {\to} \Rplus^\infty$ we define a unique representative parse tree $\Delta_f$ by induction on the rank of $f$.
\begin{itemize}
\item
If $\rank{f}=0$, then let $\Delta_f$ to be any tree with height $0$ such that $\mathit{qs}(\Delta_f) = f$.
\item
If $\rank{f}=k{+}1$ for some $k \in \Nat$, there must be an operation $\mathit{op}$ (either min, max or convex combination) and integer $n$ such that $f$ is obtained by taking the $\mathit{op}$ of the quasi-simple functions $f_1,\ldots, f_{n}$, each of which has rank at most $k$. Therefore, by induction we have representatives $\Delta_{f_1},\ldots, \Delta_{f_n}$ for $f_1,\dots,f_n$. Let $\Delta_f$ be the tree with root $\mathit{op}$ and subtrees $\Delta_{f_1},\ldots, \Delta_{f_n}$. Clearly, by construction we have $\mathit{qs}(\Delta_f) = f$.
\end{itemize} 
\end{definition}
The first technical property of quasi-simple functions will allow us to establish that when we take a delay so that a boundary of a region is reached, quasi-simplicity is preserved.
\begin{lemma}\label{lemma:quasi-next-region}
Let $f : X {\to} \Rplus^\infty$ be a quasi-simple function, $c$ a clock and $i$ an integer such that $\nu(c) {\ge} i$ for all $\nu\in X$.
If $\felapse{f}  : X {\to} \Rplus^\infty$ is the function where for any $\nu \in X$ we have $\felapse{f}(\nu) = t_\nu {+} f(\nu {+}t_\nu)$ and $t_\nu = \nu(c) {-} i$, then $\felapse{f}$ is quasi-simple.
\end{lemma}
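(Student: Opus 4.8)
The plan is to prove the claim by induction on the structure of the quasi-simple function $f$, following the representative parse tree $\Delta_f$ of Definition~\ref{def:representative-tree} (equivalently, by induction on $\rank{f}$). The single fact that drives the argument is that the delay $t_\nu$ depends only on the valuation $\nu$ and not on $f$: hence $\felapse{\cdot}$ is the operator that precomposes a function with the fixed shift $\nu \mapsto \nu {+} t_\nu$ and then adds the scalar $t_\nu$. I would first record that, as a function of $\nu$, the delay $t_\nu$ must be a simple function of clock $c$ (that is, non-increasing in $\nu(c)$) and that the hypothesis relating $\nu(c)$ and $i$ guarantees $t_\nu {\ge} 0$, so that the shift is a legitimate elapse and $\felapse{f}$ stays non-negative.

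For the base case I would compute $\felapse{f}$ directly for the two forms of a simple function. If $f(\nu) {=} e {-} \nu(c')$ for some clock $c'$, then $f(\nu {+} t_\nu) {=} e {-} \nu(c') {-} t_\nu$, so the $+t_\nu$ added by $\felapse{\cdot}$ cancels and $\felapse{f} {=} f$ is again simple. If $f {\equiv} e$ is constant, then $\felapse{f}(\nu) {=} t_\nu {+} e$, which is simple precisely because $t_\nu$ is itself a simple function of clock $c$; this constant case is where the exact form of the delay, and hence the constraint tying $\nu(c)$ to $i$, is genuinely used. In both cases $\felapse{f}$ is simple and therefore quasi-simple.

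For the inductive step I would show that $\felapse{\cdot}$ commutes with each of the three operations of Definition~\ref{quasi-def}. For maximum and minimum this is immediate, since adding the common scalar $t_\nu$ to all arguments commutes with $\max$ and $\min$, giving $\felapse{\max_j f_j} {=} \max_j \felapse{f_j}$ and similarly for $\min$. For a convex combination $\sum_j p_j f_j$ it follows from $\sum_j p_j {=} 1$: writing $t_\nu {=} \sum_j p_j\, t_\nu$ lets the single additive term be distributed into the combination, so $\felapse{\sum_j p_j f_j} {=} \sum_j p_j \felapse{f_j}$. Since by induction each $\felapse{f_j}$ is quasi-simple, so is $\felapse{f}$. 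I expect the base case, rather than the induction, to be the main obstacle: one must verify that a constant simple function is mapped back into the simple functions and remains non-negative, which is exactly where the form of $t_\nu$ and the hypothesis on $\nu(c)$ and $i$ enter. A secondary, purely technical, point is checking that $\nu {+} t_\nu$ lies in the domain of $f$, so that $f(\nu {+} t_\nu)$ is defined.
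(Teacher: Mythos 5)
Your proof is correct and takes essentially the same route as the paper's: structural induction on the representative parse tree, with the two forms of simple function computed directly in the base case and the elapse operator shown to commute with $\min$, $\max$ and convex combination in the inductive step (the paper packages this as a modified tree $\Delta_f^{\mathit{mod}}$ in which each constant leaf $e$ is replaced by the simple function $(e{+}i) - \nu(c)$). One remark: you treat $t_\nu$ as non-increasing in $\nu(c)$, i.e.\ $t_\nu = i - \nu(c)$, which is what the paper's own proof of the constant case uses and what that case in fact requires, even though the lemma as printed writes $t_\nu = \nu(c) - i$ with $\nu(c) \ge i$; under the literal reading the constant case would produce $e + \nu(c) - i$, which is increasing in $\nu(c)$ and hence not simple, so your (and the paper's) implicit correction of the sign is the right call.
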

\begin{proof}
Consider any quasi-simple function $f : X {\to} \Rplus^\infty$ and let $\Delta_f$ be its representative parse tree constructed using Definition~\ref{def:representative-tree}. Let $\Delta_f^\mathit{mod}$ be the modified parse tree where any leaf labeled with a constant simple function $e$ is replaced by the non-constant simple function $e'-\nu(c)$, where $e'=e{+}i$.

We will prove that $\felapse{f} = \mathit{qs}(\Delta^\mathit{mod}_f)$, which demonstrates that $\felapse{f}$ is quasi-simple as required. The proof is by induction on the rank of $f$. 
If $\rank{f}=0$, then there are two cases to consider.
\begin{itemize}
 \item If $\Delta_f$ is a leaf labelled with a constant simple function which for any $\nu \in X$ returns $e$ for some $e \in \Nat$, then for any $\nu \in X$:
\begin{align*}
\felapse{f}(\nu) = & \; t_\nu {+} f(\nu {+} t_\nu) \\
= & \;  t_\nu {+} e \tag{by definition of $\Delta_f$} \\
= & \; i {-} \nu(c) {+} e  & \tag{by definition of $t_\nu$} \\
= & \; e' {-} \nu(c) \tag{by definition of $e'$}
\end{align*}
which equals $\mathit{qs}(\Delta^\mathit{mod}_g)(\nu)$ as required.
\item If $\Delta_f$ is a leaf labelled with a simple function which for any $\nu \in X$ returns $e {-} \nu(c')$ for some $e \in \Nat$ and clock $c'$, then we have for any $\nu \in X$:
\begin{align*}
\felapse{f}(\nu) = & \;  t_\nu {+} f(\nu {+} t_\nu) \\
= & \; t_\nu {+} e {-} (\nu(c') {+} t_\nu) \tag{by definition of $\Delta_f$} \\
= & \;e {-} \nu(c') \tag{rearranging}
\end{align*}
which again equals $\mathit{qs}(\Delta^\mathit{mod}_g)(\nu)$ as required.
\end{itemize}
For the inductive step, suppose $\rank{f} = k{+}1$ for some $k \in \Nat$ and for any quasi-simple function of rank less than or equal to $k$ the result holds.
Since $\rank{f}=k{+}1$ there must be an operation $\mathit{op}$ (either min, max or convex combination) and integer $n$ such that $f$ is obtained by taking the $\mathit{op}$ of some quasi-simple functions $f_1,\ldots, f_{n}$, each of which has rank at most $k$. Now, for any $\nu \in X$:
\begin{align*}
 \felapse{f}(\nu) = & \; t_\nu {+} f(\nu {+}t_\nu)\\
= & \;  t_\nu {+} \mathit{op}(f_1(\nu {+} t_\nu), \ldots, f_n(\nu {+} t_\nu))\tag{by definition of $f$}\\
= & \; \mathit{op}(t_\nu {+} f_1(\nu {+} t_\nu), \ldots, t_\nu {+} f_n(\nu {+} t_\nu)) \tag{rearranging}\\
= & \; \mathit{op}(\felapse{f_1}(\nu),\ldots, \felapse{f_n}(\nu)) \tag{by definition of $\felapse{f_i}$ for $1 {\le} i {\le} n$}\\
= & \; \mathit{op}(\mathit{qs}(\Delta_{f_1}^\mathit{mod})(\nu),\ldots, \mathit{qs}(\Delta_{f_n}^\mathit{mod})(\nu))\tag{by the inductive hypothesis}\\
= & \; \mathit{qs}(\Delta^\mathit{mod}_{f})(\nu) \tag{by definition of $\Delta^\mathit{mod}_{f}$}
\end{align*}
This completes the induction step, and hence the lemma holds.\qed
\end{proof}
The next lemma states that resetting clocks preserves quasi-simplicity.
\begin{lemma}\label{lemma:quasi-reset}
For any region $\region$ and quasi-simple function $g : \region_C {\to} \Rplus^\infty$, the
function $\freset{g} : \region {\to} \Rplus^\infty$ defined by $\freset{g}(\nu) = g(\nu_C)$ is quasi-simple.
\end{lemma}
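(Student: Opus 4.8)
The plan is to mirror the proof strategy of Lemma~\ref{lemma:quasi-next-region}: proceed by induction on $\rank{g}$, using the representative parse tree $\Delta_g$ from Definition~\ref{def:representative-tree} to drive the recursion, and show that composition with the reset map $\nu \mapsto \nu_C$ sends every leaf (a simple function) to a simple function and commutes with the internal operations min, max and convex combination. Equivalently, one may phrase the whole argument by building a modified tree $\Delta_g^{\mathrm{reset}}$ that replaces each leaf of $\Delta_g$ by its reset, and verifying $\freset{g} = \mathit{qs}(\Delta_g^{\mathrm{reset}})$, exactly as $\Delta_f^{\mathit{mod}}$ is used in Lemma~\ref{lemma:quasi-next-region}.

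First, for the base case $\rank{g}=0$, where $g$ is a single simple function on $\region_C$, I would split on the two forms a simple function can take. If $g$ is the constant $e$, then $\freset{g}(\nu) = g(\nu_C) = e$ is again constant, hence simple. If $g(\mu) = e {-} \mu(c)$ for some clock $c$, then $\freset{g}(\nu) = e {-} \nu_C(c)$, and here the key observation is that $\nu_C(c)$ equals $0$ when $c \in C$ and $\nu(c)$ when $c \notin C$; thus $\freset{g}$ is either the constant $e$ (when $c$ is reset) or the non-constant simple function $e {-} \nu(c)$ (when $c$ is not reset). In both cases $\freset{g}$ is simple, establishing the base case. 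This base case --- detecting that resetting the relevant clock collapses $e {-} \mu(c)$ into a constant --- is where essentially all the content of the lemma resides, and it is the only step I expect to require care.

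For the inductive step, suppose $\rank{g} = k{+}1$, so that $g = \mathit{op}(g_1,\ldots,g_n)$ for some operation $\mathit{op} \in \{\min, \max, \text{convex combination}\}$ and quasi-simple functions $g_1,\ldots,g_n$ each of rank at most $k$. Since composition with the fixed, pointwise reset map commutes with min, max and convex combination, for every $\nu \in \region$ we have
\[
\freset{g}(\nu) = g(\nu_C) = \mathit{op}(g_1(\nu_C),\ldots,g_n(\nu_C)) = \mathit{op}(\freset{g_1}(\nu),\ldots,\freset{g_n}(\nu)).
\]
By the induction hypothesis each $\freset{g_i}$ is quasi-simple on $\region$, and Definition~\ref{quasi-def} guarantees that applying $\mathit{op}$ to finitely many quasi-simple functions yields a quasi-simple function; hence $\freset{g}$ is quasi-simple.

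The only point requiring mild attention beyond the base case is the bookkeeping of domains: each $g_i$ is defined on $\region_C$ while each $\freset{g_i}$ is defined on $\region$, and the convex-combination coefficients are left unchanged by composition with the reset, so membership in the quasi-simple class is preserved along the induction. No metric or non-expansiveness reasoning is needed here --- that is supplied separately by Lemma~\ref{lem:non-expansive} --- and the argument is purely structural.
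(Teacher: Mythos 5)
Your proposal is correct and follows essentially the same route as the paper's proof: an induction on $\rank{g}$ via the representative parse tree, with the only substantive content in the base case, where resetting a clock $c \in C$ collapses $e {-} \nu(c)$ to the constant $e$ while leaves with $c \notin C$ and constant leaves are unaffected. The paper likewise builds a modified parse tree $\Delta^{\mathit{mod}}_g$ replacing exactly those leaves by constants and verifies $\freset{g} = \mathit{qs}(\Delta^{\mathit{mod}}_g)$ by the same structural induction, so no comparison beyond this is needed.
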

\begin{proof}
For a quasi-simple function $f$, let $\Delta^\mathit{mod}_f$ be the modified parse tree of $\Delta_f$ where a leaf labelled with a non-constant simple function which for any $\nu \in \region_C$ returns $e{-} \nu(c)$ for some integer $e$ and clock $c \in C$ is replaced with a leaf labelled by the constant function $e$. The proof follows by showing $\freset{f} = \mathit{qs}(\Delta^\mathit{mod}_f)$ for all quasi-simple functions $f$. This proof is by induction on the rank of $f$.

If $\rank{f}=0$, then $\Delta_f$ is a leaf and there are three cases to consider.
\begin{itemize}
 \item If $\Delta_f$ is a leaf labelled with a constant simple function which for any $\nu \in \region_C$ returns $e$ for some $e \in \Nat$, then for any $\nu \in \region$ by construction:
\begin{align*}
\freset{f}(\nu) = & \; f(\nu_C) \\
= & \; e \tag{by definition of $\Delta_f$} \\
= & \; \mathit{qs}(\Delta^\mathit{mod}_f§) \tag{by construction}
\end{align*}
\item If $\Delta_f$ is a leaf labelled with a simple function which for any $\nu \in \region_C$ returns $e {-} \nu(c')$ for some $e \in \Nat$ and clock $c' \not\in C$, then for any $\nu \in \region$:
\begin{align*}
\freset{f}(\nu) = & \; f(\nu_C) \\
= & \; e{-}\nu_C(c) \tag{by definition of $\Delta_f$} \\
= & \; e{-}\nu(c) \tag{since $c \not\in C$} \\
= & \; \mathit{qs}(\Delta^\mathit{mod}_f) \tag{by construction}
\end{align*}
\item If $\Delta_f$ is a leaf labelled with a simple function which for any $\nu \in \region_C$ returns $e {-} \nu(c)$ for some $e \in \Nat$ and clock $c \in C$, then we have for any $\nu \in \region$:
\begin{align*}
\freset{f}(\nu) = & \; f(\nu_C) \\
= & \; e{-}\nu_C(c) & \tag{by definition of $\Delta_f$} \\
= & \; e \tag{since $c \in C$} \\
= & \; \mathit{qs}(\Delta^\mathit{mod}_f) \tag{by construction}
\end{align*}
\end{itemize}
For the inductive step, suppose $\rank{f} = k{+}1$ for some $k \in \Nat$ and for any quasi-simple function of rank less than or equal to $k$ the result holds.
Since $\rank{f}=k{+}1$ there must be an operation $\mathit{op}$ (either min, max and convex combination) and integer $n$ such that $f$ is obtained by taking the $\mathit{op}$ of some quasi-simple functions $f_1,\ldots, f_{n}$, each of which has rank at most $k$. Therefore for any $\nu \in \region$ by construction:
\begin{align*}
\freset{f}(\nu) = & \; f(\nu_C)\\
= & \; \mathit{op}(f_1(\nu_C), \ldots, f_n(\nu_C))\tag{definition of $f$}\\
= & \; \mathit{op}(\freset{f_1}(\nu),\ldots, \freset{f_n}(\nu)) \tag{by definition of $\freset{f_i}$ for $1 {\le} i {\le} n$}\\
= & \; \mathit{op}(\mathit{qs}(\Delta_{f_1}^\mathit{mod})(\nu),\ldots, \mathit{qs}(\Delta_{f_n}^\mathit{mod})(\nu))\tag{by the inductive hypothesis}\\
= & \; \mathit{qs}(\Delta^\mathit{mod}_{f})(\nu) \tag{by definition of $\Delta^\mathit{mod}_{g}$}
\end{align*}
which completes the proof.\qed
\end{proof}
The following technical lemma will allow us to establish that, assuming quasi-simplicity in successor states, the players' optimal behaviour is to pick delays so that boundaries of regions are reached.
\begin{lemma}\label{lem:supinf-opt}
Let $f : X {\to} \Rplus^\infty$ be a quasi-simple function. For any $x\in X$ and $t\in \Rplus$ such that $x{+}t \in X$:
\begin{itemize}
\item $\sup_{t'\le t \wedge x+t' \in X} \left\{ t' {+} f(x{+}t') \right\} = t {+} f(x{+}t)$;
\item $\inf_{t'\ge t \wedge x+t' \in X} \left\{ t' {+} f(x{+}t') \right\} = t {+} f(x{+}t)$.
\end{itemize}
\end{lemma}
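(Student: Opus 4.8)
The plan is to reduce both equalities to a single monotonicity statement. Define $g(s) = s + f(x{+}s)$ on the domain $D = \{ s \in \Rplus \mid x{+}s \in X\}$; both items assert that $g$ attains its extremum at $s = t$ over the relevant one-sided part of $D$. I would first observe that if $g$ is non-decreasing on $D$, then for the first item $\sup_{t' \le t,\, t' \in D} g(t') = g(t)$ because $t$ is the largest admissible argument, and dually for the second item $\inf_{t' \ge t,\, t' \in D} g(t') = g(t)$ because $t$ is the smallest admissible argument. Thus it suffices to prove that $g$ is non-decreasing.

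To establish monotonicity of $g$, I would invoke Lemma~\ref{lem:non-expansive}, which guarantees that the quasi-simple function $f$ is non-expansive. Take any $s_1, s_2 \in D$ with $s_1 \le s_2$. Since $x{+}s$ adds $s$ to every clock, the vector $(x{+}s_2) - (x{+}s_1)$ has all its coordinates equal to $s_2 - s_1$, whence $\norm{(x{+}s_2) - (x{+}s_1)} = s_2 - s_1$ in the max norm. Non-expansiveness then yields $\norm{f(x{+}s_2) - f(x{+}s_1)} \le s_2 - s_1$, and in particular $f(x{+}s_1) - f(x{+}s_2) \le s_2 - s_1$. Rearranging gives $s_1 + f(x{+}s_1) \le s_2 + f(x{+}s_2)$, i.e. $g(s_1) \le g(s_2)$, as required. (Note that a genuinely quasi-simple $f$ is finite-valued, being built from finite simple functions by finitely many applications of min, max and convex combination, so no subtlety arises from the value $\infty$.)

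The argument is short precisely because the substantive work --- that quasi-simplicity implies non-expansiveness --- is already discharged in Lemma~\ref{lem:non-expansive}. Consequently I do not anticipate any serious obstacle; the only points demanding mild care are (i) recognising that both the supremum and the infimum statements are two faces of the same monotonicity of $g$ along the all-ones time-elapse direction, and (ii) the norm computation $\norm{(x{+}s_2)-(x{+}s_1)} = s_2 - s_1$, which relies on the fact that elapsing time moves $x$ along the diagonal so that the max norm collapses to the scalar delay. As an alternative that avoids citing Lemma~\ref{lem:non-expansive}, one could prove the monotonicity of $g$ directly by induction on the structure of $f$: for a constant simple function $g$ has slope $1$, and for $e{-}\nu(c)$ it is constant, while the identity $s + \mathrm{op}(f_1,\dots,f_n)(x{+}s) = \mathrm{op}(s + f_1(x{+}s),\dots, s + f_n(x{+}s))$ shows that convex combination, maximum and minimum each preserve being non-decreasing.
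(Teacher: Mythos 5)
Your proposal is correct and follows essentially the same route as the paper: the paper likewise reduces both equalities to showing that $t \mapsto t + f(x{+}t)$ is non-decreasing, and derives that monotonicity from the non-expansiveness of $f$ established in Lemma~\ref{lem:non-expansive}. Your explicit norm computation $\norm{(x{+}s_2)-(x{+}s_1)} = s_2 - s_1$ along the time-elapse diagonal is exactly the step the paper leaves implicit.
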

\begin{proof}
Consider any quasi-simple function $f : X {\to} \Rplus^\infty$ and clock $x$. It suffices to show that the function $t\mapsto t {+} f(x{+}t)$ is increasing.
Now for any $t_1,t_2 \in \Rplus$ such that $t_1\le t_2$ and $x{+}t_1,x{+}t_2 \in X$, we have:
\begin{align*}
 t_2 {+} f(x{+}t_2) = & \; t_1 {+} f(x{+}t_1) {+} ((t_2{-}t_1) {+} (f(x{+}t_2) {-} f(x{+}t_1)))\\
  \ge & \; t_1 {+} f(x{+}t_1)
\end{align*}
where the inequality follows since the term $(t_2{-}t_1) {+} (f(x{+}t_2) {-} f(x{+}t_1))$ is non-negative by the non-expansiveness of $f$ (see Lemma~\ref{lem:non-expansive}).\qed
\end{proof}

\subsection{Establishing correspondence of PTGA and boundary region abstraction}
\label{sect:proof-core}

\noindent
Having introduced quasi-simple functions and their properties, we will now show how they relate to PTGAs and
how they can be utilised to finish the proof of Theorem~\ref{theorem:main-thm-red}. The proof is notationally
heavy, and to alleviate some of the technical notation we first
introduce a number of functions (and properties of these functions)
that will allow us to abbreviate some of the notation. Intuitively, these functions
are counterparts to $\UVAL$ functions that in addition to an initial state also take the first action to be taken.
\begin{definition}\label{actions-def}
Let $n \in \Nat$, $\ell \in L$ and $\region \subseteq \inv(\ell)$. For $\star \in \{ \mMin,\mMax \}$ and $(a,\region',\oper) \in \RegB{A}_\star(\ell,\region)$, let $\UVAL^{n+1}_{\RegB{\pta}}((\ell,{\cdot},\region),(a,\region',\oper)) : \CLOS{\region} {\to} \Rplus$ be the function where for any $\nu \in \CLOS{\region}$ we have $\UVAL^{n+1}_{\RegB{\pta}}((\ell,\nu,\region),(a,\region',\oper))$ equal
\[ \begin{array}{c}
 \oper_{\nu + t\in \region'}  \left\{ t 
   + \sum\limits_{(\tilde\ell,\tilde\nu,\tilde\region)\in \hS} \hp_\star(\ell,\nu,\region), (a,\region',\oper))(\tilde\ell,\tilde\nu,\tilde\region) \cdot \UVAL^{n}_{\RegB{\pta}}(\tilde\ell, \tilde\nu, \tilde\region) \right\} \, .
\end{array}
\]
Furthermore, for $\nu \in \inv(\ell)$ and $(t,a) \in A_\mMIN \cup A_\mMAX$ such that $\nu{+}t\in \bar\region$ let:
\begin{align*}
  \UVAL_{\sem{\pta}}^{n+1}((\ell,\nu),(t,a)) = & \; t + \mbox{$\sum\limits_{(C, \ell') \in \powC \times L}$} \delta[\ell, a](C, \ell') \cdot \UVAL_{\sem{\pta}}^{n}(\ell',(\nu{+}t)_C)\\
  \UVAL_{\mix}^{n+1}((\ell, \nu), (t, a), \region) = & \; t + \mbox{$\sum\limits_{(C, \ell') \in \powC \times L}$} \delta[\ell, a](C, \ell') \cdot \UVAL_{\RegB{\pta}}^{n}(\ell',(\nu{+}t)_C,\region_C) \, .
\end{align*}
\end{definition}
Intuitively, $\UVAL^{n+1}_{\RegB{\pta}}((\ell,\nu,\region),(a,\region',\oper))$ corresponds to the optimal value in $(\ell,\nu,\region)$ when the
length of the horizon is $n{+}1$ and the first action performed is fixed to be $(a,\region',\oper)$. Similarly, $\UVAL_{\sem{\pta}}^{n+1}((\ell,\nu),(t,a))$ corresponds to the optimal value in $(\ell,\nu)$ for the length of the horizon $n{+}1$ when the first action performed is $(t,a)$. The definition of $\UVAL_{\mix}^{n+1}((\ell, \nu), (t, a), \region)$ gives an auxiliary function
which combines the values of $\sem{\pta}$ and $\RegB{\pta}$. Intuitively, it corresponds to taking a fixed action in $\sem{\pta}$, and then
transferring to $\RegB{\pta}$ for $n$ more steps.

Next we show that, within a region, the values in the BRA $\RegB{\pta}$ are quasi-simple when we restrict to a finite horizon reachability objectives.
To simplify the notation, we assume that in any state
player $\mMIN$ can pick at least one action, and that, for each action $a$ player $\mMIN$ can select, there exists an action $b$ player $\mMAX$ can select that is preferred, i.e. $b{=}\winner(a,b)$, and also an action $b$ that is not preferred, i.e. $a{=}\winner(a,b)$ (in addition, there can be actions $b$ not satisfying any of the two conditions). We refer to this assumption as {\em choice freedom}.
\begin{lemma}\label{actions-lem}
Assume $\pta$ is choice-free. For any $n \in \Nat$ and $s \in S$:
\[
\UVAL_{\sem{\pta}}^{n+1}(s) \; = \inf_{(t,a)\in A_\mMIN(s)} \max \left\{ \UVAL_{\sem{\pta}}^{n}(s,(t,a)) , \sup_{(t',b)\in A_\mMAX(s) \wedge t' \le t} \UVAL_{\sem{\pta}}^{n}(s,(t',b))\right\}
\]
Furthermore, for any $n \in \Nat$ and $\RegB{s} \in \RegB{S}$ we have that $\UVAL_{\RegB{\pta}}^{n+1}(\RegB{s})$ equals\footnote{Recall that $a {\le} b$ denotes the fact that $\hwinner(a,b) {=} a$.}:
\begin{align*}
& \min_{(a,\region,\inf) \in \hA_\mMIN(\RegB{s})} \Bigg\{ \UVAL^{n+1}_{\RegB{\pta}}(\RegB{s},(a,\region,\inf))  ,   \\ & \qquad 
\max_{\substack{(b,\region',\sup) \in \hA_\mMAX(\RegB{s})\\(b,\region',\sup)\le(a,\region,\inf)}}  \UVAL^{n+1}_{\RegB{\pta}}(\RegB{s},(b,\region',\sup)), \max_{(b,\region,\inf) \in \hA_\mMAX(\RegB{s})}\! \UVAL^{n+1}_{\RegB{\pta}}(\RegB{s},(b,\region,\inf)) \Bigg\}
\end{align*}
\end{lemma}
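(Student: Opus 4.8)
The plan is to prove both equalities directly from the definitions of the value functions in Definition~\ref{bell-def} and the structure of the arenas $\sem{\pta}$ and $\RegB{\pta}$, using the choice-freedom assumption to justify restricting the players' choices to the specific forms appearing in the statement. The two equalities are conceptually similar: in each case the Bellman equation has the form $\inf_\alpha \sup_\beta \{\ldots\}$ over available actions, and I must rewrite this $\inf$--$\sup$ as an expression where $\mMax$'s response is split according to whether $\mMax$'s action is or is not preferred by $\winner$ (respectively $\hwinner$). The abbreviating functions from Definition~\ref{actions-def} let me fold the $\tau$-term and the sum over successors into a single symbol $\UVAL^{n}(s,(t,a))$, so the whole argument reduces to bookkeeping about which player's action is taken.

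First I would handle the $\sem{\pta}$ case. Fix $s \in S$ and $n$. By Definition~\ref{bell-def}, $\UVAL^{n+1}_{\sem{\pta}}(s) = \inf_{\alpha} \sup_{\beta} \{\tau(s,\alpha,\beta) + \sum_{s'} p(s,\alpha,\beta)(s')\UVAL^n(s')\}$. I would fix $\mMin$'s action $\alpha = (t,a)$ and analyse the inner $\sup_{\beta}$ over $\beta = (t',b)$. By the definition of $\winner$ in Definition~\ref{ptgsem-def}, when $t' \le t$ the action $\beta$ is preferred (recall $\mMax$ wins ties), so the bracket equals $\UVAL^n_{\sem{\pta}}(s,(t',b))$; when $t' > t$ the move $\alpha$ is taken and the bracket equals $\UVAL^n_{\sem{\pta}}(s,(t,a))$ independently of $b$. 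Hence for fixed $\alpha=(t,a)$ the inner supremum is exactly $\max\{\UVAL^n_{\sem{\pta}}(s,(t,a)),\ \sup_{(t',b):t'\le t}\UVAL^n_{\sem{\pta}}(s,(t',b))\}$, and taking the $\inf$ over $(t,a)\in A_\mMIN(s)$ yields the claimed formula. Choice freedom guarantees that for the fixed $\alpha$ there is at least one $\mMax$-action of each preference type, so no term in the $\max$ is vacuously degenerate and the rewriting is faithful.

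For the $\RegB{\pta}$ case I would argue analogously but now the delay is replaced by the pair (region, operator) and $\hwinner$ of Definition~\ref{brg-def} determines which action is taken. The key new ingredient is Lemma~\ref{lem:supinf-opt}, which (via regional quasi-simplicity, assumed available here) shows that $\mMin$'s optimal behaviour is to pick the $\inf$-operator and $\mMax$'s is to pick the $\sup$-operator; this is what lets me restrict $\mMin$'s choices to $(a,\region,\inf)$ and split $\mMax$'s responses into the two $\max$-blocks in the statement. Concretely, fixing $\mMin$'s action $(a,\region,\inf)$, the inner $\sup$ over $\hA_\mMAX(\RegB{s})$ decomposes by $\hwinner$: the $\mMax$-actions $(b,\region',\sup)$ with $(b,\region',\sup)\le(a,\region,\inf)$ are the preferred ones (their value is $\UVAL^{n+1}_{\RegB{\pta}}(\RegB{s},(b,\region',\sup))$), while otherwise $\mMin$'s action is taken; the separate $\max$ over $(b,\region,\inf)$ captures the tie-breaking cases in clause (ii) of $\hwinner$ where operators coincide. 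I expect the main obstacle to be precisely this last point: carefully verifying, against the exact definition of $\hwinner$ (conditions (i) $\region_a \rightarrow^+ \region_b$ and (ii) equal regions with $\oper_a=\inf,\oper_b=\sup$), that the three terms inside the outer $\min$ exhaust and correctly classify every $\mMax$-response, and that restricting $\mMin$ to $\inf$-operators and the preferred $\mMax$-block to $\sup$-operators loses nothing. This requires invoking Lemma~\ref{lem:supinf-opt} to collapse the genuine $\sup/\inf$ over delays within a region onto the boundary actions, and then checking that the combinatorics of region ordering match the two displayed $\max$ groupings exactly.
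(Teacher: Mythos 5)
Your proposal is correct and follows essentially the same route as the paper's (very terse) proof: the first equality comes from unfolding the Bellman equation and splitting $\mMax$'s responses by the $\winner$ function (with choice-freedom ensuring both branches of the $\max$ are non-vacuous), and the second comes from the case analysis of $\hwinner$ together with the inequality $\UVAL^{n+1}_{\RegB{\pta}}(\RegB{s},(a,\region,\inf)) \le \UVAL^{n+1}_{\RegB{\pta}}(\RegB{s},(a,\region,\sup))$, which is exactly the monotonicity fact you extract from Lemma~\ref{lem:supinf-opt} via the (inductively assumed) quasi-simplicity of $\UVAL^n_{\RegB{\pta}}$. One small caveat: $\mMax$'s ``optimal'' operator is not always $\sup$ --- the third $\max$-block exists precisely because a same-region $\inf$-action of $\mMax$ wins the tie-break where its $\sup$-counterpart would lose --- but your subsequent discussion of that block shows you have this right.
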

\begin{proof}
For $\UVAL_{\sem{\pta}}^{n+1}(s)$, the proof follows easily using Definition~\ref{opt-def}, choice-freedom, and properties of $\winner$.
For $\UVAL_{\RegB{\pta}}^{n+1}(\RegB{s})$ we use the definition of $\hwinner$ together with the fact that
$\UVAL^{n+1}_{\RegB{\pta}}(\RegB{s},(a,\region,\inf)) \le \UVAL^{n+1}_{\RegB{\pta}}(\RegB{s},(a,\region,\sup))$ for all $a\in \act$. The latter follows from Definition~\ref{brg-def}. \qed
\end{proof}
From now on, we will assume $\pta$ is choice-free. Note that this is purely a notational advantage, which will allow us to use Lemma~\ref{actions-lem}. The proofs we give can be easily extended to non-choice-free $\pta$ by omitting an appropriate part of the equations. For example, if $\act_\mMIN(s) {=} \emptyset$, then the first equation in the Lemma~\ref{actions-lem}
reduces to 
\[
\UVAL_{\sem{\pta}}^{n+1}(s) = \sup_{(t',b)\in A_\mMAX(s)} \UVAL_{\sem{\pta}}^{n}(s,(t',b)) \, .
\]
We now proceed with the following lemma which states that the $n$-step value functions on a BRA are regionally quasi-simple.
\begin{lemma}\label{lemma:pn-brg-qs}
For any $n \in \Nat$, $\ell \in L$ and $\region \in \Rr$ such that $\region \subseteq \inv(\ell)$, the function $\UVAL^{n}_{\RegB{\pta}}(\ell, {\cdot},\region)  : \CLOS{\region} {\to} \Rplus$ is quasi-simple.
\end{lemma}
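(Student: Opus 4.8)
The plan is to prove the statement by induction on $n$, using Lemma~\ref{actions-lem} to split the value at a state into finitely many per-action contributions, and then Lemmas~\ref{lemma:quasi-reset} and~\ref{lemma:quasi-next-region} to show that each such contribution is quasi-simple. For the base case $n{=}0$ (and, at every stage of the induction, for states with $\ell \in F_L$) the function $\UVAL^{0}_{\RegB{\pta}}(\ell,{\cdot},\region)$ is identically $0$ by Definition~\ref{bell-def}; a constant is a simple function and hence quasi-simple.

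For the inductive step I would assume the claim for $n$ and fix $\ell \notin F_L$ together with $\region \subseteq \inv(\ell)$ (the target case again giving the constant $0$). By Lemma~\ref{actions-lem}, $\UVAL^{n+1}_{\RegB{\pta}}(\ell,{\cdot},\region)$ is obtained by applying $\min$ and $\max$ to the finitely many per-action functions $\nu \mapsto \UVAL^{n+1}_{\RegB{\pta}}((\ell,\nu,\region),(a,\region',\oper))$; the relevant action sets are finite because $\act$, $\Rr$ and $\{\inf,\sup\}$ are. Since the quasi-simple functions are closed under finite $\min$ and $\max$ (Definition~\ref{quasi-def}), it suffices to prove that each per-action function is quasi-simple on $\CLOS{\region}$.

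So I would fix a single action $(a,\region',\oper)$ with $\region \rightarrow^* \region'$ and first isolate the ``successor'' part. Define $W : \CLOS{\region'} {\to} \Rplus$ by $W(\omega) = \sum_{(C,\ell') \in \powC \times L} \delta[\ell,a](C,\ell') \cdot \UVAL^{n}_{\RegB{\pta}}(\ell',\omega_C,\region'_C)$. By the induction hypothesis each $\UVAL^{n}_{\RegB{\pta}}(\ell',{\cdot},\region'_C)$ is quasi-simple on $\CLOS{\region'_C}$, so by Lemma~\ref{lemma:quasi-reset} each map $\omega \mapsto \UVAL^{n}_{\RegB{\pta}}(\ell',\omega_C,\region'_C)$ is quasi-simple on $\region'$; as $\delta[\ell,a]$ is a distribution, $W$ is a convex combination of quasi-simple functions and hence quasi-simple. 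Then, inspecting Definition~\ref{brg-def}, I would observe that $\hp_\star$ depends on $\nu$ only through the boundary valuation $\nu'' = \oper_{\nu+t\in\region'}(\nu{+}t)$, so that the per-action function equals $t_\nu + W(\nu{+}t_\nu)$, where $t_\nu$ is the $\oper$-delay from $\nu$ into $\region'$ and $\nu'' = \nu{+}t_\nu$.

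The crux, and the step I expect to be the main obstacle, is to show that this elapse operation has exactly the shape required by Lemma~\ref{lemma:quasi-next-region}. The key observation is that uniform time elapse leaves every diagonal constraint $c{-}c'$ unchanged, so the $\oper$-boundary of $\region'$ is crossed precisely when a single clock $c$ attains an integer value $i$, with $c$ and $i$ determined by $\region$, $\region'$ and $\oper$ and therefore uniform over all $\nu \in \CLOS{\region}$. Consequently $t_\nu = i {-} \nu(c)$, and the per-action function is exactly $\felapse{W}$ in the notation of Lemma~\ref{lemma:quasi-next-region}, which then yields quasi-simplicity on $\CLOS{\region}$. Completing the argument requires some care with the domains of the intermediate functions (moving between $\CLOS{\region}$, $\CLOS{\region'}$ and $\CLOS{\region'_C}$, noting that quasi-simplicity is a structural property of how a function is built from simple functions and is preserved under such restrictions) and with the choice-free assumption and the $\bot$ action, but these points are routine once the reset/elapse decomposition above is in place.
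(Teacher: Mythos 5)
Your proof is correct and follows essentially the same route as the paper's: induction on $n$, decomposition of $\UVAL^{n+1}_{\RegB{\pta}}(\ell,\cdot,\region)$ into finitely many per-action functions combined by $\min$/$\max$, and quasi-simplicity of each per-action function via the induction hypothesis, Lemma~\ref{lemma:quasi-reset}, closure under convex combination, and Lemma~\ref{lemma:quasi-next-region}. The only difference is one of detail: the paper compresses the reset/convex-combination/elapse step into a single sentence, whereas you make it explicit and correctly identify the key point that the $\oper$-delay into $\region'$ is determined by a single clock reaching an integer value uniformly over the region, which is exactly the shape Lemma~\ref{lemma:quasi-next-region} requires.
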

\begin{proof}
Consider any $\ell \in L$ and $\region \in \Rr$ such that $\region \subseteq \inv(\ell)$.
We proceed by induction on $n \in \Nat$. For $n{=}0$, by Definition~\ref{bell-def} we have that $\UVAL^{0}_{\RegB{\pta}}(\ell, {\cdot},\region)$ is constant and equals 0, and hence quasi-simple.

Now suppose the claim holds for $n \in \Nat$. If $\ell \in L_F$, then $\UVAL^{n+1}_{\RegB{\pta}}(\ell, {\cdot},\region)$ is constant and equals 0, and hence quasi-simple. It therefore remains to consider the case when $\ell \not\in L_F$. By Definition~\ref{brg-def} for $\star \in \{ \mMin,\mMax \}$ we have $\RegB{A}_\star(\ell,\nu,\region) = \RegB{A}_\star(\ell,\nu',\region)$ for all $v,v' \in \CLOS{\region}$, hence we use $\RegB{A}_\star(\ell,\region)$ to denote $\RegB{A}_\star(\ell,\nu,\region)$ for any $v \in \CLOS{\region}$ and $\star \in \{ \mMin,\mMax \}$.

Using induction, Lemma~\ref{lemma:quasi-reset}, Lemma~\ref{lemma:quasi-next-region} and the quasi-simplicity of a convex combination of quasi-simple functions, it follows that the function $\UVAL^{n+1}_{\RegB{\pta}}((\ell,{\cdot},\region),\alpha) : \CLOS{\region} {\to} \Rplus$ given in Definition~\ref{actions-def} is quasi-simple for any $\star \in \{ \mMin,\mMax \}$ and $\alpha \in \RegB{A}_\star(\ell,\region)$.

Now, by Definition~\ref{bell-def}, for any $\nu \in \CLOS{\region}$:
\begin{align*}
\UVAL^{n+1}_{\RegB{\pta}}&(\ell,\nu,\region) \; = \; \min_{\alpha \in \hA_\mMIN(\ell,\region)} \max_{\beta \in \hA_\mMAX(\ell,\region)} \Bigg\{ \htau((\ell,\nu,\region), \alpha, \beta)  \\
& \hspace{3.1cm} + \sum_{(\tilde\ell,\tilde\nu,\tilde\region) \in \RegB{S}} \hp((\ell,\nu,\region), \alpha, \beta)(\tilde\ell,\tilde\nu,\tilde\region) \cdot \UVAL^{n}_{\RegB{\pta}}(\tilde\ell, \tilde\nu, \tilde\region) \Bigg\} \\
= &  \; \min_{\alpha \in \hA_\mMIN(\ell,\region)} \max \Bigg\{ \UVAL^{n+1}_{\RegB{\pta}}((\ell,\nu,\region),\alpha)  , \max_{\beta \in \hA_\mMAX(\ell,\region) \wedge \beta \le \alpha}\UVAL^{n+1}_{\RegB{\pta}}((\ell,\nu,\region),\beta) \Bigg\}
\end{align*}
by Definition~\ref{brg-def} and Definition~\ref{actions-def}.
Hence, $\UVAL^{n+1}_{\RegB{\pta}}(\ell,\cdot,\region) : \CLOS{\region} {\to} \Rplus$ equals an expression which takes the maxima and minima of quasi-simple functions, and therefore by Definition~\ref{quasi-def} is also quasi-simple. \qed
\end{proof}
The following lemma demonstrates that, for finite-horizon reachability-time objective, the values in the BRA and PTGA coincide.
\begin{lemma}\label{lemma:pn-qs}
For any $n \in \Nat$ and $s\in S$ we have $\UVAL^{n}_{\sem{\pta}}(s) = \UVAL^{n}_{\RegB{\pta}}(\RegB{s})$, and
hence $\UVAL^{n}_{\sem{\pta}}(\ell,\cdot) : \CLOS{\region} {\to} \Rplus$ is regionally quasi-simple for any $\ell \in L$ and $\region \in \Rr$ such that $\region \subseteq \inv(\ell)$.
\end{lemma}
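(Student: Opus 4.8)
The plan is to establish the pointwise equality $\UVAL^{n}_{\sem{\pta}}(s) {=} \UVAL^{n}_{\RegB{\pta}}(\RegB{s})$ by induction on $n$, and then read off regional quasi-simplicity as an immediate consequence of Lemma~\ref{lemma:pn-brg-qs}. The base case $n{=}0$ holds because both functions are identically $0$ by Definition~\ref{bell-def}. For the inductive step I would fix $s{=}(\ell,\nu)$ with $\RegB{s}{=}(\ell,\nu,[\nu])$ and rewrite both $\UVAL^{n+1}_{\sem{\pta}}(s)$ and $\UVAL^{n+1}_{\RegB{\pta}}(\RegB{s})$ using Lemma~\ref{actions-lem}, so that each becomes a fixed combination of $\inf$, $\sup$, $\min$ and $\max$ applied to the action-value functions of Definition~\ref{actions-def}. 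The task then reduces to comparing these action-value functions and matching the surrounding optimisation structure.

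The core step is to show, for each action $a$, each region $\region''$ with $[\nu] \rightarrow^* \region''$ and each $\oper \in \{\inf,\sup\}$, that $\UVAL^{n+1}_{\RegB{\pta}}(\RegB{s},(a,\region'',\oper))$ equals the $\oper$ of $\UVAL^{n+1}_{\sem{\pta}}(s,(t,a))$ taken over the delays $t$ with $\nu{+}t$ in the open region $\region''$. First, using the induction hypothesis together with $[\nu{+}t]_C {=} [(\nu{+}t)_C]$, I would verify $\UVAL^{n+1}_{\sem{\pta}}(s,(t,a)) {=} \UVAL^{n+1}_{\mix}(s,(t,a),[\nu{+}t])$, which for $t$ in the interior of $\region''$ equals $\UVAL^{n+1}_{\mix}(s,(t,a),\region'')$. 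Next, by Lemma~\ref{lemma:pn-brg-qs} the functions $\UVAL^{n}_{\RegB{\pta}}(\ell',\cdot,\region''_C)$ are quasi-simple, so applying Lemma~\ref{lemma:quasi-reset} and forming the convex combination weighted by $\delta$ shows that $t \mapsto \UVAL^{n+1}_{\mix}(s,(t,a),\region'')$ has the form $t {+} g(\nu{+}t)$ for a quasi-simple $g$ on $\CLOS{\region''}$. Lemma~\ref{lem:supinf-opt} then gives that the $\oper$ over $\region''$ is attained at the $\oper$-boundary valuation $\nu'' {=} \oper_{\nu+t\in\region''}(\nu{+}t)$, and unfolding Definition~\ref{brg-def} and Definition~\ref{actions-def} identifies this extremal value with $\UVAL^{n+1}_{\RegB{\pta}}(\RegB{s},(a,\region'',\oper))$.

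The delicate conceptual point is that $\UVAL^{n+1}_{\RegB{\pta}}(\RegB{s},(a,\region'',\sup))$ is built from the source region $\region''$ (through $\region''_C$ and $\hp_\star$), whereas the exact PTGA value at the corresponding boundary time uses the smaller region $[\nu''] \neq \region''$. Because $\UVAL^{n}_{\sem{\pta}}$ is only regionally, not globally, non-expansive, the limit of $\UVAL^{n+1}_{\sem{\pta}}(s,(t,a))$ as $\nu{+}t$ approaches the boundary of $\region''$ from inside need not coincide with the value actually taken at that boundary point. The reconciliation is exactly that $\UVAL^{n}_{\RegB{\pta}}(\ell',\cdot,\region''_C)$, being quasi-simple and hence continuous on $\CLOS{\region''_C}$ by Lemma~\ref{lemma:pn-brg-qs}, is the continuous extension from the interior; this is why the argument must be routed through the BRA value rather than by comparing PTGA values across a region boundary, and why the BRA deliberately records the source region.

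Finally I would assemble the optimisation over actions. The continuous $\inf$/$\sup$ over delays in $\sem{\pta}$ decomposes, region by region, into the finite choices $(a,\region'',\oper)$ of $\RegB{\pta}$: $\mMin$'s infimum becomes a minimum selecting an $\inf$-boundary (or a degenerate region sitting at a boundary point), while $\mMax$'s supremum over $t' {\le} t$, as isolated in Lemma~\ref{actions-lem}, becomes a maximum over $\sup$-boundaries, with the ``moves no later'' constraint $t' {\le} t$ matching the winner condition of Definition~\ref{brg-def} through the ordering $\region_a \rightarrow^+ \region_b$ and the $\oper_a{=}\inf, \oper_b{=}\sup$ tie-break. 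I expect this bookkeeping of the competitive structure --- verifying that $\hwinner$ reproduces the $t'{\le}t$ constraint and that degenerate boundary regions are enumerated so that both attained and merely approached optima are captured --- to be the main obstacle, the quasi-simplicity and boundary-optimality ingredients being supplied directly by the earlier lemmas. The regional quasi-simplicity of $\UVAL^{n}_{\sem{\pta}}(\ell,\cdot)$ is then immediate: for $\nu$ in the interior of $\region$ we have $[\nu]{=}\region$, so $\UVAL^{n}_{\sem{\pta}}(\ell,\nu){=}\UVAL^{n}_{\RegB{\pta}}(\ell,\nu,\region)$, which is quasi-simple on $\CLOS{\region}$ by Lemma~\ref{lemma:pn-brg-qs}.
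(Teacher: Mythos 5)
Your proposal is correct and follows essentially the same route as the paper's proof: induction on $n$, rewriting both sides via Lemma~\ref{actions-lem}, decomposing the continuous delay choices region by region, using the inductive hypothesis together with Lemmas~\ref{lemma:pn-brg-qs} and~\ref{lem:supinf-opt} to push the $\inf$/$\sup$ over delays to the region boundaries via $\UVAL_{\mix}$, and then matching the resulting finite optimisation against the $\hwinner$ structure of the BRA. Your explicit remark on why the BRA must carry the source region (continuity of the quasi-simple BRA value on $\CLOS{\region''}$ versus the discontinuity of the PTGA value across the boundary) is a correct articulation of a point the paper leaves implicit.
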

\begin{proof}
Consider any $s{=}(\ell,\nu) \in S$. We proceed by induction on $n \in \Nat$. If $n{=}0$, then by Definition~\ref{bell-def} both
$\UVAL^{0}_{\sem{\pta}}(s)$ and $\UVAL^{0}_{\RegB{\pta}}(\RegB{s})$ equal 0, and hence the result holds.

Now assume that the lemma holds for some $n \in \Nat$. If $\ell \in L_F$ then 
\[
\UVAL^{n+1}_{\sem{\pta}}(s) = \UVAL^{n+1}_{\RegB{\pta}}(\RegB{s}) =0
\] 
and the result follows. It therefore remains to consider the case when $\ell \not\in L_F$. Using Lemma~\ref{actions-lem} we have:
\begin{equation}\label{eq:valnplus}
\!\!\UVAL_{\sem{\pta}}^{n+1}(s) = \!\!\inf_{(t,a)\in A_\mMIN(s)}\!\! \max \left\{  \UVAL_{\sem{\pta}}^{n+1}(s{,}(t{,}a)) , \!\!\sup_{\substack{(t',b)\in A_\mMAX(s) \\ t' \le t}}\!\! \UVAL_{\sem{\pta}}^{n+1}(s{,}(t'{,}b)) \right\} \, .
\end{equation}
Let $A_\mMAX^{\leq t}(s) = \{ (t',b) \in A_\mMAX(s) \mid t' {\le} t \}$ be the set of actions available to Player $\mMAX$ in $s$ with delay up to $t$, let $\Rr(\nu,t) = [[v],[v{+}t]]$ be the regions obtainable from $\nu$ by delaying at most $t$ time units, and $\act_\mMAX(\ell,\region) = \{ b \in \act_\mMAX \mid \region \subseteq E(\ell,b) \}$ be the set of actions of $\pta$ available when in location $\ell$ and region $\region$. It follows by Definition~\ref{ptgsem-def} that:
\begin{equation}\label{maxactions-eqn}
A_\mMAX^{\leq t}(s) = \bigcup_{\region \in \Rr(\nu,t)} \{ (t',b) \in \Rplus {\times} \act(\ell,\region) \mid  \nu {+} t' \in \region \wedge t' {\leq} t \}
\end{equation}
Furthermore, letting $\Rr(\nu) = \{ \region \in \Rr \mid \region \subseteq \inv(\ell) \wedge [\nu] \to^* \region \}$ be the set of
regions obtainable from $\nu$ by some delay and $\act_\mMIN(\ell,\region) = \{ a\in \act_\mMin \mid \region \subseteq E(\ell,a) \}$
the set of actions of player $\mMin$ available in location $\ell$ and region $\region$, again by Definition~\ref{ptgsem-def} we have:
\begin{equation}\label{minactions-eqn}
A_\mMIN(s) = \bigcup_{\region \in \Rr(\nu)} \{ (t,a) \in \Rplus {\times} \act_\mMIN(\ell,\region) \mid \nu {+} t \in \region \}
\end{equation}
Now, by Definition~\ref{actions-def}, letting $t^+_{\nu,\region'} = \sup \{t' \mid \nu{+}t' \in \region' \}$ we have:
\begin{align*}
 \lefteqn{\sup_{(t',b)\in A_\mMAX^{\leq t}} \UVAL_{\sem{\pta}}^{n+1}(s,(t',b))} \\
= & \; \sup_{(t',b)\in A_\mMAX^{\leq t}}\Bigg\{ t' + \sum_{(C, \ell') \in \powC \times L} \delta[\ell, b](C, \ell') \cdot \UVAL_{\sem{\pta}}^{n}(\ell',(\nu{+}t')_C) \Bigg\}
\\
= & \; \sup_{(t',b)\in A_\mMAX^{\leq t}} \Bigg\{ t' + \!\!\sum_{(C, \ell') \in \powC \times L}\!\! \delta[\ell, b](C, \ell') \cdot \UVAL_{\RegB{\pta}}^{n}(\ell',(\nu{+}t')_C,[(\nu{+}t')_C]) \Bigg\}
  \tag{by induction} \\
= & \; \max_{\substack{\region' \in \Rr(\nu,t) \\ b \in \act_\mMAX(\ell,\region')}}   \sup_{t' \le t \wedge \nu{+}t' \in \region'} \Bigg\{ t' {+}\!\!\!\! \sum_{(C, \ell') \in \powC \times L}\!\!\!\! \delta[\ell, b](C, \ell') {\cdot} \UVAL_{\RegB{\pta}}^{n}\!(\ell',(\nu{+}t')_C,\region'_C)\Bigg\} \tag{by (\ref{maxactions-eqn})} \\
= & \; \max_{\substack{\region' \in \Rr(\nu,t) \\ b \in \act_\mMAX(\ell,\region')}} \Bigg\{ \least{t}{\nu,\region'} +\\
 &\hspace{3cm}\sum_{(C, \ell') \in \powC \times L} \delta[\ell, b](C, \ell') \cdot \UVAL_{\RegB{\pta}}^{n}(\ell',(\nu{+}\least{t}{\nu,\region'})_C,\region'_C) \Bigg\} 
 \tag{since $\UVAL_{\RegB{\pta}}^{n}$ is quasi-simple and by Lemma~\ref{lem:supinf-opt}} \\
= & \; \max_{\substack{\region' \in \Rr(\nu,t) \\ b \in \act_\mMAX(\ell,\region')}} \UVAL_{\mix}^{n+1}(s, (\least{t}{\nu,\region'},b), \region') \tag{by Definition~\ref{actions-def}}
\end{align*}
Now substituting this into \eqref{eq:valnplus}, it follows that $\UVAL_{\sem{\pta}}^{n+1}(s)$ equals
\begin{align*}
& = \!\!\!\inf_{(t,a)\in A_\mMIN(s)} \max \left\{\UVAL_{\sem{\pta}}^{n+1}(s, (t, a)),\!\! \max_{\substack{\region' \in \Rr(\nu,t) \\ b \in \act_\mMAX(\ell,\region')}}\!\! \UVAL_{\mix}^{n+1}(s, (\least{t}{\nu,\region'}, b), \region') \right\} \\
&= \!\!\!\!\min_{\substack{\region \in \Rr(\nu) \\ a \in A_\mMIN(\ell,\region)}} \inf_{\nu + t \in \region} \max\left\{\UVAL_{\sem{\pta}}^{n+1}(s{,} (t{,} a)),\!\!\! \max_{\substack{\region' \in \Rr(\nu,t) \\ b \in \act_\mMAX(\ell,\region')}}\!\!\! \UVAL_{\mix}^{n+1}(s{,} (\least{t}{\nu{,}\region'}{,} b){,}\region')\right\} \tag{by (\ref{minactions-eqn})}
\end{align*}
For any $\region \in \Rr(\nu)$ and $(t,a) \in A_\mMIN(\ell,\region)$ the expression
\[
 \max_{\region' \in \Rr(\nu,t) \wedge b \in \act_\mMAX(\ell,\region')} \UVAL_{\mix}^{n+1}(s, (\least{t}{\nu,\region'}, b),\region')
\]
equals the maximum of
\[
 \max_{\substack{\region' \in \Rr(\nu,t) \setminus \region \\b \in \act_\mMAX(\ell,\region')}}\! \UVAL_{\mix}^{n+1}(s, (t^+_{\nu,\region'}, b),\region') \ \ \text{and}\ \max_{b \in \act_\mMAX(\ell,\region)}\! \UVAL_{\mix}^{n+1}(s, (\least{t}{\nu,\region}, b),\region)
\]
and both of these expressions decrease as $t$ decreases. Moreover, letting $t^-_{\nu,\region} = \inf \{t' \mid  \nu{+}t' \in \region \}$ and using Lemma~\ref{lem:supinf-opt}, Definition~\ref{actions-def}, and induction, we have:
\[
\inf_{\nu + t \in \region} \UVAL_{\sem{\pta}}^{n+1}(s,(t,a)) = \UVAL_{\mix}^{n+1}(s, (t^-_{\nu,\region}, a),\region)
\] 
Consequently, it follows that $\UVAL_{\sem{\pta}}^{n+1}(s)$ equals:
\begin{align*}
\min_{\substack{\region \in \Rr(\nu) \\ a \in A_\mMIN(\ell,\region)}} \max\Bigg\{\UVAL_{\mix}^{n+1}(s, (t^-_{\nu,\region}, a),\region), \max_{\substack{\region' \in \Rr(\nu,t) \setminus \region \\b \in \act_\mMAX(\ell,\region')}}  \UVAL_{\mix}^{n+1}(s, (t^+_{\nu,\region}, b),\region'), \\
 \max_{b \in \act_\mMAX(\ell,\region)}\! \UVAL_{\mix}^{n+1}(s, (t^-_{\nu,\region}, b),\region)\Bigg\}
\end{align*}
By definition of $t^-_{\nu,\region}$ and $t^+_{\nu,\region'}$ and Definition~\ref{actions-def} we have:
\begin{align*}
\UVAL_{\mix}^{n+1}(s,(t^-_{\nu,\region},a),\region)  = & \; \UVAL^{n+1}_{\RegB{\pta}}(\RegB{s}, (a, \region, \inf)) \\
\UVAL_{\mix}^{n+1}(s,(t^+_{\nu,\region},a),\region)  = & \; \UVAL^{n+1}_{\RegB{\pta}}(\RegB{s}, (a, \region, \sup))
\end{align*}
and hence, using definition of $\Rr(\nu,t)$, $\UVAL_{\sem{\pta}}^{n+1}(s)$ equals:
\begin{align*}
& \min_{(a,\region,\inf) \in \hA_\mMIN(\RegB{s})} \Bigg\{ \UVAL^{n+1}_{\RegB{\pta}}(\RegB{s},(a,\region,\inf))  ,   \\ & \qquad 
\max_{\substack{(b,\region',\sup) \in \hA_\mMAX(\RegB{s})\\(b,\region',\sup)\le(a,\region,\inf)}}  \UVAL^{n+1}_{\RegB{\pta}}(\RegB{s},(b,\region',\sup)), \max_{(b,\region,\inf) \in \hA_\mMAX(\RegB{s})}\! \UVAL^{n+1}_{\RegB{\pta}}(\RegB{s},(b,\region,\inf)) \Bigg\}
\end{align*}
which from Lemma~\ref{actions-lem} equals $\UVAL_{\RegB{\pta}}^{n+1}(\RegB{s})$, completing the proof.
\qed
\end{proof}
In the rest of this subsection we use the lemmas to prove properties of the infinite-horizon setting.
\begin{lemma}\label{lemma:lim-nonexp}
 The function $\lim_{n\rightarrow\infty} \UVAL_{\sem{\pta}}^{n}$ is regionally non-expansive.
\end{lemma}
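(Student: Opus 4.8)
The plan is to pass to the limit in the per-step non-expansiveness already available from the preceding lemmas, realising the limit as a supremum so that the extended-real bookkeeping stays clean. First I would fix a location $\ell \in L$ and a region $\region \in \Rr$ with $\region \subseteq \inv(\ell)$, take any $\nu_1,\nu_2$ in $\region$, and write $d {=} \norm{\nu_1 {-} \nu_2}$. By Lemma~\ref{lemma:pn-qs}, for every $n \in \Nat$ the function $\UVAL^n_{\sem{\pta}}(\ell,\cdot)$ restricted to $\CLOS{\region}$ is quasi-simple, hence non-expansive by Lemma~\ref{lem:non-expansive}; in particular $\UVAL^n_{\sem{\pta}}(\ell,\nu_1) \le \UVAL^n_{\sem{\pta}}(\ell,\nu_2) {+} d$ for every $n$.

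Next I would establish that the sequence $(\UVAL^n_{\sem{\pta}})_{n \in \Nat}$ is pointwise non-decreasing, so that its limit exists in $\Rplus^\infty$ and coincides with the supremum. This follows from Definition~\ref{obj-def}, which gives $\REACHPRICE^n_F(\rho) \le \REACHPRICE^{n+1}_F(\rho)$ for every play $\rho$ (a longer horizon accumulates at least as much delay before the target is hit, and exactly as much afterwards); hence $\eE^{\mu,\chi}_s(\REACHPRICE^n) \le \eE^{\mu,\chi}_s(\REACHPRICE^{n+1})$ for all strategies, and applying the monotone operations $\sup_\chi$ and then $\inf_\mu$ preserves the inequality. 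Consequently $\lim_{n\to\infty} \UVAL^n_{\sem{\pta}}(\ell,\nu) = \sup_{n} \UVAL^n_{\sem{\pta}}(\ell,\nu)$ for every $\nu$, and the limit function is well defined.

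Taking the supremum over $n$ in the displayed inequality, and using the identity $\sup_n (a_n {+} d) = (\sup_n a_n) {+} d$ for a fixed $d \in \Rplus$ (which remains valid in $\Rplus^\infty$), I would obtain
\[
\big(\lim_{n\to\infty}\UVAL^n_{\sem{\pta}}\big)(\ell,\nu_1) \le \big(\lim_{n\to\infty}\UVAL^n_{\sem{\pta}}\big)(\ell,\nu_2) {+} d .
\]
By symmetry the same bound holds with $\nu_1$ and $\nu_2$ interchanged, and the two inequalities together are precisely the statement that $\lim_{n\to\infty}\UVAL^n_{\sem{\pta}}(\ell,\cdot)$ is non-expansive on $\region$; since $\ell$ and $\region$ were arbitrary, $\lim_{n\to\infty}\UVAL^n_{\sem{\pta}}$ is regionally non-expansive.

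The only genuinely delicate point, which I would flag explicitly, is the behaviour at $\infty$: the two supremum inequalities are exactly the extended-real reformulation of $\norm{f(\nu_1){-}f(\nu_2)} \le \norm{\nu_1{-}\nu_2}$, and they automatically encode the required consistency, since if $\big(\lim_n \UVAL^n\big)(\ell,\nu_2)$ were finite while $\big(\lim_n \UVAL^n\big)(\ell,\nu_1)$ were infinite the bound would fail, so the situation ``one value infinite and the other finite within a single region'' cannot arise. Everything else is a routine passage to the limit; the substantive work has already been discharged by Lemmas~\ref{lemma:pn-qs} and~\ref{lem:non-expansive}, so the essential step here is merely recognising that non-expansiveness (unlike, say, regional quasi-simplicity) is preserved under pointwise monotone limits.
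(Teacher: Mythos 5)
Your proposal is correct and follows essentially the same route as the paper, which disposes of this lemma in one line by combining Lemma~\ref{lemma:pn-qs} with the fact that a (pointwise, here monotone) limit of non-expansive functions is non-expansive. You have simply filled in the details of that one line — the monotonicity of $\REACHPRICE^n_F$ giving the limit as a supremum, and the two symmetric inequalities handling the extended-real case — all of which check out.
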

\begin{proof}
The proof follows from Lemma~\ref{lemma:pn-qs} and the fact that a limit of non-expansive functions is a non-expansive function. \qed
\end{proof}
\begin{lemma}\label{lemma:lim-fixpoint}
The function $\lim_{n\rightarrow\infty}\UVAL_{\sem{\pta}}^{n}$ is a solution of the optimality equations $\UOpt_{\sem{\pta}}$.
\end{lemma}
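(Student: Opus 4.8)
The plan is to put $V^{*} {=} \lim_{n\rightarrow\infty}\UVAL^{n}_{\sem{\pta}}$ and verify the two clauses of Definition~\ref{opt-def} at every state. First observe that $\UVAL^{n}_{\sem{\pta}}$ is pointwise non-decreasing in $n$, since $\REACHPRICE^{n}_{F} {\le} \REACHPRICE^{n+1}_{F}$ on every play by Definition~\ref{obj-def}; hence $V^{*}$ exists in $\Rplus^\infty$ and is the pointwise supremum of the $\UVAL^{n}_{\sem{\pta}}$. For a target state $s {\in} F$ we have $\UVAL^{n}_{\sem{\pta}}(s) {=} 0$ for all $n$ by Definition~\ref{bell-def}, so $V^{*}(s) {=} 0$, as required. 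It remains to treat $s {=} (\ell,\nu)$ with $\ell {\notin} L_F$, where I would establish the fixpoint identity directly rather than by two inequalities.

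The central observation is that, although the right-hand side of $\UOpt_{\sem{\pta}}$ at $s$ is an infimum over the uncountable action set $A_\mMIN(s)$, it can be rewritten as a \emph{finite} boundary expression. The derivation in the proof of Lemma~\ref{lemma:pn-qs} never uses quasi-simplicity of the successor value beyond invoking Lemma~\ref{lem:supinf-opt}, and the proof of Lemma~\ref{lem:supinf-opt} uses only non-expansiveness of its argument. Since $V^{*}$ is regionally non-expansive by Lemma~\ref{lemma:lim-nonexp}, the reset maps are non-expansive, and convex combinations preserve non-expansiveness, the very same computation applies verbatim with $V^{*}$ substituted for $\UVAL^{n}_{\sem{\pta}}$ in the successor states: the infimum over $\mMin$-delays and the supremum over $\mMax$-delays within each region collapse, via Lemma~\ref{lem:supinf-opt}, onto region boundaries, so that the right-hand side of $\UOpt_{\sem{\pta}}$ at $V^{*}$ equals the finite $\min$--$\max$ of boundary action values formed from $V^{*}$. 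I will write $\Phi(V^{*})(\RegB{s})$ for this finite boundary expression, the obvious analogue of Definition~\ref{actions-def} and Lemma~\ref{actions-lem} with $V^{*}$ in the successor slots; the same rewriting applied to $\UVAL^{n}_{\sem{\pta}}$ is exactly Lemma~\ref{lemma:pn-qs} together with Lemma~\ref{actions-lem}, giving $\UVAL^{n+1}_{\sem{\pta}}(s) {=} \UVAL^{n+1}_{\RegB{\pta}}(\RegB{s}) {=} \Phi(\UVAL^{n}_{\sem{\pta}})(\RegB{s})$.

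With both sides expressed through $\Phi$, the limit can be pushed inside. The expression $\Phi(\cdot)(\RegB{s})$ is built only from a finite $\min$ over $\hA_\mMIN(\RegB{s})$, finite $\max$ operations over $\hA_\mMAX(\RegB{s})$, finite convex sums over the finitely many successors of each boundary action, and -- once Lemma~\ref{lem:supinf-opt} has removed the $\inf$/$\sup$ over delays -- addition of a boundary delay that does not depend on $n$. Each such operation commutes with a monotone pointwise limit, and the successor values of $\sem{\pta}$ and $\RegB{\pta}$ coincide in the limit by Lemma~\ref{lemma:pn-qs}. Hence $\lim_{n} \Phi(\UVAL^{n}_{\sem{\pta}})(\RegB{s}) {=} \Phi(V^{*})(\RegB{s})$. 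Chaining the identities gives that the right-hand side of $\UOpt_{\sem{\pta}}$ at $V^{*}$ equals $\Phi(V^{*})(\RegB{s}) {=} \lim_{n}\Phi(\UVAL^{n}_{\sem{\pta}})(\RegB{s}) {=} \lim_{n}\UVAL^{n+1}_{\sem{\pta}}(s) {=} V^{*}(s)$, which is the desired fixpoint equation, so $V^{*} \models \UOpt_{\sem{\pta}}$.

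The main obstacle is precisely the interchange of $\lim_{n}$ with the infimum over $\mMin$'s actions: over the raw semantics $\sem{\pta}$ this infimum ranges over an uncountable set of delays, and such an interchange can fail in general. Routing the computation through the boundary region abstraction is what resolves this, as it replaces the infimum by a minimum over the finite set $\hA_\mMIN(\RegB{s})$, for which the commutation is immediate; the two enabling facts are the non-expansiveness of $V^{*}$ (Lemma~\ref{lemma:lim-nonexp}) and the boundary-optimality of non-expansive functions (the proof of Lemma~\ref{lem:supinf-opt}). A minor point to verify is the treatment of $\infty$-valued states, from which $F$ is not reached with certainty: monotone convergence together with the finiteness of all the sums keeps every commutation valid in $\Rplus^\infty$.
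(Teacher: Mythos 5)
Your proposal is correct in outline but takes a genuinely different route from the paper's proof. The paper splits the fixpoint identity into two inequalities: the direction stating that $\Gamma(s)$ is at most the right-hand side is obtained from monotonicity of the Bellman operator together with the Knaster--Tarski theorem, while the hard converse direction is proved up to an arbitrary $\varepsilon$ by choosing a finite $\varepsilon/6$-net $t_1,\dots,t_m$ of delays covering all regions reachable from $\nu$, using regional non-expansiveness of $\Gamma$ and of each $\UVAL^{n}_{\sem{\pta}}$ (Lemmas~\ref{lemma:lim-nonexp} and~\ref{lemma:pn-qs}) to bound the error of replacing an arbitrary delay by its approximant, and then invoking pointwise convergence at the finitely many approximants to obtain a uniform $N$. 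You instead prove the equality in one pass by collapsing both sides onto the finite boundary expression $\Phi(\cdot)(\RegB{s})$ over $\hA_\mMIN(\RegB{s})$ and $\hA_\mMAX(\RegB{s})$ and then commuting the monotone limit with the finitely many $\min$, $\max$ and convex-sum operations. Both arguments hinge on the same enabling fact, Lemma~\ref{lemma:lim-nonexp}; yours additionally rests on the observation that the derivation in Lemma~\ref{lemma:pn-qs} uses quasi-simplicity of the successor values only through Lemma~\ref{lem:supinf-opt}, whose proof needs only non-expansiveness---this is true, but you should re-run that computation for $V^{*}$ explicitly rather than asserting it applies ``verbatim'', since two points require care: (i) when the supremum over delays inside a region collapses to its upper boundary, the resulting expression evaluates the \emph{continuous extension} of $V^{*}$ restricted to that region (equivalently $\lim_{n}\UVAL^{n}_{\RegB{\pta}}$ on the closure), not $V^{*}$ at the boundary valuation itself, which belongs to a different region; and (ii) non-expansiveness over $\Rplus^\infty$ forces $V^{*}$ to be either everywhere finite or everywhere infinite on each region, which is what keeps the interchange of limit and boundary expression valid in the infinite case. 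What your approach buys is a shorter, symmetric argument that dispenses with the $\varepsilon$-net and the appeal to Knaster--Tarski; what the paper's approach buys is that it works directly on $\sem{\pta}$ without having to re-derive the boundary reduction for the limit function.
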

\begin{proof}
Let $\Gamma = \lim_{n\rightarrow\infty}\UVAL_{\sem{\pta}}^{n}$ and for any $(\ell,\nu) \in S$ and $(t,a) \in A_\mMIN((\ell,\nu))$ let: 
 \[
  \Gamma((\ell, \nu), (t,a)) = t+ \sum_{(C, \ell') \in \powC \times L} \delta[\ell, a](C, \ell') \cdot\Gamma(\ell', (\nu{+}t)_C) \, .
 \]
By Definition~\ref{opt-def} and by using similar arguments as those from the proof of Lemma~\ref{actions-lem}, to prove the result it is sufficient to demonstrate that for any $s \in S$:
\begin{equation}\label{eq:gamma-fixpoint}
 \Gamma(s) = \inf_{(t,a)\in A_\mMIN(s) } \max \left\{\Gamma(s, (t,a)), \sup_{(t',b)\in A_\mMAX(s) \wedge t' \le t} \Gamma(s, (t',b))\right\} \, .
\end{equation}
Showing the left hand side is less than or equal to the right hand side follows easily from the monotonicity of the operator defining $\UOpt_{\sem{\pta}}$, i.e. of the operator
 $\mathcal{F} : (S {\to} \Rplus^\infty) {\to} (S{\to} \Rplus^\infty)$ given by 
 \[
  \mathcal{F}(\gamma)(s) = \inf_{(t,a)\in A_\mMIN(s) } \max \left\{\gamma(s, (t,a)), \sup_{(t',b)\in A_\mMAX(s) \wedge t' \le t} \gamma(s, (t',b))\right\}
 \]
and from the Knaster-Tarski fixpoint theorem which implies that for all ordinals $o_1 \le o_2$ we have $\mathcal{F}^{o_1}(0) \le \mathcal{F}^{o_2}(0)$
where $0$ is the lowest element in the complete lattice of functions $S {\to} \Rplus^\infty$ ordered with respect to $\le$.

We complete the proof of (\ref{eq:gamma-fixpoint}) by showing the left hand side is greater than or equal to the right hand side. Consider any $s \in S$. If $\Gamma(s)$ is infinite, then the result follows. On the other hand, if $\Gamma(s)$ is finite, it is sufficient to show that for any $\varepsilon {>} 0$:
 \[
  \Gamma(s) + \varepsilon \ge \inf_{(t,a)\in A_\mMIN(s)} \max \left\{\Gamma(s, (t,a)), \sup_{(t',b)\in A_\mMAX(s) \wedge t' \le t}\Gamma(s, (t',b))\right\} \, .
\]
We begin by selecting a finite sequence $t_1,\ldots, t_m$ of positive reals such that for any possible delay $t$ in $s{=}(\ell, \nu)$ there exists $t_i$ (denoted $\appr(t))$ with
 $[\nu{+}t] = [\nu{+}\appr(t)]$ and
 $|t {-} \appr(t)| \le \varepsilon/6$.
 Note that such a sequence $t_1,\ldots, t_m$ can always be selected as the clock values are bounded. By construction we have for any $t\in \Rplus$ and $C \subseteq \clocks$:
 \begin{equation}\label{eps-eqn}
 |(\nu{+}t) {-} (\nu{+}\appr(t))| \leq \varepsilon/6 \quad \mbox{and} \quad |(\nu{+}t)_C {-} (\nu{+}\appr(t))_C | \leq \varepsilon/6
 \end{equation}
 Now for any $(t,a)\in A_\mMIN\cup A_\mMAX$ we have:
\begin{align}
\lefteqn{\!\!\!\!\!\!\big|\Gamma(s, (t,a)) {-} \Gamma(s, (\appr(t),a))\big|} \nonumber \\
\le& \;  |t{-}\appr(t)| + \sum_{(C, \ell') \in \powC \times L} \delta[\ell, a](C, \ell') \cdot
     |\Gamma(\ell', (\nu{+}t)_C) {-} \Gamma(\ell', (\nu{+}\appr(t))_C)| \nonumber \\
\le& \;  |t{-}\appr(t)| + \sum_{(C, \ell') \in \powC \times L} \delta[\ell, a](C, \ell') \cdot
     | (\nu{+}t)_C {-} (\nu{+}\appr(t))_C| \tag{since $\Gamma$ is regionally non-expansive (Lemma~\ref{lemma:lim-nonexp})} \nonumber \\
\le& \;  |t{-}\appr(t)| + \sum_{(C, \ell') \in \powC \times L} \delta[\ell, a](C, \ell') \cdot
     \varepsilon/6  \tag{by (\ref{eps-eqn})} \nonumber \\
\le& \;  |t{-}\appr(t)| + \varepsilon/6 \tag{since $\delta[\ell, a](C, \ell')$ is a distribution} \nonumber \\
\le& \; \varepsilon/6 + \varepsilon/6 \tag{by construction of $\appr(t)$} \nonumber \\
= & \;  \varepsilon/3 \, . \label{eps1-eqn}
 \end{align}
By similar arguments (using Lemma~\ref{lem:non-expansive} and Lemma~\ref{lemma:pn-qs}) we can show that for any $n \in \Nat$:
  \begin{equation}\label{eps3-eqn}
  \left|\UVAL_{\sem{\pta}}^{n}(s,(t,a)) {-} \UVAL_{\sem{\pta}}^{n}(s,(\appr(t),a))\right| \leq \varepsilon/3
 \end{equation}
Since $\Gamma = \lim_{n\rightarrow\infty}\UVAL_{\sem{\pta}}^{n}$, for any $1 {\le} i {\le} m$ there exists $N_i \in \Nat$ such that:
\[  
\left|\Gamma(\ell', (\nu{+}t_i)_C) {-} \UVAL_{\sem{\pta}}^{n}(\ell', (\nu{+}t_i)_C)\right|\le \varepsilon/3
\]
for all $\ell' \in L$, $C \subseteq \clocks$ and $n {\geq} N_i$. Setting $N = 1 {+} \max_{1\le i \le m} N_i$, it follows that:
\begin{equation}\label{eps2-eqn}
\left|\Gamma(s,(\appr(t),a)) {-} \UVAL_{\sem{\pta}}^{n}(s,(\appr(t),a))\right|\le \varepsilon/3
\end{equation}
for all $(t,a)\in A_\mMIN\cup A_\mMAX$ and $n {\geq} N$. Now using (\ref{eps1-eqn}) we have for any $n {\geq} N$:
\begin{align}
 \Gamma(s, (t,a)) \le & \; \Gamma(s, (\appr(t),a)) + \varepsilon/3 \nonumber \\
  \le & \; \UVAL_{\sem{\pta}}^{n}(s,(\appr(t),a)) + \varepsilon/3 + \varepsilon/3 \tag{by (\ref{eps2-eqn})} \nonumber \\
  \le & \; \UVAL_{\sem{\pta}}^{n}(s,(t,a)) + \varepsilon/3 +  2\varepsilon/3 \tag{by (\ref{eps3-eqn})} \nonumber \\
  = & \; \UVAL_{\sem{\pta}}^{n}(s,(t,a)) + \varepsilon \label{eps4-eqn}
\end{align}
Finally, for any $n {\geq} N$ we have: 
\begin{align*}
\lefteqn{\inf_{(t,a) \in A_\mMIN(s)} \max \left\{\Gamma(s, (t,a)), \sup_{(t',b) \in A_\mMAX(s) \wedge t' \leq t} \Gamma(s, (t',b))\right\}}\\
\le & \; \inf_{(t,a) \in A_\mMIN(s)} \max \left\{\UVAL_{\sem{\pta}}^{n}(s,(t,a)) + \varepsilon, \sup_{\substack{(t',b) \in A_\mMAX(s)\\  t' \leq t}} \left\{\UVAL_{\sem{\pta}}^{n}(s,(t',b)) + \varepsilon \right\} \right\} \tag{by (\ref{eps4-eqn})}\\
= & \; \inf_{(t,a) \in A_\mMIN(s)} \max \left\{\UVAL_{\sem{\pta}}^{n}(s,(t,a)) , \sup_{(t',b) \in A_\mMAX(s) \wedge t' \leq t} \UVAL_{\sem{\pta}}^{n}(s,(t',b)) \right\} +  \varepsilon \tag{rearranging} \\
= & \;  \UVAL_{\sem{\pta}}^{n}(s) + \varepsilon \tag{by Lemma~\ref{actions-lem}} \\
\le & \; \Gamma (s) + \varepsilon &\tag{since $\UVAL_{\sem{\pta}}^{m}(s)\le\UVAL_{\sem{\pta}}^{m+1}$ for all $m$}
\end{align*}
which completes the proof. \qed
\end{proof}
We are now a few steps away from concluding the proof of the main result of the paper.
\begin{theorem}\label{thm:lim-is-val}
 $\UVAL_{\sem{\pta}}=\lim_{n\rightarrow \infty}\UVAL_{\sem{\pta}}^{n}$.
\end{theorem}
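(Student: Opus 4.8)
The plan is to establish the two inequalities $\UVAL_{\sem{\pta}} \ge \lim_{n\rightarrow\infty}\UVAL_{\sem{\pta}}^{n}$ and $\UVAL_{\sem{\pta}} \le \lim_{n\rightarrow\infty}\UVAL_{\sem{\pta}}^{n}$ separately, and then combine them. First I would note that the limit on the right-hand side is well defined (as a possibly infinite supremum), since the sequence $(\UVAL_{\sem{\pta}}^{n})_n$ is pointwise non-decreasing: this follows from $\REACHPRICE^{n}_{F}(\rho) \le \REACHPRICE^{n+1}_{F}(\rho)$ for every play $\rho$, which is immediate from Definition~\ref{obj-def}. This monotonicity is already implicitly used at the end of the proof of Lemma~\ref{lemma:lim-fixpoint}.

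The lower bound $\UVAL_{\sem{\pta}} \ge \lim_{n\rightarrow\infty}\UVAL_{\sem{\pta}}^{n}$ is exactly the statement of Lemma~\ref{lemma:ih-at-least-fh}, so nothing further is required for that direction.

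For the upper bound, the idea is to invoke Lemma~\ref{lemma:value-under-fixpoint}, which asserts that any $V \models \UOpt_{\sem{\pta}}$ satisfies $\UVAL_{\sem{\pta}} \le V$. By Lemma~\ref{lemma:lim-fixpoint}, the function $\lim_{n\rightarrow\infty}\UVAL_{\sem{\pta}}^{n}$ is itself a solution of the optimality equations $\UOpt_{\sem{\pta}}$. Hence, instantiating Lemma~\ref{lemma:value-under-fixpoint} at $V = \lim_{n\rightarrow\infty}\UVAL_{\sem{\pta}}^{n}$ yields $\UVAL_{\sem{\pta}} \le \lim_{n\rightarrow\infty}\UVAL_{\sem{\pta}}^{n}$. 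Combining this with the lower bound gives the desired equality.

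I expect essentially no obstacle at the level of this theorem: all of the substantive work has already been discharged. The fixpoint property used in the upper-bound step is the content of Lemma~\ref{lemma:lim-fixpoint}, whose proof relies on regional non-expansiveness of the limit (via Lemma~\ref{lem:non-expansive} and Lemma~\ref{lemma:pn-qs}) together with the Knaster--Tarski monotonicity argument; and the inequality $\UVAL_{\sem{\pta}} \le V$ for a generic solution $V$ is Lemma~\ref{lemma:value-under-fixpoint}, established through the $\varepsilon$-optimal strategy estimate~\eqref{opt-eqn}. The only point worth care is that Lemma~\ref{lemma:value-under-fixpoint} is stated for an \emph{arbitrary} solution of $\UOpt_{\sem{\pta}}$, so it applies directly to $\lim_{n\rightarrow\infty}\UVAL_{\sem{\pta}}^{n}$ as soon as Lemma~\ref{lemma:lim-fixpoint} has certified that this limit is a fixpoint.
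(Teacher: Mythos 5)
Your proposal is correct and follows exactly the paper's own argument: the lower bound is Lemma~\ref{lemma:ih-at-least-fh}, and the upper bound comes from combining Lemma~\ref{lemma:lim-fixpoint} (the limit solves $\UOpt_{\sem{\pta}}$) with Lemma~\ref{lemma:value-under-fixpoint} instantiated at that limit. The brief remark on monotonicity of $(\UVAL_{\sem{\pta}}^{n})_n$ is a harmless addition that the paper leaves implicit.
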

\begin{proof}
Using Lemma~\ref{lemma:ih-at-least-fh} it follows that $\UVAL_{\sem{\pta}}\ge\lim_{n\rightarrow \infty}\UVAL_{\sem{\pta}}^{n}$. On the other hand, Lemma~\ref{lemma:lim-fixpoint} states that $\lim_{n\rightarrow \infty}\UVAL_{\sem{\pta}}^{n}$ is a solution of the equations $\UOpt_{\sem{\pta}}$ and Lemma~\ref{lemma:value-under-fixpoint} states that $\UVAL_{\sem{\pta}} \le V$ for any solution $V$ of the equations $\UOpt_{\sem{\pta}}$.
Therefore, we have $\UVAL_{\sem{\pta}}\le\lim_{n\rightarrow \infty}\UVAL_{\sem{\pta}}^{n}$, which completes the proof. \qed
\end{proof}
The above theorem together with Lemma~\ref{lemma:pn-qs} tells us that, to compute $\UVAL_{\sem{\pta}}(s)$, it is sufficient to compute
$\lim_{n\rightarrow \infty}\UVAL_{\RegB{\pta}}^{n}(s)$, which is equal to $\UVAL_{\RegB{\pta}}(s)$ using results similar to~\cite{CH08,Con93}.
This completes the proof of Theorem~\ref{theorem:main-thm-red}.

\section{Conclusions}

\noindent
In this paper we introduced the reachability-time problem for PTGAs and showed that it is decidable and in NEXPTIME $\cap$ co-NEXPTIME. Our proof relies on an analysis of step-bounded value functions, showing that they are {\em quasi-simple} and non-expansive when infinite horizon is taken. This allows us to reduce the problem to the reachability-time problem on a finite abstraction. As opposed to the preliminary version of the work presented in \cite{FKNT10a}, the reduction works for an unrestricted class of PTGAs.

Although the computational complexity of solving games on timed automata is high, UPPAAL Tiga~\cite{BC+07} is able to solve practical
reachability and safety properties for timed games by using efficient symbolic
zone-based algorithms~\cite{CJLRR09,AH+09}.  A natural future direction is to investigate the possibility of devising similar algorithms for probabilistic timed games.

On the theoretical level, we plan to study if our approach can be utilised for extensions of reachability-time objectives by considering an appropriate class of reward-based properties.

\paragraph{Acknowledgements} This work is partly supported by ERC Advanced Grant
VERIWARE and EPSRC grant EP/M023656/1. Vojt\v{e}ch Forejt is also affiliated with FI MU, Brno, Czech Republic.

\bibliographystyle{elsarticle-num} 
\bibliography{papers}
\end{document}